\pgfplotsset{compat=1.7}
\newcommand{\E}{\mathbb{E}}
\newcommand{\R}{\mathbb{R}}
\theoremstyle{plain}
\newtheorem{thm}{\protect\theoremname}
\theoremstyle{plain}
\newtheorem{claim}[thm]{\protect\claimname}
\theoremstyle{plain}
\theoremstyle{plain}
\newtheorem{lem}[thm]{\protect\lemmaname}
\theoremstyle{plain}
\newtheorem{cor}[thm]{\protect\corollaryname}
\theoremstyle{definition}
\newtheorem{defn}[thm]{\protect\definitionname}
\theoremstyle{definition}
\theoremstyle{definition}
\newtheorem{rem}{\protect\remarkname}
\theoremstyle{plain}
\providecommand{\claimname}{Claim}
\providecommand{\lemmaname}{Lemma}
\providecommand{\propositionname}{Proposition}
\providecommand{\theoremname}{Theorem}
\providecommand{\corollaryname}{Corollary}
\providecommand{\definitionname}{Definition}
\providecommand{\assumptionname}{Assumption}
\providecommand{\remarkname}{Remark}
\global\long\def\R{\mathbb{R}}
\global\long\def\1{\mathbbm{1}}
\global\long\def\poly#1{\mathrm{poly}\left(#1\right)}
\global\long\def\e{{\epsilon}}
\title{Recovering Communities in Structured Random Graphs}
\date{}
\author{Michael Kapralov \\ EPFL  \and Luca Trevisan  \\ Bocconi University \and Weronika Wrzos-Kaminska \\EPFL}
\begin{document}
\begin{titlingpage}
\maketitle
\abstract{
The problem of recovering planted community structure in random graphs has received a lot of attention in the literature on the stochastic block model, where the input is a random graph in which edges crossing between different communities appear with smaller probability than edges induced by communities. The communities themselves form a collection of vertex-disjoint sparse cuts in the expected graph, and can be recovered, often exactly, from a sample as long as a separation condition on the intra- and inter-community edge probabilities is satisfied. 

In this paper, we ask whether the presence of a large number of overlapping sparsest cuts in the expected graph still allows recovery. For example, the $d$-dimensional hypercube graph admits $d$ distinct (balanced) sparsest cuts, one for every coordinate. Can these cuts be identified given a random sample of the edges of the hypercube where each edge is present independently with some probability $p\in (0, 1)$? We show that this is the case, in a very strong sense: the sparsest balanced cut in a sample of the hypercube at rate $p=C\log d/d$ for a sufficiently large constant $C$ is $1/\text{poly}(d)$-close to a coordinate cut with high probability. This is asymptotically optimal and allows approximate recovery of all $d$ cuts simultaneously. Furthermore, for an appropriate sample of hypercube-like graphs recovery can be made {\em exact}. The proof is essentially a strong hypercube cut sparsification bound that combines a theorem of Friedgut, Kalai and Naor on boolean functions whose Fourier transform concentrates on the first level of the Fourier spectrum with Karger's cut counting argument.
}
\end{titlingpage}
\section{Introduction}
Graph clustering, or community detection, is a fundamental problem in data analysis. The input is a graph $G=(V, E)$, where a subset $C\subseteq V$ of vertices is considered a `community' if it is sparsely connected to the rest of the graph and is reasonably well-connected as an induced subgraph. The task is to recover the communities. 

Graph clustering with a planted solution has received a lot of attention in the literature. In this setting the vertex set $V$ of the graph $G$ is assumed to be partitioned into vertex disjoint clusters $C_1, C_2, \ldots, C_k$ such that the clusters induce well-connected subgraphs and are sparsely connected to the rest of the graph\cite{HLL, SBM11, Mas14, MNS14, MNS15,  BordenaveLM15, ABH16, CPS15, chiplunkar2018testing, GKLMS21}. For example, in the stochastic block model (SBM; \cite{abbe-survey}) edges of $G$ are generated independently, where an edge $\{u, v\}$ is included in the graph with higher probability if $u$ and $v$ belong to the same cluster, and lower probability otherwise. A large body of work on the stochastic block model shows that, if the edge probabilities satisfy a separation condition, the communities $C_1, C_2, \ldots, C_k$ can be recovered from a sample graph with high probability. Determining the exact recovery threshold is a fascinating information theoretic problem for which tight bounds have been obtained over the past two decades \cite{CK01, McSherry, Vu2014ASS, CX16, MNS15, ABH16, AS15}.  Most of the work on SBM has focused on the case of non-overlapping communities, with only a few works allowing for some overlap. At the same time, in the practice of graph clustering one typically does not expects to have very pronounced cluster. Instead, several clusterings of the vertex set may be consistent with the edge set of the graph. Our central question in this paper is:
\begin{center}
{\em Can highly overlapping clusterings be recovered from a sample of the underlying graph? }
\end{center}

Perhaps the most basic example of a graph with a large number of overlapping communities is the hypercube graph on $n=2^d$ vertices, where each of $d$ coordinate cuts is a sparsest cut, and defines a partition of the vertex set into two `communities', namely the two corresponding coordinate halfspaces. This setting is very different from the SBM with two communities, where the expected graph is a union of two cliques on the two clusters and a clique on the entire vertex set, and therefore the two communities are uniquely defined. Formally, the main question we ask in this paper is a structured version of the stochastic block model that allows for many communities with large overlaps:

\begin{center}
{\em Can coordinate cuts be recovered from the edge set of a subsampled hypercube?}
\end{center}

A priori it would seem plausible that cuts of sparsity comparable to the coordinate cuts may emerge in a subsampled hypercube. Intuitively, this could be a mixture of several coordinate cuts in the original cube (a similar effect is seen in rounding the SDP solution to the sparsest cut problem on the hypercube). However, we show that this is not the case:

\begin{thm}\label{thm:cube_main}
Let $Q_d$ be the $d$-dimensional hypercube, and let $Q'_d$ be obtained by including each edge with probability $p \geq C \cdot \log d / d$, where $C$ is a sufficiently large constant. There exists an algorithm with running time $2^{O(n \log n)}$
that, given $Q'_d$, recovers $d$ orthogonal balanced cuts, each with Hamming distance $O(2^d/\mathrm{poly}(d))$ to a coordinate cut, with probability at least $1 - d^{-100}$ over the subsampling.
\end{thm}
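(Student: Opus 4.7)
The plan is to establish a stability/sparsification statement of the form: with high probability over the random sample $Q_d'$, every balanced cut of $Q_d'$ whose size is at most $p \cdot 2^{d-1}$ must be at Hamming distance at most $2^d/\mathrm{poly}(d)$ from some coordinate cut. Once this is in hand, the algorithm simply enumerates all $\binom{2^d}{2^{d-1}}$ balanced bipartitions in time $2^{O(n \log n)}$, keeps those of near-minimum cut size in $Q_d'$, and greedily picks $d$ of them that lie close to distinct coordinate cuts. The resulting $d$ cuts are automatically orthogonal since two cuts within $2^d/\mathrm{poly}(d)$ of two different coordinate cuts are at Hamming distance $\Theta(2^{d-1})$ apart.

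The first ingredient is a deterministic FKN-based lower bound on cut sizes in $Q_d$. Identifying a balanced cut with $f:\{-1,1\}^d \to \{-1,1\}$, $\E f = 0$, one has $|E_{Q_d}(f,\bar f)| = 2^{d-1} \cdot I(f)$ where $I(f) = \sum_i \mathrm{Inf}_i(f)$ is the total influence. The Friedgut--Kalai--Naor theorem implies that if $f$ is at Hamming distance $\geq \eta \cdot 2^d$ from every dictator $\pm x_i$, then $\sum_{|S|\geq 2} \widehat f(S)^2 \geq c\eta$, and consequently
\[
I(f) \;=\; W^1(f) + \sum_{|S|\geq 2} |S|\,\widehat f(S)^2 \;\geq\; (1 - c\eta) + 2c\eta \;=\; 1 + c\eta,
\]
so $|E_{Q_d}(f,\bar f)| \geq (1+c\eta)\,2^{d-1}$. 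In particular, any balanced cut with cut size in $Q_d$ close to $2^{d-1}$ must be close in Hamming distance to a coordinate cut.

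The second ingredient transfers this to $Q_d'$. For a fixed $f$ with $|E_{Q_d}(f,\bar f)| = (1+c\eta)2^{d-1}$, the cut size $|E_{Q_d'}(f,\bar f)|$ is a sum of independent $\mathrm{Ber}(p)$'s, and a multiplicative Chernoff bound gives
\[
\Pr\!\left[\,|E_{Q_d'}(f,\bar f)| \leq p\cdot 2^{d-1}\,\right] \;\leq\; \exp\!\left(-\Omega\!\left(\eta^2 \log d \cdot 2^d / d\right)\right).
\]
One then needs to union-bound this over all ``bad'' balanced $f$'s, namely those lying at Hamming distance $\geq 2^d/\mathrm{poly}(d)$ from every coordinate cut. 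The naive bound that uses $2d\binom{2^d}{\eta 2^d}\approx 2^{H(\eta)2^d}$ to count such $f$'s at the $\eta$-shell of a coordinate cut is much too weak: for $\eta = 1/\mathrm{poly}(d)$ the binomial coefficient dominates the Chernoff exponent. The plan is to replace the Hamming-shell counting with Karger's random-contraction bound on the number of near-minimum cuts, stratifying bad $f$'s by their cut size in $Q_d$ (equivalently, by the FKN parameter $\eta$) in dyadic scales, and in each scale pairing Karger's polynomial count with the FKN-guaranteed excess $I(f)\geq 1+c\eta$ and the Chernoff tail above.

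The main obstacle will be this final counting-plus-union-bound step. At sampling rate $p = C\log d /d$ one is well below the classical threshold $p = \Theta(1)$ at which Karger sparsification preserves \emph{all} cuts of $Q_d$, so the argument must genuinely exploit that it only needs to control balanced cuts far from every coordinate cut. This is exactly where the interplay between Karger's counting of near-min cuts and FKN's sharp level-one concentration has to be used: FKN supplies a Chernoff exponent proportional to the Hamming distance, while Karger's random contraction must be leveraged (likely on a contracted or reweighted version of $Q_d$ that collapses the already-understood near-coordinate regime) to replace the prohibitive $2^{H(\eta)2^d}$ Hamming-ball volume by a bound polynomial in $n$. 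Pushing this through all dyadic scales of $\eta$ down to $2^d/\mathrm{poly}(d)$ and closing the union bound at failure probability $d^{-100}$ is where I expect essentially all of the technical work to lie.
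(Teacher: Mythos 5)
Your high-level architecture (FKN stability, Chernoff, Karger counting, dyadic scales in $\eta$) matches the paper's, but the two steps you explicitly leave open are exactly where your stated bounds fail, and the fix is not a contracted or reweighted Karger argument. First, your Chernoff bound is applied to the \emph{full} sampled cut size of $A$: the deviation you need, $c\eta p 2^{d-1}$, is tiny compared to the mean $\Theta(p2^{d-1})$, so the exponent is $\Theta(\eta^2 p 2^{d-1})$ --- quadratic in $\eta$, as in the bound you wrote. Even against the strongest counting bound one could hope for (the paper ultimately gets $\exp\left(O(\eta/d)\log(d/\eta)\,2^d\right)$ candidate cuts at scale $\eta$), the ratio of your Chernoff exponent to the counting exponent is $\Theta(\eta)$ up to constants and logs, so the union bound collapses for every $\eta = 1/\mathrm{poly}(d)$ --- precisely the regime the theorem is about. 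The paper's way around this is to compare $A$ with its nearest coordinate cut $S$ and apply Chernoff only to the difference edge sets $\partial(A)\setminus\partial(S)$ and $\partial(S)\setminus\partial(A)$: these have expectation $O(\eta p 2^{d})$, so only a constant-factor deviation is required, and the exponent becomes $\Omega(\eta p 2^{d-1})$, linear in $\eta$.

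Second, both that step and the counting step hinge on a structural lemma absent from your plan: if $|\partial(A)| \le (1+\eta)2^{d-1}$ and $S$ is the nearby coordinate cut, then $|\partial(A)\triangle\partial(S)| = |\partial(A\triangle S)| \le O(\eta)2^{d-1}$; the trivial bound $d\,|A\triangle S| = O(\eta d\, 2^{d})$ is too weak. The proof exploits that every vertex has at most one incident edge crossing a coordinate cut, together with an accounting of $|\partial(A)|-|\partial(S)|$. With this in hand, Karger's theorem is applied in the hypercube itself to the sets $A\triangle S$ (min cut $d$, hence approximation parameter $\alpha = O(\eta 2^{d-1}/d)$), giving $\exp\left(O(\eta/d)\log(d/\eta)\,2^d\right)$ choices of $A\triangle S$ --- which determine $A$ given $S$ --- and this count is beaten by the linear-in-$\eta$ Chernoff exponent once the sampling constant is large. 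Without these two ingredients your union bound cannot be closed as stated; with them, the remaining pieces (dyadic scales, concentration of the sampled coordinate cuts, and assembling $d$ cuts that are \emph{exactly} orthogonal --- note your greedy selection only yields approximate orthogonality, whereas the paper enforces it as a constraint in a joint optimization) go through essentially as you sketch.
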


We say that a cut $A \subseteq V$ is {\em balanced} if $|A| = |V|/2$. The Hamming distance between two sets $A,B \subseteq V$ is given by $|A \triangle B|$, and we say $A$ and $B$ are \emph{orthogonal cuts} if $|A \triangle B| = |V|/2$. We also remind the reader of the definition of the $d$-dimensional hypercube graph:

\begin{defn}[Hypercube]
	Define the \emph{$d$-dimensional hypercube} to be the graph $Q_{d} = (V,E)$ with vertex set $V = \{0,1\}^d$, and any two vertices are connected by an edge if their Hamming distance is exactly $1$. 
 We let $n \coloneqq |V| = 2^d$ denote the number of vertices. 
\end{defn}

We note that the subsampling rate in Theorem~\ref{thm:cube_main} is such that the expected degree of a vertex is at least $C\log d=C\log\log n$, i.e.\, Theorem~\ref{thm:cube_main} allows for an exponential reduction of the average degree after sampling. The guarantee in Theorem~\ref{thm:cube_main} is tight up to $\poly d$ factors, since the expected number of isolated vertices in $Q'_d$ is at least $2^d/\poly d$. To see this, note that the degree of each vertex in $Q'_d$ is distributed as $\mathrm{Bin}(d,p)$. For $p = C \log d/d$, the probability that a given vertex is isolated is
\[
(1-p)^d = \left(1 - \frac{C \log d}{d}\right)^d \;\geq\; e^{-C \log d}\left(1 - \tfrac{(C\log d)^2}{d}\right) = \frac{1}{\poly d}.
\]
So in expectation, at least $2^d/\mathrm{poly}(d)$ vertices are isolated, and we cannot hope to classify those vertices.

\paragraph{Exact recovery.} Furthermore, we show that, similarly to SBM, exact recovery is possible for a sufficiently high degree sample, specifically, a sample where every vertex has expected degree at least $C\log n, n=2^d,$ for a sufficiently large constant $C>0$. The hypercube itself is not a good model to study this setting, as the degree in the hypercube itself is $d=\log_2 n$. We therefore study the $k$-distance hypercube, defined below:
\begin{defn}[$k$-distance hypercube]
	Define the \emph{$k$-distance hypercube} to be the graph $Q_{d,k} = (V,E)$ with vertex set $V = \{0,1\}^d$, and any two vertices are connected by an edge if their Hamming distance is exactly $k$. 
\end{defn}
When $k$ is odd, the graph $Q_{d,k}$ is connected. When $k$ is even, $Q_{d,k}$ splits into two connected components, corresponding to the vertices of even and odd Hamming weight, respectively: 
\[
Q_d^E \coloneqq \{ x \in \{0,1\}^d : |x| \equiv 0 \pmod{2} \}, 
\qquad 
Q_d^O \coloneqq \{ x \in \{0,1\}^d : |x| \equiv 1 \pmod{2} \},
\]
where \(|x|\) denotes the Hamming weight of \(x\). 

We show that if the sampling rate is such that a given vertex has near-logarithmic expected degree, exact recovery is possible:

\begin{restatable}{thm}{thmthree}\label{thm:general_main}
Let $G=(V,E)$ be a connected component of the $d$-dimensional $k$-distance hypercube $Q_{d,k}$. Let $G' = (V,E')$ be obtained by including each edge with probability $p \geq C \cdot \log d / d^{k-1}$, where $C$ is a sufficiently large constant. There exists an algorithm with running time $2^{O(n \log n)}$ that, given $G'$, exactly recovers the $d$ coordinate cuts, with probability $1 - d^{-100}$ over the subsampling.
\end{restatable}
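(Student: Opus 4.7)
The plan is to combine two ingredients: (i) an analog of Theorem~\ref{thm:cube_main} adapted to the $k$-distance hypercube, yielding approximate recovery of the coordinate cuts, and (ii) a local cleanup step that boosts approximate recovery to exact recovery using the fact that each vertex of $G'$ has $\omega(\log n)$ expected degree.

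\textbf{Step 1 (Approximate recovery).} First, I would establish an analog of Theorem~\ref{thm:cube_main} for $Q_{d,k}$: with high probability, the sparsest balanced cut in $G'$ lies within Hamming distance $O(n/\poly{d})$ of a coordinate cut, and by iteratively finding orthogonal sparse balanced cuts one obtains $d$ orthogonal balanced cuts $\tilde{S}_1,\ldots,\tilde{S}_d$ each within Hamming distance $O(n/\poly{d})$ of a distinct coordinate cut (up to a permutation of the coordinates). This requires adapting the Friedgut--Kalai--Naor plus Karger cut-counting strategy outlined in the excerpt. The graph $Q_{d,k}$ is still a Cayley graph on $\{0,1\}^d$ with generators of Hamming weight $k$, so its spectrum is described by Krawtchouk polynomials and the FKN-type statement should still isolate dictators as the balanced cuts of minimum expansion. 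The subsampling rate $p\geq C\log d/d^{k-1}$ is chosen so that each ``direction'' contributes $\Theta(\log d)$ edges to a typical vertex's neighborhood in $G'$, matching the scaling of the $k=1$ case.

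\textbf{Step 2 (Exact recovery via local cleanup).} The approximate cuts from Step~1 define a labeling $\tilde{x}:V\to\{0,1\}^d$ that agrees with the true coordinates on all but $O(n/\poly{d})$ vertices (modulo a permutation of coordinates and bit-flips, absorbed into the output). For each vertex $v$ I would use $v$'s observed neighborhood $N_{G'}(v)$ to determine its true label by a local vote: since every edge $\{u,v\}$ of $Q_{d,k}$ satisfies $|x(u)\oplus x(v)|=k$, I output $\hat{x}(v)$ as the $y\in\{0,1\}^d$ maximizing $|\{u\in N_{G'}(v):|\tilde{x}(u)\oplus y|=k\}|$. Because $|N_{G'}(v)|=\Theta(d\log d)=\omega(\log n)$ with high probability and $\tilde{x}$ is correct on a $1-o(1)$ fraction of vertices, the true assignment beats every competitor by a linear margin in expectation; a Chernoff bound together with a union bound over all $n=2^d$ vertices delivers exact recovery with probability $1-d^{-100}$. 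The dependency between $\tilde{x}$ and the edges incident to $v$ can be broken in the standard way by splitting the sampled edges into two independent halves, using the first for Step~1 and the second for the cleanup vote.

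\textbf{Main obstacle.} The hardest component is the generalization of Theorem~\ref{thm:cube_main} to $Q_{d,k}$ in Step~1. The $k=1$ case relies on Friedgut--Kalai--Naor to pinpoint dictators via first-level Fourier concentration; for $k\geq 2$, expansion with respect to $Q_{d,k}$ corresponds to a differently weighted Fourier notion (dictated by the Krawtchouk eigenvalues), so one has to re-derive an FKN-style dictator-concentration result in this setting and re-run Karger's cut-counting at the appropriate edge density. Quantitatively bounding the expansion of the sparsest \emph{non-coordinate} balanced cut of $Q_{d,k}$ is the key challenge, since Karger's argument needs the gap between the coordinate-cut sparsity and that of any other balanced cut to be substantial in order for the union bound over balanced cuts of bounded sparsity to close.
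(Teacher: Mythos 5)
There is a genuine gap, and it is in the step you rely on for exactness. Your Step 2 cleanup assumes that because $\tilde{x}$ errs on only an $O(1/\poly d)$ fraction of vertices, each vertex $v$ sees mostly correctly labeled neighbors in the second half of the sample. That inference is valid in the SBM setting, where a vertex's potential neighbors are all $n-1$ other vertices, so a random neighborhood intersects any fixed small error set proportionally. Here it fails: the potential neighborhood of $v$ is the fixed Hamming sphere of radius $k$ around $v$, of size only $\binom{d}{k}\approx d^k$, which is vastly smaller than the allowed error set of size $2^d/\poly d$. Approximate recovery only guarantees $|A_j\triangle S_j|\le 2^d/\poly d$, and such a symmetric difference can be, for instance, a subcube of codimension $O(\log d)$; a vertex $v$ in the interior of that subcube has a $1-O(k\log d/d)$ fraction of its $Q_{d,k}$-neighbors inside it, all mislabeled in a \emph{consistent} way (a fixed bit flip), so your local vote confidently outputs the wrong label for $v$ no matter how the edges are split. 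Sample splitting removes the dependence between $\tilde x$ and the second-half edges, but it cannot prevent the error set from containing entire neighborhoods. So the boosting step does not close unless you prove a much stronger structural statement about where the Step 1 errors can lie --- which is essentially the content of the global argument you are trying to avoid.

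The paper does not use a cleanup step at all: exactness comes directly from the sparsest-cut analysis. Because every vertex has $\binom{d-1}{k-1}\approx d^{k-1}$ edges crossing each coordinate cut and $p\, d^{k-1}\gtrsim \log d$, the Chernoff-plus-Karger trade-off (Lemmas~\ref{lemma:cutcounting_general} and~\ref{lemma:cut_concentration_general}, combined in Lemma~\ref{claim:thm3_claim}) can be run all the way down to $\epsilon_0=2^{-d}/K$; at that scale Lemma~\ref{lemma:sparse-close-to-coord_general} forces $|A\triangle S|<1$, i.e.\ $A=S$, so every non-coordinate balanced cut is strictly more expensive after sampling and the optimizer of \eqref{opt_problem_general} is exactly the coordinate cuts. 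Separately, your Step 1 --- which you correctly identify as the main obstacle --- is only sketched, and it hides a real technical issue the paper must address: for even $k$ the component structure forces the indicator's Fourier mass to also sit on the top two levels, so the standard FKN theorem does not apply and an adapted version (Lemma~\ref{lemma:fkn_even}) is needed, together with the eigenvalue gap for the Krawtchouk spectrum (Lemma~\ref{lemma:eigengap}). As it stands, the proposal neither completes the approximate-recovery core nor supplies a valid route from approximate to exact recovery.
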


\paragraph{Related work on community detection on graphs with overlapping communities.} 
A small number of works allow for overlapping communities. \cite{AS15} considers SBM with overlapping communities, and observes that this can be reduced to a standard SBM where each ``community membership profile'' is considered as a separate community. The number of profiles would be too large for our setting, making every vertex in the hypercube have a profile of its own. The work of \cite{OCCAM} considers a variant of the SBM with fractional community memberships and proves asymptotic consistency under strong assumptions, such as the existence of “pure” nodes that only belong to one community. Another line of work considers dense graphs \cite{AGSSG12}, but requires much higher densities than in our setting.

\paragraph{Related work on community detection in geometric random graphs.}
The geometric block model, introduced in \cite{AB17}, generalizes random geometric graphs in the same way that SBMs generalize Erd\H{o}s–Rényi graphs: In this model, vertices are partitioned into communities, randomly embedded in a metric space, and edges are formed as a function of distances and community memberships.
A number of variants and extensions have since been studied \cite{GBM, GBM19,GBM23,GaussianMBM,ABRS20,ABS20,GNW24,SGBM21,SGBM25}. 
These works, however, focus on detecting (non-overlapping) communities and do not provide guarantees for recovering underlying geometric structure. 

Another related direction considers testing whether an observed graph is a realization of an Erd\H{o}s-Rényi random graph or a random geometric graph~\cite{Bubeck2014TestingFH,LMSY22, BG25}. 

\paragraph{Open problem.} Our work leaves open a very exciting open problem of recovering coordinate cuts with the same precision as our results, but in polynomial time. The SoS hierarchy seems to be a promising direction.

\section{Proof of Theorem~\ref{thm:cube_main}}

In this section we prove Theorem~\ref{thm:cube_main}. The proof is algorithmic. We state the (simple) algorithm below, and then proceed to analyze it.  

\paragraph{The algorithm.} Our algorithm finds $d$ orthogonal sparse cuts in the subsampled hypercube by solving the optimization problem  (solved by direct enumeration in time $2^{O(n \log n)}$):
\begin{equation}\label{opt_problem}
 \begin{aligned}
      \min &  \sum_{i = 1}^d |E'(A_i, V \setminus A_i)|   \qquad \text{subject to}&& \\
            & |A_i| = 2^{d-1} && \forall i,   \\
     & |A_i \triangle A_j| = 2^{d-1} &&  \forall i \neq j. \\
    \end{aligned}
\end{equation}
Known results from Fourier analysis show that \emph{before subsampling}, every sparse cut in the hypercube is close (in Hamming distance) to a coordinate cut. Using this, we will prove that \emph{after subsampling}, all cuts that are far from coordinate cuts remain large, and will therefore not be a part of the optimal solution to \eqref{opt_problem}.

 We will need tools from  Fourier analysis of boolean functions, and start by setting up the necessary preliminaries.

\paragraph{Preliminaries.}  
Let $Q_d = (V,E)$ denote the $d$-dimensional hypercube with vertex set $V = \{0,1\}^d$ and edges connecting pairs of vertices that differ in exactly one coordinate. We write $E'$ for the set of edges obtained from $E$ after subsampling. For a subset $A \subseteq V$, we sometimes write $\1_A \in \{0,1\}^V$ for its indicator function, and $\partial(A) = E(A, V \setminus A)$ for its edge boundary. We say that a cut $A \subseteq V$ is \emph{balanced} if $|A|= \frac{|V|}{2}$, and we say that cuts $A,B \subseteq V$ are \emph{orthogonal} if $|A \triangle B|= \frac{|V|}{2}$. 
For $j \in [d]$, $b \in \{0,1\}$, we use the notation  $S_{j,b}$ for the coordinate cut $\{x \in \{0,1\}^d : x_j = b\}$.

For a function $f \colon \{0,1\}^d \to \R$, we define its Fourier transform $\widehat f : 2^{[d]}\rightarrow \R$ by 
$$ \widehat f(S) \coloneqq \mathop{\mathbb{E}}_{x \in \{0,1\}^d}\left[f(x) \chi_S(x)  \right]  = 2^{-d} \langle f, \chi_S\rangle, $$
where $\chi_S$ is the the \emph{Fourier character} $\chi_S(x) = (-1)^{\sum_{i \in S} x_i}$. We call $\widehat f(S)$ the \emph{Fourier coefficient} of $f$ at $S$. 

The Fourier characters form an orthogonal basis for functions on $\{0,1\}^d$, which gives the inverse formula
$$ f(x) = \sum_{S \subseteq [d]}\widehat f(S) \chi_S$$
and Parseval’s identity
$$\sum_{S \subseteq [d]} \widehat f(S)^2 = 2^{-d} \sum_{x \in \{0,1\}^d} f(x)^2.$$

Furthermore, for every $S \subseteq [d]$, the Fourier character $\chi_S$ is an eigenvector with eigenvalue $2|S|$ of the unnormalized Laplacian matrix $\mathcal L = dI - A$ of the hypercube. 

Finally, we need the fact that the singleton cuts are the minimum cuts in a hypercube. 
\begin{lem}[Folklore, see e.g., Example 4.1.3 in \cite{west2001introduction}]\label{lemma:min_cut}
The min cut of the $d$-dimensional hypercube $Q_d = (V,E)$ has size $d$, that is
  $$\min_{A\subseteq V: 1 \leq |A|\leq |V|/2} |E(A, V \setminus A)| = d.$$  
\end{lem}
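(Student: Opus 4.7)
The upper bound is immediate: for any vertex $v \in V$, the singleton cut $A = \{v\}$ has $|\partial(A)| = \deg(v) = d$ since $Q_d$ is $d$-regular, so the minimum cut has size at most $d$.

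For the matching lower bound I would proceed by induction on $d$. The base case $d = 1$ is immediate, since $Q_1 = K_2$ has a unique non-trivial cut, of size $1$. For the inductive step, split $V = V_0 \cup V_1$ according to the value of the last coordinate, so that $Q_d[V_0]$ and $Q_d[V_1]$ are each a copy of $Q_{d-1}$, while the remaining edges of $Q_d$ form a perfect matching $M$ pairing each $u \in V_0$ with the unique $u' \in V_1$ agreeing on the first $d-1$ coordinates. For a cut $A$ with $1 \leq |A| \leq 2^{d-1}$, write $A_i = A \cap V_i$ and $a_i = |A_i|$; identifying $V_0 \cong V_1$ via $M$, one has
\[
|\partial(A)| \;=\; |E_{Q_{d-1}}(A_0, V_0 \setminus A_0)| + |E_{Q_{d-1}}(A_1, V_1 \setminus A_1)| + |A_0 \triangle A_1|.
\]

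If one half is empty, say $a_1 = 0$, the matching contributes exactly $a_0 = |A|$ crossing edges. Either $A_0 = V_0$, yielding $|\partial(A)| \geq 2^{d-1} \geq d$ for $d \geq 2$, or $1 \leq a_0 \leq 2^{d-1} - 1$, in which case induction (applied to whichever of $A_0, V_0 \setminus A_0$ has size at most $2^{d-2}$, noting that both have the same edge boundary in $Q_{d-1}$) gives $|E_{Q_{d-1}}(A_0, V_0 \setminus A_0)| \geq d-1$, for a total of at least $(d-1) + 1 = d$. If instead both $a_0, a_1 \geq 1$, then $a_0 + a_1 \leq 2^{d-1}$ forces $1 \leq a_i \leq 2^{d-1} - 1$ for both $i$, so each $A_i$ is a proper non-empty subset of $V_i$ and induction (passing to the complement in $V_i$ whenever $a_i > 2^{d-2}$) gives $|E_{Q_{d-1}}(A_i, V_i \setminus A_i)| \geq d-1$ for both halves, summing to $\geq 2(d-1) \geq d$.

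The only subtle point is the invocation of the inductive hypothesis on the complement $V_i \setminus A_i$ when $a_i > 2^{d-2}$, which is legitimate because a set and its complement in $Q_{d-1}$ have the same edge boundary; all the remaining steps are bookkeeping. A shorter but heavier alternative would be to quote the edge-isoperimetric inequality on the cube, namely $|\partial(A)| \geq |A|(d - \log_2 |A|)$, which is minimized over $1 \leq |A| \leq 2^{d-1}$ at $|A| = 1$ and immediately yields the claim; the induction above avoids this heavy machinery.
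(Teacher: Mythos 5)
Your proof is correct. Note that the paper does not actually prove this lemma at all: it is quoted as folklore with a citation to West's textbook, so any valid argument here is "different from the paper" by default. Your route --- induction on $d$ via the decomposition of $Q_d$ into two copies of $Q_{d-1}$ joined by a perfect matching, with the cut splitting as the two subcube boundaries plus $|A_0\triangle A_1|$ --- is sound; the case analysis is complete (the passage to the complement when $a_i>2^{d-2}$ is exactly the right fix for applying the inductive hypothesis, and the bounds $2^{d-1}\ge d$ and $2(d-1)\ge d$ hold in the relevant range $d\ge 2$). It is worth comparing with how the paper handles the analogous statement for the $k$-distance cube, Lemma~\ref{lemma:min_cut_general}: there the lower bound is obtained spectrally, using the expansion bound of Lemma~\ref{lemma:fourier_lb} for sets with $|A|\ge \binom{d}{k}/\binom{d-1}{k-1}$ and a direct degree-counting argument for smaller sets. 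Specialized to $k=1$ that gives another proof of the present lemma, and it generalizes to $Q_{d,k}$ where your subcube-plus-matching induction does not apply verbatim; your argument, in exchange, is entirely elementary and self-contained, needing neither the eigenvalue computation nor the isoperimetric inequality you mention as an alternative.
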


\paragraph{Every sparse cut is close to a coordinate cut.} We begin by showing that every cut that is sparse in the original hypercube is indeed close to a coordinate cut. For this, we use the following standard Fourier–analytic identity, which expresses the size of a cut in terms of the Fourier coefficients of its indicator function (see e.g. Theorem 2.38 in \cite{ODonnell_2014}). We include a proof for completeness. 
\begin{lem}\label{lemma:fourier-formula_k=1}
    Let $Q_d = (V,E)$ be the $d$-dimensional hypercube, and let $A \subseteq V$. Let $f \colon V \rightarrow \{0,1\}$ denote the indicator function on $A$. Then 
    $$ |E(A, V \setminus A)| = 2^{d+1} \sum_{S \subseteq [d]} |S| \cdot \widehat f(S)^2. $$
\end{lem}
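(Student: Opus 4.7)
The natural approach is to interpret the right-hand side as a Laplacian quadratic form and then diagonalize in the Fourier basis. I would start by observing that, since $f$ is $\{0,1\}$-valued, for any edge $\{u,v\} \in E$ the quantity $(f(u)-f(v))^2$ is the indicator that exactly one of $u,v$ lies in $A$. Summing over all edges gives
\[
|E(A, V\setminus A)| = \sum_{\{u,v\} \in E} (f(u)-f(v))^2 = f^\T \mathcal{L} f,
\]
where $\mathcal{L} = dI - A$ is the unnormalized Laplacian of $Q_d$ introduced in the preliminaries.

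Next, I would plug in the Fourier expansion $f = \sum_{S \subseteq [d]} \widehat f(S)\, \chi_S$ and use the eigenvalue relation $\mathcal{L} \chi_S = 2|S| \chi_S$ stated just above the lemma. This yields
\[
f^\T \mathcal{L} f \;=\; \sum_{x \in \{0,1\}^d} \Bigl(\sum_{S} \widehat f(S)\chi_S(x)\Bigr)\Bigl(\sum_{T} 2|T| \widehat f(T)\chi_T(x)\Bigr) \;=\; \sum_{S,T} 2|T|\,\widehat f(S)\widehat f(T)\sum_{x} \chi_S(x)\chi_T(x).
\]
The final step is to use orthogonality of the characters: $\chi_S \chi_T = \chi_{S \triangle T}$, and $\sum_{x \in \{0,1\}^d}\chi_U(x)$ equals $2^d$ if $U=\emptyset$ and $0$ otherwise. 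Thus only the diagonal terms $S=T$ survive, giving
\[
f^\T \mathcal{L} f \;=\; 2^d \sum_{S \subseteq [d]} 2|S|\,\widehat f(S)^2 \;=\; 2^{d+1}\sum_{S \subseteq [d]} |S|\,\widehat f(S)^2,
\]
which combined with the first display is exactly the claimed identity.

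There is no real obstacle here; the calculation is entirely routine. The only thing worth being careful about is the normalization of the Fourier transform: the paper uses the expectation convention $\widehat f(S) = \mathbb{E}_x[f(x)\chi_S(x)]$, which produces the $2^d$ factor in the orthogonality relation $\sum_x \chi_S(x)\chi_T(x) = 2^d \mathbb{1}[S=T]$ and is responsible for the $2^{d+1}$ (rather than $2$) in front of the sum.
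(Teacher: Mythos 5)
Your proposal is correct and follows essentially the same route as the paper: write the cut size as the Laplacian quadratic form $f^\top \mathcal{L} f$, expand $f$ in the Fourier basis, and use the eigenvalue relation $\mathcal{L}\chi_S = 2|S|\chi_S$ together with orthogonality of the characters (equivalently $\|\chi_S\|_2^2 = 2^d$) to extract the diagonal terms. The normalization bookkeeping you flag is exactly where the $2^{d+1}$ factor comes from, matching the paper's computation.
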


\begin{proof}
Let $\mathcal L$ be the unnormalized Laplacian of $Q_d$. The cut size of $A$ is given by
\begin{equation}\label{eq:L_1}
 |E(A, V  \setminus A)| = \sum_{\{ x,y\} \in E}(f(x) - f(y))^2 =  f^\top \mathcal L f. 
\end{equation}
Expanding $f$ in the Fourier basis gives
$ f = \sum_{S \subseteq [d]} \widehat f(S) \chi_S. $ Since the Fourier characters are eigenvectors of $\mathcal L$ with eigenvalues $2|S|$, we obtain
\begin{equation}\label{eq:L_2}
    f^\top \mathcal L f = \left( \sum_{S\subseteq d} \widehat f(S) \chi_S\right)^\top \mathcal L  \left( \sum_{S\subseteq d} \widehat f(S) \chi_S\right) = \sum_{S \subseteq [d]} 2|S| \widehat f(S)^2\|\chi_S\|^2_2 = 2^{d+1} \sum_{S \subseteq [d]} |S| \cdot \widehat f(S)^2.  
\end{equation}
Combining Equations \eqref{eq:L_1} and \eqref{eq:L_2} gives the lemma. 
\end{proof}
From the above lemma, we will show that every sparse cut must place most of its Fourier mass on the first two levels. This is because the contribution of each Fourier coefficient to the cut size is weighted by $|S|$, so the mass on higher levels contributes to the cut size proportionally.

This will allow us to apply the Friedgut–Kalai–Naor (FKN) theorem, which states that any boolean function with nearly all of its Fourier mass on the first two levels, has to be close to the indicator function of a coordinate cut. 

\begin{thm}[FKN theorem, Theorem 1.1 in \cite{fkn}]\label{thm:fkn}
	If $f \colon \{0,1\}^d \rightarrow \{0,1\}$ is a boolean function, $\| f \|_2^2 = p$ and if $\sum_{|S|>1} \widehat{f}(S)^2 \leq \delta$ then either $p < K' \delta $ or $p > 1 -K'\delta$ or $\|f(x_1, x_2, \dots, x_d)-x_i\| \leq K \delta$ for some $i$ or $\|f(x_1, x_2, \dots, x_d)-(1-x_i)\| \leq K \delta$ for some $i$. Here, $K'$ and $K$ are absolute constants. 
\end{thm}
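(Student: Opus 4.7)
My plan is to reduce to the $\pm 1$ Boolean setting, decompose $f$ into its degree-$\le 1$ Fourier part and the remainder, and exploit the fact that $f$ is Boolean to force the square of the linear part to be close to a constant; the resulting constraint on a degree-$2$ polynomial then pins down the structure of the linear part.

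\textbf{Setup.} Let $g := 1 - 2f$, so $g \colon \{0,1\}^d \to \{-1,+1\}$ with $\widehat{g}(\emptyset) = 1-2p$, $\widehat{g}(S) = -2\widehat{f}(S)$ for $S \neq \emptyset$, and $\sum_{|S|>1}\widehat{g}(S)^2 = 4\delta =: \eta$. Decompose $g = L + H$, where $L(x) = b_0 + \sum_{i=1}^d b_i\chi_{\{i\}}(x)$ is the degree-$\leq 1$ part and $H$ collects terms of degree $\geq 2$, and set $R := \sum_{i=1}^d b_i^2$. Hypothesis and Parseval give $\|H\|_2^2 \leq \eta$ and $b_0^2 + R = 1 - \|H\|_2^2 \geq 1 - \eta$.

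\textbf{Core computation.} Pointwise $g^2 \equiv 1$, so $L^2 - 1 = -2LH - H^2$. The key is to compute $\|L^2 - 1\|_2^2$ two ways. On the Fourier side, $L^2$ has degree $\leq 2$ and a direct expansion gives
\begin{equation*}
\|L^2 - 1\|_2^2 = (b_0^2 + R - 1)^2 + 4 b_0^2 R + 4 \sum_{1 \leq i < j \leq d} b_i^2 b_j^2 .
\end{equation*}
On the pointwise side, $\|L^2 - 1\|_2^2 = \|2LH + H^2\|_2^2$, which I aim to bound by $O(\eta)$. Granting both, each of the three nonnegative terms above is $O(\eta)$; in particular the second elementary symmetric polynomial of $(b_i^2)_{i \geq 1}$ is $O(\eta)$ and $b_0^2 R = O(\eta)$.

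\textbf{Extracting the dominant coefficient.} Let $M^2 := \max_{i\geq 1} b_i^2$. Then $\sum_{i<j} b_i^2 b_j^2 \geq \tfrac12 M^2(R - M^2)$, so either $R - M^2 = O(\eta/M^2)$ or $M$ itself is small. Combined with $b_0^2 + R \geq 1 - \eta$ and $b_0^2 R = O(\eta)$, this forces either (a) $b_0^2 \geq 1 - O(\eta)$, corresponding to $f$ close to a constant (so $p < K'\delta$ or $p > 1-K'\delta$), or (b) some unique $b_{i^\ast}^2 \geq 1 - O(\eta)$ with all other coefficients of $L$ small, giving $g \approx \pm\chi_{\{i^\ast\}}$, i.e., $f$ is $O(\eta)$-close in $L^2$ to $x_{i^\ast}$ or $1-x_{i^\ast}$. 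Since $\eta = 4\delta$, the constants absorb into $K, K'$.

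\textbf{Main obstacle.} The delicate step is justifying $\|L^2 - 1\|_2^2 = O(\eta)$, equivalently controlling $\mathbb{E}[L^2 H^2]$, $\mathbb{E}[LH^3]$, and $\mathbb{E}[H^4]$. Because $H$ has no degree bound, hypercontractivity is not directly available. A workable route is to use the pointwise bound $|H| = |g - L| \leq 1 + |L|$ to get $H^2 \leq 2 + 2L^2$, reducing all three moments to $\|L\|_2^2$ and $\mathbb{E}[L^4]$; the latter is controlled because $L$ is degree $\leq 1$ and so satisfies a Khintchine-type inequality $\mathbb{E}[L^4] \leq C(\mathbb{E}[L^2])^2$. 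Carrying this through delivers the $O(\eta)$ bound, possibly with worse absolute constants than optimal — which is fine since the theorem only asserts existence of some universal $K, K'$.
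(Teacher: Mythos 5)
Your overall architecture is the right one (and is essentially FKN's): pass to $g=1-2f$, split off the degree-$\le 1$ part $L$, use $g^2\equiv 1$ to force the degree-$\le 2$ polynomial $L^2-1$ to be small in $L^2$, and then read off the dichotomy from the Fourier expansion $(b_0^2+R-1)^2+4b_0^2R+4\sum_{i<j}b_i^2b_j^2$; that last "extraction" step is fine. The genuine gap is exactly at the step you flag as delicate, and the route you propose for it does not work. Substituting the pointwise bound $H^2\le 2+2L^2$ into $\E[L^2H^2]$, $\E[LH^3]$, $\E[H^4]$ discards the only smallness you have, namely $\E[H^2]\le\eta$: it yields $\E[L^2H^2]\le 2\E[L^2]+2\E[L^4]=O(1)$ and hence only $\|L^2-1\|_2^2=O(1)$, which is vacuous. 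If instead you try to retain $\E[H^2]\le\eta$, you need to pay an $L^\infty$ (or fourth-moment) bound on the other factor, and $\|L\|_\infty$ can be of order $\sqrt d$ (e.g.\ $b_i\equiv 1/\sqrt d$), so Cauchy--Schwarz gives at best a dimension-dependent bound like $O(\sqrt{d\eta})$; Khintchine/hypercontractivity for $L$ cannot rescue this, because $H$ has no degree bound. Getting the dimension-free $O(\eta)$ here is precisely the content of FKN, which is why the paper calls the conclusion ``far from obvious.''

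The missing idea is to apply hypercontractivity not to $H$ but to the quadratic $P:=L^2-1$ itself, and to bootstrap. Pointwise $P=H(H-2g)$ with $|g|=1$, so $|P|\le |H|(|H|+2)$, and Markov applied to $\E[H^2]\le\eta$ gives a tail bound $\Pr[|P|>\alpha]\le O(\eta/\alpha^2)$; Bonami--Beckner for the degree-$\le 2$ function $P$ gives $\E[P^4]\le 81\,\E[P^2]^2$; then the truncation argument $\E[P^2]\le \alpha^2+\sqrt{\Pr[|P|>\alpha]}\,\sqrt{\E[P^4]}\le \alpha^2+O(\sqrt{\eta}/\alpha)\,\E[P^2]$ with $\alpha\asymp\sqrt{\eta}$ yields $\E[P^2]=O(\eta)$, after which your extraction step finishes the proof. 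This tail-bound-plus-hypercontractivity bootstrap is exactly the structure of the paper's proof of its even-$k$ variant (see the proof of \Cref{lemma:fkn_even}, in particular \Cref{claim:p_alpha} and the use of \Cref{thm:contractivity}, where $R_1+R_2$ plays the role of your $L^2-1$); without it, your proposal does not establish the theorem.
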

We remark that the conclusion of the FKN theorem is far from obvious. In particular, the assumption that $f$ is a boolean function is essential. For example, consider a convex combination of coordinate cuts $f(x) = \sum_i \lambda_i \chi_i(x)$. This places all of its Fourier mass on the first two levels, but is in general not close to any coordinate cut. 

A corollary of the FKN theorem (Theorem~\ref{thm:fkn}) is that every sparse cut in the hypercube must be close to a coordinate cut. For completeness, we include a proof of this known fact. 
\begin{lem}[Sparse cuts are close to coordinate cuts, Corollary 1.2 in \cite{fkn}]\label{lemma:sparse-close-to-coord}
Suppose $A \subseteq Q_d$ with $|A| \leq 2^{d-1}$. If $|E(A, V \setminus A)|  \leq (1+\epsilon)|A|$, then there exists a coordinate cut $S_{j,b}$ such that  $$|A \triangle S_{j,b}| \leq 2^{d}K \cdot \epsilon.$$ Here $K$ is an absolute constant.  
\end{lem}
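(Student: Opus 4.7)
The plan is to apply the Friedgut--Kalai--Naor theorem (Theorem~\ref{thm:fkn}) to the indicator $f = \1_A$. The bulk of the work is translating the near-minimum cut hypothesis into a Fourier-analytic upper bound on the mass of $f$ above level one, after which FKN directly yields closeness to a coordinate cut.

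Set $p := \|f\|_2^2 = 2^{-d}|A| \leq 1/2$. Parseval gives $\sum_{|S|\geq 1}\widehat f(S)^2 = p - p^2 = p(1-p)$, while Lemma~\ref{lemma:fourier-formula_k=1} together with the hypothesis yields $\sum_{|S|\geq 1}|S|\,\widehat f(S)^2 = 2^{-d-1}|E(A,V\setminus A)| \leq (1+\epsilon)p/2$. Subtracting the identity from the inequality cancels the $|S|=1$ contributions and produces
\begin{equation*}
\sum_{|S|\geq 2}(|S|-1)\widehat f(S)^2 \;\leq\; \tfrac{(1+\epsilon)p}{2} - p(1-p) \;=\; \tfrac{p(\epsilon - 1 + 2p)}{2} \;\leq\; \tfrac{p\epsilon}{2},
\end{equation*}
using $p\leq 1/2$ in the last step. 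Since $|S|-1 \geq 1$ whenever $|S|\geq 2$, this immediately bounds the total level-$\geq 2$ mass $\sum_{|S|\geq 2}\widehat f(S)^2$ by $p\epsilon/2$ as well.

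Applying Theorem~\ref{thm:fkn} with $\delta := p\epsilon/2$, one of four alternatives must hold. The ``good'' alternatives furnish some $j \in [d]$ and $b \in \{0,1\}$ with $\|f - \1_{S_{j,b}}\|_2^2 \leq K\delta$, and since both functions are $\{0,1\}$-valued this norm equals $2^{-d}|A \triangle S_{j,b}|$, yielding $|A\triangle S_{j,b}| \leq 2^d K\delta = 2^{d-1}Kp\epsilon \leq 2^{d-1}K\epsilon$, which is in fact stronger than claimed. The main obstacle is handling the two degenerate alternatives $p < K'\delta$ and $p > 1 - K'\delta$. The first forces $\epsilon > 2/K'$; the second, combined with $p \leq 1/2$, forces $K'\delta > 1/2$ and hence $\epsilon > 2/K'$ again. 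In either case $\epsilon$ exceeds an absolute positive constant, so the target bound $|A \triangle S_{j,b}| \leq 2^d K \epsilon$ holds trivially (via $|A\triangle S_{j,b}| \leq |V| = 2^d$ and any fixed choice of coordinate cut) once $K$ is enlarged to absorb that constant. Taking $K$ to be the maximum of the FKN constant and $2/K'$ covers every case and completes the argument.
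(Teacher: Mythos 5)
Your proof is correct and follows essentially the same route as the paper's: use Lemma~\ref{lemma:fourier-formula_k=1} together with Parseval to bound the Fourier mass above level one by $O(\epsilon)$, then invoke the FKN theorem (Theorem~\ref{thm:fkn}), dispatching the degenerate alternatives because they force $\epsilon=\Omega(1)$, which makes the conclusion trivial. The only nit is the final constant: to absorb the degenerate cases (where $\epsilon > 2/K'$) you need $K \geq K'/2$ rather than $2/K'$, but this is immaterial since $K$ is just a sufficiently large absolute constant.
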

\begin{proof} We start by bounding the Fourier mass above the first two levels in order to apply the FKN theorem (Theorem~\ref{thm:fkn}).
Suppose $|E(A, V \setminus A)|  \leq (1+\epsilon)|A|$. By Lemma~\ref{lemma:fourier-formula_k=1}, 
\begin{equation}\label{eq:E_bound}
  (1+\epsilon)|A| \geq 
 |E(A, V \setminus A)| = 2^{d+1} \sum_{S \subseteq [d]} |S| \cdot \widehat f(S)^2. 
 \end{equation}
On the other hand, 
\begin{equation}\label{eq:A_bound}
 |A| = \sum_{x \in V} f(x)^2 = 2^d \sum_{S \subseteq [d]} \widehat f(S)^2
 \end{equation}
 and, since $ \widehat f(\emptyset) = 2^{-d}\sum_{x \in V} f(x) = 2^{-d}|A|$, we have 

\begin{equation}\label{eq:f_emptyset_bound}
     \widehat f(\emptyset)^2 = 2^{-2d}|A|. 
 \end{equation}
Combining Equations \eqref{eq:E_bound}, \eqref{eq:A_bound} and \eqref{eq:f_emptyset_bound} gives 
\begin{align*}
    \epsilon |A| &\geq  2^{d+1} \sum_{S \subseteq [d]} |S| \cdot \widehat f(S)^2-2^d \sum_{S \subseteq [d]} \widehat f(S)^2 && \text{by \eqref{eq:E_bound} and \eqref{eq:A_bound}} \\
    & = 2^{d+1} \sum_{|S| \geq 1}(|S|-1)\widehat f(S)^2 -2^{d+1}\widehat f(\emptyset)^2+ 2^d\sum_{S \subseteq [d]}\widehat  f(S)^2 \\
    & \geq 2^{d+1} \sum_{|S| \geq 2}\widehat f(S)^2 +2|A| \left(\frac{1}{2}-\frac{|A|}{2^d} \right) && \text{by \eqref{eq:A_bound} and \eqref{eq:f_emptyset_bound}}. 
\end{align*}
The second summand is non-negative by the assumption that $|A| \leq 2^{d-1}$, which gives $\sum_{|S| \geq 2} \widehat f(S)^2 \leq \epsilon 2^{-(d+1)}|A| \leq \epsilon/4$. Therefore, by \cref{thm:fkn} we have $|A \triangle S_{j,b}| \leq 2^d K \cdot \epsilon$ or $|A| \leq K' \epsilon$. Finally, to rule out the second possibility, note that the above equation gives $ \epsilon |A| \geq 2|A|(1/2 - |A|/2^d)$, which rearranges to $|A| \geq 2^{d-1}(1-\epsilon)$.  
\end{proof}

\paragraph{Cut counting on the difference from a coordinate cut.} Next, we want to show that a cut that is far (in Hamming distance) from every coordinate cut, cannot become the sparsest cut after subsampling. 
Let $A$ be a cut with $E(A,V\setminus A)=(1+\epsilon)2^{d-1}$ and $\epsilon > 1/\poly d$. Then $|A\triangle S|\le O(\epsilon)2^{d-1}$ for some coordinate cut $S$. We want to show that with a high probability, $|E'(A,V\setminus A)| > E'|(S,V\setminus S)|$, i.e. that the coordinate cut $S$ is still sparser than $A$ after subsampling. To show this, we want to apply the Chernoff bound to show concentration for each cut, and Karger’s cut-counting theorem to union bound over all possible choices for the cut $A$. 
\begin{thm}[Karger's cut counting theorem \cite{karger}]\label{thm:karger}
	Let $\alpha \geq 1$. Then for all graphs $G$, the number of $\alpha$-approximate minimum cuts in $G$  is at most  $2^{\lceil2\alpha\rceil} {n \choose \lceil2\alpha\rceil}$ . 
\end{thm}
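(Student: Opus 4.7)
The plan is to give the classical proof via Karger's random edge-contraction algorithm. Let $c$ denote the size of the minimum cut of $G$ and set $k = \lceil 2\alpha \rceil$. The algorithm repeatedly picks a uniformly random edge of the current multigraph and contracts it (merging the two endpoints into a single supernode, deleting self-loops while keeping parallel edges), halting once exactly $k$ supernodes remain. At that point the $k$ supernodes induce a partition of $V$ into $k$ parts, and any $2$-coloring of the supernodes yields a cut of the original graph.

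For a fixed $\alpha$-approximate minimum cut $C$ with $|C| \leq \alpha c$, I would bound from below the probability that no edge crossing $C$ is ever contracted. The key invariant is that edge contraction cannot decrease the global minimum cut value: any cut of the contracted multigraph corresponds to a cut of the original graph that does not separate the endpoints of any contracted edge, so it has size at least $c$. Hence while $t$ supernodes remain, every supernode has degree at least $c$, giving at least $tc/2$ edges in total, and the probability that the next contracted edge crosses $C$ is at most $|C|/(tc/2) \leq 2\alpha/t$. Taking the product over $t = n, n-1, \ldots, k+1$ yields
\[
\Pr[\,C \text{ survives}\,] \;\geq\; \prod_{t=k+1}^{n}\left(1 - \frac{2\alpha}{t}\right) \;\geq\; \prod_{t=k+1}^{n}\frac{t-k}{t} \;=\; \frac{k!\,(n-k)!}{n!} \;=\; \binom{n}{k}^{-1},
\]
where the second inequality uses $2\alpha \leq k = \lceil 2\alpha\rceil$ and the product telescopes.

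Conditional on $C$ surviving, the two sides of $C$ project to a specific $2$-coloring of the $k$ final supernodes. If after termination the algorithm outputs a uniformly random $2$-coloring of the $k$ supernodes as its cut, then $C$ is produced with probability at least $2^{-k}\binom{n}{k}^{-1}$. The events ``output equals $C$'' for distinct $\alpha$-approximate minimum cuts are disjoint, so the sum of these lower bounds over all such cuts is at most $1$. Rearranging, the number of $\alpha$-approximate minimum cuts is at most $2^{k}\binom{n}{k} = 2^{\lceil 2\alpha\rceil}\binom{n}{\lceil 2\alpha\rceil}$, as claimed.

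The main subtle point is the minimum-cut monotonicity under contraction, which underwrites the $2\alpha/t$ per-step bound; once that invariant is in place, the rest is a short computation with a telescoping product combined with a clean union bound over the $2^{k}$ possible output $2$-colorings.
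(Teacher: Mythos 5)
Your proof is correct and is essentially the classical contraction-algorithm argument of Karger, which the paper does not reprove but simply cites; the survival-probability bound $\binom{n}{\lceil 2\alpha\rceil}^{-1}$, the random $2$-coloring of the $\lceil 2\alpha\rceil$ surviving supernodes, and the disjointness/union argument are exactly the standard route. The only implicit assumption (shared with the statement as quoted) is that the minimum cut value is positive, i.e.\ the graph is connected, which is the regime in which the paper applies the theorem.
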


We start by noting that a direct cut counting plus Chernoff bound argument does not work. Indeed, a direct Chernoff bound applied to $E(A, V \setminus A)$ and $E(S, V \setminus S)$ would require concentration within a $(1\pm\epsilon)$ factor {\bf for all $\epsilon>1/\poly d$ simultaneously}, which is too strong. Instead, we use the fact that $A$ is close to $S$, and show that the differences $E'(A,V\setminus A)\setminus E'(S,V\setminus S)$ and $E'(S,V\setminus S)\setminus E'(A,V\setminus A)$ concentrate well. In essence, we apply a Karger-style cut counting argument {\bf on the difference between $A\triangle S$}, thereby only requiring the Chernoff bound to handle a constant factor deviation.

Applying a Chernoff bound, using the trivial upper-bound 
\[
|E(A,V\setminus A)\triangle E(S,V\setminus S)| \le d|A\triangle S| \le d\cdot O(\epsilon)2^{d-1},
\] 
we can show that  $E'(A, V \setminus A)$ and $E'(S, V \setminus S)$ concentrate within a $O(1/d)$-factor with probability at least $1- e^{- \Omega(p \epsilon 2^{d-1}/d)}.$

To union bound, we must enumerate over all cuts $A$ with $|E(A,V\setminus A)|=(1+\epsilon)2^{d-1}$. A naive application of Karger’s theorem (using $\mathrm{mincut}(Q_d)=d$, by Lemma~\ref{lemma:min_cut}) shows that there are at most $2^{O(2^{d-1}/d)} {2^d \choose O(2^{d-1}/d)}$ such cuts, which is too weak of a bound.  

Instead, we observe that for a fixed coordinate cut $S$, the set $A$ is uniquely determined by $A\triangle S$, so it suffices to enumerate the possible choices for $A\triangle S$. Applying Karger's cut-counting theorem  with the trivial bound 
\(|\partial(A\triangle S)| \le d|A\triangle S| \le d\cdot O(\epsilon)2^{d-1}\) 
gives that there are at most $ 2^{O(\epsilon) 2^{d-1}}{ 2^d \choose O(\epsilon) 2^{d-1}} \approx 2^{O(\epsilon) \log 1/\epsilon2^d}$ possible choices for the set $A \triangle S$. However, this bound is still too weak. We will therefore derive a stronger bound on \(\partial(A)\triangle \partial(S)\) and \(\partial(A\triangle S)\).   

\begin{lem}\label{lem:small_cut}
Let $A \subseteq V$ be a set with $|A| \leq 2^{d-1}$ and $|\partial(A)| \leq (1+\epsilon)|A|$ and let $S$ be the coordinate cut such that $|A \triangle S| \leq K \cdot \epsilon 2^{d}$ (exists by Lemma~\ref{lemma:sparse-close-to-coord}). Then
    $$ |\partial(A) \triangle \partial(S)| =  |\partial(A \triangle S)| \leq C\cdot \epsilon 2^{d-1}.  $$
    Here $C$ is an absolute constant. 
\end{lem}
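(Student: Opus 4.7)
The plan is to first establish the combinatorial identity $\partial(A)\triangle\partial(S)=\partial(A\triangle S)$, and then bound this common quantity via a decomposition of edges by coordinate direction. The naive triangle inequality $|\partial(A)\triangle\partial(S)|\le|\partial(A)|+|\partial(S)|$ yields only $(2+O(\epsilon))2^{d-1}$, off by a factor of $1/\epsilon$, so it is essential to extract cancellation via a lower bound on the overlap $|\partial(A)\cap\partial(S)|$.

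The identity $\partial(A)\triangle\partial(S)=\partial(A\triangle S)$ follows from the parity characterization of the edge boundary: $\{u,v\}\in\partial(X)$ iff $\1_X(u)+\1_X(v)$ is odd, and since $\1_{A\triangle S}\equiv\1_A+\1_S\pmod 2$, membership in $\partial(A)\triangle\partial(S)$ is equivalent to membership in $\partial(A\triangle S)$.

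For the upper bound, write $S=S_{j,b}$ and let $E_j\subseteq E$ denote the $2^{d-1}$ edges of $Q_d$ in direction $j$; then $\partial(S)=E_j$, so by inclusion--exclusion
\[
|\partial(A)\triangle\partial(S)| \;=\; |\partial(A)| + 2^{d-1} - 2\,|\partial(A)\cap E_j|.
\]
The first term is controlled by the hypothesis: since $|A|\le|S|+|A\triangle S|\le 2^{d-1}(1+2K\epsilon)$, one has $|\partial(A)|\le(1+\epsilon)|A|\le(1+O(\epsilon))2^{d-1}$. The key step is a lower bound on the overlap $|\partial(A)\cap E_j|$, obtained by a vertex-counting injection: any edge $\{u,v\}\in E_j\setminus\partial(A)$ satisfies $u\in S$, $v\notin S$, and $\1_A(u)=\1_A(v)$, so a quick case check (both in $A$ or both out of $A$) shows that exactly one of $u,v$ lies in $A\triangle S$; since each vertex is incident to exactly one $E_j$-edge, the map sending such an edge to its ``wrong-side'' endpoint is injective into $A\triangle S$. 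This gives $|E_j\setminus\partial(A)|\le|A\triangle S|\le K\epsilon\cdot 2^d$, hence $|\partial(A)\cap E_j|\ge 2^{d-1}(1-2K\epsilon)$. Substituting,
\[
|\partial(A)\triangle\partial(S)| \;\le\; (1+O(\epsilon))2^{d-1}+2^{d-1}-2\cdot 2^{d-1}(1-2K\epsilon) \;=\; O(\epsilon)\cdot 2^{d-1},
\]
as required. The one genuinely new idea is recognizing that $\partial(S)=E_j$ lets the overlap $|\partial(A)\cap\partial(S)|$ be lower-bounded by a cheap vertex-level injection, rather than anything requiring Fourier analysis.
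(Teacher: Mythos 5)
Your proof is correct. It takes a somewhat different, and arguably cleaner, route than the paper: the paper partitions $V$ into $A\cap S$, $A^+=A\setminus S$, $A^-=S\setminus A$, $\overline S\setminus A$, proves an intermediate claim bounding $|\partial(A^+)|+|\partial(A^-)|$ by $|\partial(A)|-|\partial(S)|$ plus the crossing terms $2|E(A^+,S)|+2|E(A^-,\overline S)|$, and then bounds those crossing terms by $|A\triangle S|$ using the fact that each vertex has at most one edge crossing a coordinate cut. You instead work directly with the edge sets: inclusion--exclusion on $\partial(A)\triangle\partial(S)$ reduces everything to a lower bound on $|\partial(A)\cap\partial(S)|$, and your injection of $E_j\setminus\partial(A)$ into $A\triangle S$ (each such edge has exactly one endpoint in $A\triangle S$, and $E_j$ is a perfect matching) is the same structural fact in different clothing --- indeed $\partial(S)\setminus\partial(A)\subseteq E(A^+,S)\cup E(A^-,\overline S)$, and both arguments cancel the two $2^{d-1}$ main terms against each other, leaving only $|\partial(A)|-|\partial(S)|\le\epsilon 2^{d-1}$ plus $O(|A\triangle S|)$. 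What your version buys is brevity: you avoid the four-part bookkeeping and the intermediate claim entirely. What the paper's version buys is that its bookkeeping is stated so as to transfer verbatim to the $k$-distance hypercube (Lemma~\ref{lem:small_cut_general}), where $\partial(S)$ is no longer a matching; your injection also adapts there, becoming $\binom{d-1}{k-1}$-to-one, so nothing essential is lost. One cosmetic remark: your detour $|A|\le|S|+|A\triangle S|$ is unnecessary, since $|A|\le 2^{d-1}$ is already a hypothesis of the lemma.
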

\begin{proof}
It is straightforward to verify that for every pair of sets $T_1, T_2$, it holds that $\partial(T_1) \triangle \partial(T_2) =  \partial(T_1 \triangle T_2)$, which gives the equality $|\partial(A) \triangle \partial(S)| =  |\partial(A \triangle S)| $.  We now prove the inequality $|\partial(A \triangle S)| \leq C\cdot \epsilon 2^{d-1}$. 

Let $A^+ \coloneqq A \setminus S$ and $A^- \coloneqq S \setminus A$. Furthermore, write $\overline S  = V \setminus S$. Then $V$ is partitioned into the four sets $A \cap S$, $A^-$, $A^+$ and $\overline S \setminus A$ (see Figure~\ref{fig:SsetminusA}). 
\begin{figure}[H]
    \centering
    \includegraphics[width=0.35\textwidth]{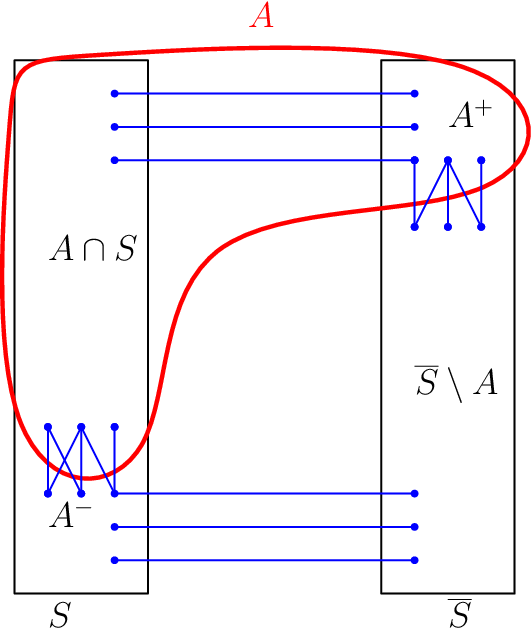}
    \caption{Illustration of the sets $A$ (red), $S$, $A^+$ and $A^-$, and the edges incident on $A^+$ and $A^-$ (blue). }
    \label{fig:SsetminusA}
\end{figure}

The high-level idea is that the edge boundaries of $A^+$ and $A^-$ consist of $E(A^-, \overline S)$ and $E(A^+,S)$, which cross the cut $S$, and $E(A^-, A \cap S)$ and $E(A^+, \overline S \setminus A)$, which cross the cut $A$. Since the former two sets cross the coordinate cut, they have size at most $|A^+| + |A^-| = O(\epsilon)2^{d-1}$. Since the latter two sets contribute to the cut $A$, they cannot be too large, as otherwise the edge-boundary of $A$ would have size significantly larger than $(1+\epsilon)2^{d-1}. $ We now prove this more formally.  
\begin{claim}\label{claim:bdry_main}
\begin{equation*}\label{eq:bdry_main}
    |\partial(A^+)| +  |\partial(A^-)|  \leq |\partial(A)| - |\partial(S)| +2|E(A^+, S)| + 2|E(A^-,\overline S)| . 
\end{equation*}
\end{claim}
\begin{proof}
Since $A$ is partitioned into $A\cap S$ and $A^+$, and $V \setminus A$ is partitioned into $A^-$ and $\overline S \setminus A$, we have 
\begin{equation*}
|\partial(A)| = |E(A \cap S, \overline S \setminus A)| + |E(A^+, A^-)| + |E(A \cap S, A^-)| + |E(A^+, \overline S \setminus A)|.
\end{equation*}
Similarly, since $S$ is partitioned into $A \cap S$ and $A^-$, and $\overline S$ is partitioned into $A^+$ and $\overline S \setminus A$, we have
\begin{equation*}
    |\partial(S)| = |E(A \cap S, \overline S\setminus A)|  + |E(A^-, A^+)|+  |E(A \cap S, A^+)| + |E(A^-, \overline S \setminus A)|.
\end{equation*}
Combining, we get
\begin{align*} 
|\partial(A) | - |\partial(S)|&  =  |E(A^-, A \cap S)| + |E(A^+, \overline S \setminus A)|- |E(A \cap S, A^+)| - |E(A^-, \overline S \setminus A)| \\
& \geq  |E(A^-, A \cap S)| + |E(A^+, \overline S \setminus A)| - |E(A^+, S)| - |E(A^-, \overline S)|. 
\end{align*}
On the other hand, we have 
$$ |\partial(A^-)|+ |\partial(A^+)| =   |E(A^-, S \cap A)| + |E(A^-, \overline S)| + |E(A^+,  S)| + |E(A^+, \overline S \setminus A)|.$$
Combining the above two equations yields the claim. 
\end{proof}

To continue, note that the edges in $E(A^+, S) $ and in $E(A^-, \overline S)$ are crossing the coordinate cut $S$. Since $S$ is a coordinate cut, every vertex can have at most one edge crossing $S$ incident on it. This gives
\begin{equation}\label{eq:bdry_aux}
|E(A^+, S)| + |E(A^-, \overline S)| \leq |A^+| + |A^-| = |A \triangle S| \leq K \cdot \epsilon 2^{d-1},
\end{equation}
where the last inequality follows by the lemma assumption. 
Combining with Claim~\ref{claim:bdry_main}, and recalling from the lemma assumption that $|\partial(A)| \leq (1+ \epsilon)|A| \leq (1+ \epsilon)2^{d-1}$, we obtain 
\begin{align*}
 |\partial(A\triangle S)| & \leq |\partial(A^-)| + |\partial(A^+)| &&  \\
 & \leq |\partial(A)| - |\partial(S)| + 2|E(A^+, S)| + 2|E(A^-, \overline S)|  && \text{by Claim \ref{claim:bdry_main}} \\
 & \leq |\partial(A)| - |\partial(S)| +2K \cdot \epsilon2^{d-1} && \text{by Equation \eqref{eq:bdry_aux}}\\
 & \leq (1+\epsilon)2^{d-1} - 2^{d-1} + 2K \cdot \epsilon2^{d-1}  && \text{by the lemma assumption} \\
 & = C \cdot \epsilon2^{d-1}, && \text{for $C = 2K +1$}
\end{align*}
which completes the proof. 
\end{proof}
With this stronger bound on $|\partial(A\triangle S)|$, we can now bound the number of cuts $A$ of size $\partial(A) \leq (1+\epsilon)2^{d-1}$. 

\begin{lem}\label{lemma:cutcounting}
Let $S$ be a coordinate cut. For every $\epsilon >0$, the number of sets $A \subseteq Q_{k}$ of size $|A| \leq 2^{d-1}$ such that $|\partial(A)| \leq (1+2\epsilon)2^{d-1}$ and $|A \triangle S| \leq K \cdot \epsilon 2^d$
is at most  $ \exp\left(2^d O(\e/d) \log(d/\e)\right)$.
\end{lem}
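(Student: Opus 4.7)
The plan is to parameterize the admissible sets $A$ by their symmetric difference $T \coloneqq A\triangle S$, and then bound the number of admissible $T$ via Karger's cut-counting theorem applied on $T$ rather than on $A$ itself. Since $S$ is fixed, the map $A\mapsto T$ is a bijection, so it suffices to count the admissible $T$.

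I would begin by translating the hypotheses on $A$ into hypotheses on $T$. The hypothesis $|A\triangle S|\le K\epsilon\,2^d$ immediately gives $|T|\le K\epsilon\,2^d$. For the edge boundary, I would use the identity $\partial(A)\triangle\partial(S) = \partial(A\triangle S)$ together with the argument of Lemma~\ref{lem:small_cut}: inspecting that proof shows it only uses its hypothesis in the consequence $|\partial(A)|\le(1+\epsilon)\,2^{d-1}$, so the same chain of inequalities goes through with $\epsilon$ replaced by $2\epsilon$ and yields
\[
|\partial(T)| \;\le\; C'\,\epsilon\,2^d
\]
for an absolute constant $C'$. The counting problem is thus reduced to bounding the number of subsets $T\subseteq V$ satisfying $|T|\le K\epsilon\,2^d$ and $|\partial(T)|\le C'\epsilon\,2^d$.

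Next, I would invoke Karger's theorem (Theorem~\ref{thm:karger}). By Lemma~\ref{lemma:min_cut} the minimum cut of $Q_d$ has size $d$, so every such $T$ is an $\alpha$-approximate minimum cut with $\alpha = C'\epsilon\,2^d/d$. For $\epsilon$ in the meaningful range (say $\epsilon<1/(2K)$, as otherwise the claimed exponential bound is trivially larger than $2^{2^d}$), we have $|T|<2^{d-1}$, so each admissible $T$ is the smaller side of a distinct cut, and Karger's theorem bounds the number of such cuts by $2^{\lceil 2\alpha\rceil}\binom{2^d}{\lceil 2\alpha\rceil}$. Using the standard estimate $\binom{n}{k}\le(en/k)^k$ with $n=2^d$ and $k=\Theta(\epsilon\,2^d/d)$, this simplifies to
\[
\exp\!\Bigl(O\bigl(\epsilon\,2^d/d\bigr)\,\log(d/\epsilon)\Bigr),
\]
which is the desired bound.

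The main obstacle is essentially already dispensed with by Lemma~\ref{lem:small_cut}: the trivial bound $|\partial(T)|\le d\,|T|=O(\epsilon d)\,2^d$ would make $\alpha$ larger by a factor of $d$ and yield a cut count of the form $\exp(O(\epsilon\,2^d)\log(1/\epsilon))$, which carries an extra $d$ in the exponent and is too weak for the subsequent Chernoff union-bound. Given the improved boundary bound from Lemma~\ref{lem:small_cut}, the remaining work here is routine combinatorial bookkeeping around Karger's theorem and the binomial estimate.
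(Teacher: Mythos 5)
Your proposal is correct and follows essentially the same route as the paper: parameterize the admissible sets by $T = A\triangle S$, bound $|\partial(T)|$ via the identity $\partial(A)\triangle\partial(S)=\partial(A\triangle S)$ and (the argument of) Lemma~\ref{lem:small_cut}, then apply Karger's theorem with $\mathrm{mincut}(Q_d)=d$ and the standard binomial estimate. The only small inaccuracy is your dismissal of the regime $\epsilon\ge 1/(2K)$ on the grounds that the claimed bound exceeds $2^{2^d}$ (for constant $\epsilon$ it does not), but this case split is unnecessary anyway: each cut corresponds to at most two sets $T$, so Karger's count applies for every $\epsilon$ up to a factor of $2$ that is absorbed in the $O(\cdot)$, exactly as the paper implicitly does.
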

\begin{proof}
Let $A \subseteq Q_{k}$ be of size $|A| \leq 2^{d-1}$ such that $|\partial(A)| \leq (1+2\epsilon) 2^{d-1}$ and $|A \triangle S| \leq K \cdot \epsilon 2^d$. 
Given $S$, the set $A$ is uniquely determined by the choice of $A\triangle  S$, so we just need to count the number of possible choices for $A\triangle S$. 
Letting $C$ denote the universal constant in Lemma~\ref{lem:small_cut}, we have 
$$|\partial(A \triangle S)|\leq C\epsilon2^{d-1}. $$
\noindent
On the other hand, by Lemma~\ref{lemma:min_cut},  the minimum cut has size $d$, so $\partial(A \triangle S)$ is an $\alpha$-approximate minimum cut with $\alpha =C \epsilon2^{d-1}/d. $ Therefore, by Karger's cut counting theorem (Theorem~\ref{thm:karger}), the number of choices for $|A \triangle S|$ is at most
	$$
	2^{C\e 2^{d-1}/d}{2^d \choose C\e 2^{d-1}/d}\leq 2^{C\e 2^{d-1}/d} \cdot  2^{ H_2(C\e/d)2^d}\leq   \exp\left(2^d O(\e/d) \log(d/\e)\right).$$
    Here $H_2(x)$ denotes the binary entropy function $H_2(x)=-x\log x-(1-x)\log(1-x).$
\end{proof}

To bound the deviation of the cut sizes after subsampling, we use the additive Chernoff bound (see e.g., Theorems 1.10.10 and 1.10.11 in \cite{Doerr}).  
\begin{thm}\label{lemma:chernoff}
	Let $X_1, \dots, X_n$ be independent
	random variables taking values in $[0, 1]$. Let $X =\sum_{i=1}^n X_i$. Let $\lambda \geq 0$. Then 
	$$\Pr[|X - \mathbb{E}[X]| \geq \lambda]\leq 2\exp \left(-\frac{1}{3}\min\left \{ \lambda, \frac{\lambda^2}{\mathbb{E}[X]}\right \}\right).$$
\end{thm}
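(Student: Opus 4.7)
The plan is to prove the upper-tail and lower-tail bounds separately by the standard Cramér–Chernoff (moment-generating-function) method, and combine them by a union bound that produces the factor of $2$.

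Writing $\mu = \E[X]$, for the upper tail I would start from Markov's inequality applied to $e^{tX}$ for a free parameter $t>0$:
$$\Pr[X \ge \mu + \lambda] \;\le\; e^{-t(\mu+\lambda)}\,\E[e^{tX}].$$
Since each $X_i \in [0,1]$, convexity of $x \mapsto e^{tx}$ on $[0,1]$ gives $e^{tX_i} \le 1 - X_i + X_i e^t$, hence $\E[e^{tX_i}] \le 1 + (e^t-1)\mu_i \le \exp\bigl((e^t-1)\mu_i\bigr)$ where $\mu_i = \E[X_i]$. Independence then yields $\E[e^{tX}] \le \exp\bigl((e^t-1)\mu\bigr)$, and optimizing $t$ via the choice $t = \ln(1+\lambda/\mu)$ gives the Bennett-type bound
$$\Pr[X \ge \mu+\lambda] \;\le\; \exp\bigl(-\mu\, h(\lambda/\mu)\bigr), \qquad h(x) = (1+x)\ln(1+x) - x.$$

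To match the exact form in the statement, I would split on whether $\lambda \le \mu$ or $\lambda > \mu$. In the small-deviation regime $\lambda \le \mu$, I would use the quadratic lower bound $h(x) \ge x^2/3$ on $[0,1]$, which reduces to $\ln(1+s) \ge 2s/3$ for $s \in [0,1]$ and integration from $0$ to $x$; this yields $\exp(-\lambda^2/(3\mu))$. In the large-deviation regime $\lambda > \mu$, rather than optimizing $h$, I would plug $t = \ln 2$ directly into the MGF inequality, obtaining $\exp(-\lambda \ln 2 + \mu(1-\ln 2))$; since $\ln 2 - 1/3 > 1 - \ln 2$ and $\lambda > \mu$, this is at most $\exp(-\lambda/3)$. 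The lower tail is handled by the mirror argument with $t<0$: the small-deviation analysis is symmetric, while in the large-deviation regime the event $\{X \le \mu-\lambda\}$ with $\lambda > \mu$ is empty because $X \ge 0$. Summing the two one-sided bounds produces the factor of $2$.

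The main obstacle is purely bookkeeping: pinning down the constant $1/3$ in both regimes simultaneously. The cleanest route is probably to run the two-regime argument above rather than invoking a black-box Bennett inequality and tracking constants afterwards; verifying $\ln(1+s) \ge 2s/3$ on $[0,1]$ and checking the numerics $\ln 2 > 2/3$ for the large-deviation case are both short computations, and nothing beyond elementary calculus is needed anywhere in the proof.
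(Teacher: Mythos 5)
Your proof is correct, and there is nothing in the paper to compare it against: the paper does not prove this statement, but invokes it as a known additive Chernoff bound, citing Theorems 1.10.10 and 1.10.11 in the book of Doerr. Your two-regime MGF argument (Bennett bound with $h(x)\ge x^2/3$ on $[0,1]$ for $\lambda\le\mu$, and the explicit choice $t=\ln 2$ with the numeric check $2\ln 2\ge 4/3$ for $\lambda>\mu$) is exactly the standard derivation behind such textbook statements, and the constants work out as you claim. The only place where your write-up is slightly loose is the lower tail in the small-deviation regime: the Bennett function is not literally symmetric, so the inequality you actually need is $(1-x)\ln(1-x)+x\ge x^2/3$ on $[0,1]$; this holds (indeed with $1/2$ in place of $1/3$), so the lower tail is even easier than the upper one and your argument goes through.
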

Applying the Chernoff bound, we show that with high probability,  every cut $A$ remains larger than its closest coordinate cut after subsampling. 
\begin{lem}\label{lemma:cut_concentration}
   Let $\epsilon >0$ and let $A \subseteq V$ be a set of size $A = 2^{d-1}$ such that $(1+\epsilon)|A| \leq E(A, V \setminus A)  \leq (1+ 2\epsilon)|A|.$ Let $S$ be the coordinate cut such that $|A \triangle S| \leq O(\epsilon)2^d$ (exists by Lemma~\ref{lemma:sparse-close-to-coord}). 
   Then 
   $$ \Pr\left[|E'(A,  V\setminus A) | \geq |E'(S, V \setminus S)| + \frac{p\epsilon}{2}\cdot 2^{d-1}. \right]  \geq 1-4e^{-\Omega(\epsilon p2^{d-1})}.$$
\end{lem}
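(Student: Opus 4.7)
The plan is to avoid separately bounding $|E'(A, V\setminus A)|$ and $|E'(S, V\setminus S)|$---which would require multiplicative concentration within a $(1\pm \epsilon)$ factor, too strong a guarantee for $\epsilon = 1/\mathrm{poly}(d)$---and instead track only the symmetric difference of the two edge boundaries, as suggested in the discussion preceding Lemma~\ref{lem:small_cut}. Write $X \coloneqq \partial(A) \setminus \partial(S)$ and $Y \coloneqq \partial(S) \setminus \partial(A)$; these sets are disjoint and $X \cup Y = \partial(A) \triangle \partial(S) = \partial(A\triangle S)$. The key deterministic identity is
\[
|E'(A, V\setminus A)| - |E'(S, V\setminus S)| = |X \cap E'| - |Y \cap E'|,
\]
so it suffices to lower-bound the right-hand side with high probability.

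From the hypothesis $|\partial(A)| \geq (1+\epsilon)\cdot 2^{d-1}$ and the identity $|\partial(S)| = 2^{d-1}$ for the coordinate cut, we obtain $|X| - |Y| \geq \epsilon \cdot 2^{d-1}$. Lemma~\ref{lem:small_cut}, applied with the upper bound $|\partial(A)| \leq (1+2\epsilon)|A|$ (which only rescales the implicit constant in the lemma), yields $|X| + |Y| = |\partial(A\triangle S)| \leq O(\epsilon)\cdot 2^{d-1}$. Consequently $\mathbb{E}[|X\cap E'| - |Y \cap E'|] = p(|X|-|Y|) \geq p\epsilon \cdot 2^{d-1}$, while both $\mathbb{E}[|X\cap E'|]$ and $\mathbb{E}[|Y\cap E'|]$ are at most $p(|X|+|Y|) = O(p\epsilon \cdot 2^{d-1})$. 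I then apply Theorem~\ref{lemma:chernoff} separately to the two sums of independent Bernoulli variables $|X \cap E'|$ and $|Y \cap E'|$ with additive deviation $\lambda \coloneqq \tfrac{p\epsilon}{4}\cdot 2^{d-1}$. Since $\lambda$ is of the same order as each expectation, the Chernoff exponent $\tfrac{1}{3}\min\{\lambda, \lambda^2/\mathbb{E}[\cdot]\}$ is $\Omega(p\epsilon \cdot 2^{d-1})$. A union bound over both tails then yields, with probability at least $1 - 4\exp(-\Omega(p\epsilon \cdot 2^{d-1}))$, that $|X \cap E'| - |Y \cap E'| \geq p(|X|-|Y|) - 2\lambda \geq \tfrac{p\epsilon}{2}\cdot 2^{d-1}$.

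No individual step is technically hard; the conceptual point---and the only real obstacle---is recognizing that subtracting $|E'(S,V\setminus S)|$ before applying Chernoff collapses the random variables of interest onto the roughly $O(\epsilon)\cdot 2^{d-1}$ edges in $\partial(A)\triangle \partial(S)$. Because the required deviation $\tfrac{p\epsilon}{2}\cdot 2^{d-1}$ is then a constant-fraction relative deviation of the expectation, the Chernoff tail gives precisely the exponent $\Omega(p\epsilon \cdot 2^{d-1})$ claimed in the lemma; the otherwise prohibitive requirement of $(1\pm\epsilon)$-sharp concentration simultaneously for all $\epsilon > 1/\mathrm{poly}(d)$ is bypassed entirely by this reformulation.
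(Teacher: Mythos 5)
Your proposal is correct and follows essentially the same argument as the paper: the paper also subtracts the coordinate cut before applying concentration, defining $E^+ = \partial(A)\setminus\partial(S)$ and $E^- = \partial(S)\setminus\partial(A)$ (your $X$ and $Y$), bounding $|E^+|-|E^-|\geq \epsilon 2^{d-1}$ and $|E^\pm| \leq O(\epsilon)2^{d-1}$ via Lemma~\ref{lem:small_cut}, and applying the additive Chernoff bound with $\lambda = \tfrac{p\epsilon}{4}2^{d-1}$ to each piece followed by a union bound. Your explicit remark that the $(1+2\epsilon)$ upper bound on $|\partial(A)|$ only rescales the constant in Lemma~\ref{lem:small_cut} is a small point the paper leaves implicit, but otherwise the two proofs coincide.
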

\begin{proof}
    Let $E^+ \coloneqq \partial(A) \setminus \partial(S)$ and let $E^- \coloneqq  \partial(S) \setminus \partial(A) $. Then
\begin{align*}
|E'(A, V \setminus A)| - |E'(S, V \setminus  S)|= \left|( \partial(A) \setminus \partial(S)) \cap E'\right|  - \left|( \partial(S) \setminus \partial(A)) \cap E' \right| = |E^+ \cap E'| - |E^- \cap E'|. 
\end{align*} So we need to bound the probability of the event $|E^+ \cap E'| - |E^- \cap E'| \geq \frac{p \epsilon}{2} 2^{d-1}. $
From the lemma assumption, we have 
\begin{equation}\label{eq:E+E-}
 |E^+| - |E^-| = |E(A, V \setminus A)| - |E(S, V \setminus S)| \geq (1+\epsilon)|A|  - |S| = \epsilon \cdot 2^{d-1}.
 \end{equation}
By Lemma~\ref{lem:small_cut}, 
    $$ \E[|E^+ \cap E'|] = p|E^+| \leq p \cdot O(\epsilon)2^d \qquad \text{and} \qquad \E[|E^- \cap E'|] = p|E^-| \leq p \cdot O(\epsilon)2^d.$$
    Let $\lambda = \frac{p\epsilon}{4}\cdot 2^{d-1}.$ Then $\min\{ \lambda, \lambda^2/p|E^-|\}, \min\{ \lambda, \lambda^2/p|E^+|\} = \Omega\left(\epsilon p 2^{d-1} \right)$, so applying the Chernoff bound (Lemma~\ref{lemma:chernoff}), 
    $$  \Pr\left[\Bigr| |E^-\cap E'|-p|E^-| \Bigr| \geq \frac{p\epsilon}{4}\cdot 2^{d-1}\right] \leq 2e^{-\Omega(\e p 2^{d-1})}$$
and 
    $$  \Pr\left[\Bigr| |E^+\cap E'|-p|E^+| \Bigr| \geq \frac{p\epsilon}{4}\cdot 2^{d-1}\right] \leq 2e^{-\Omega(\e p 2^{d-1})}. $$

By a union bound, with probability at least $1 - 4e^{-\Omega(\e p 2^{d-1})}$, it holds that 
\begin{align*}
    |E^+\cap E'| - |E^- \cap E'| &\geq p|E^+| - p|E^-| - \frac{p \epsilon} {2} \cdot 2^{d-1} \geq p \epsilon 2^{d-1} -  \frac{p \epsilon} {2} \cdot 2^{d-1} = \frac{p \epsilon}{2}\cdot 2^{d-1},
\end{align*}
where the second inequality follows from Equation \eqref{eq:E+E-}.
\end{proof}
We can now show that every sufficiently large cut remains larger than a coordinate cut after subsampling. 
\begin{lem}\label{claim:thm1_claim}
    Let $p \geq \kappa \frac{\log d}{d}$ for a sufficiently large constant $\kappa$,  and let $\epsilon_0 = d^{-100}$. Then with probability at least $1 - d^{-100}/2$, the following holds: For every $\epsilon\geq\epsilon_0$ and every balanced cut $A$ of size $|E(A, V \setminus A)| = (1+\epsilon)2^{d-1}$, it holds that
    $$ |E'(A, V \setminus A)| \geq |E'(S, V \setminus S)| + \frac{p \epsilon}{2}2^{d-1},$$
    where $S$ is the coordinate cut such that $|A \triangle S| \leq K \cdot \epsilon 2^d$ (exists by Lemma~\ref{lemma:sparse-close-to-coord}). 
\end{lem}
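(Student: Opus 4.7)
The plan is to prove the lemma by a two-level union bound: an outer union over a dyadic discretization of $\epsilon$, and, within each slice, an inner union over all sparse cuts close to each of the $2d$ coordinate cuts. Since $\epsilon$ ranges continuously in $[\epsilon_0,1]$ with $\epsilon_0 = d^{-100}$, I would partition this range into $O(\log d)$ dyadic intervals $[\epsilon_i, 2\epsilon_i]$ where $\epsilon_i = 2^i \epsilon_0$, and treat cuts in each slice uniformly.

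For each slice $i$ and each of the $2d$ coordinate cuts $S$, I would first invoke Lemma~\ref{lemma:cutcounting} with $\e = \epsilon_i$ to bound the number of balanced sets $A$ with $|\partial(A)| \le (1+2\epsilon_i)2^{d-1}$ and $|A\triangle S| \le K\epsilon_i 2^d$ by $\exp(2^d \cdot O(\epsilon_i/d)\log(d/\epsilon_i))$. For any such $A$ whose true deviation is $\epsilon \in [\epsilon_i, 2\epsilon_i]$, the hypothesis of Lemma~\ref{lemma:cut_concentration} with parameter $\e = \epsilon$ is satisfied (since $(1+\epsilon)|A| = |\partial(A)| \le (1+2\epsilon)|A|$), which yields the desired inequality $|E'(A,V\setminus A)| \ge |E'(S,V\setminus S)| + (p\epsilon/2)2^{d-1}$ with failure probability at most $4 e^{-\Omega(\epsilon p 2^{d-1})} \le 4 e^{-\Omega(\epsilon_i p 2^{d-1})}$.

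Combining, the total failure probability at slice $i$ is at most $O(d) \cdot \exp\bigl(O(\epsilon_i 2^d \log(d/\epsilon_i)/d) - \Omega(\epsilon_i p 2^{d-1})\bigr)$. Substituting $p = \kappa \log d/d$, the exponent factors as $\Omega(\epsilon_i 2^d/d)$ times a bracket of the form $O(\log(d/\epsilon_i)) - \Omega(\kappa \log d)$; since $\epsilon_i \ge \epsilon_0 = d^{-100}$ forces $\log(d/\epsilon_i) \le 101\log d$, choosing $\kappa$ a sufficiently large constant makes the bracket at most $-\Omega(\log d)$. The per-slice failure is then at most $\exp(-\Omega(\epsilon_i 2^d \log d/d)) \le \exp(-\Omega(2^d \log d/d^{101})) \ll d^{-200}$, and summing over the $O(\log d)$ slices keeps the total under $d^{-100}/2$.

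The one step where the argument could easily have failed — and what the earlier lemmas were designed to prevent — is the tradeoff between cut count and Chernoff strength. A naive application of Karger's theorem directly to $\partial(A)$ would yield $\exp(\Omega(2^d/d))$ candidate cuts per slice, which the Chernoff bound at subsample rate $p = \Theta(\log d/d)$ cannot kill. It is the sharper bound of Lemma~\ref{lem:small_cut}, which counts the much smaller set $\partial(A \triangle S)$ instead of $\partial(A)$, that reduces the combinatorial count to $\exp(O(\epsilon_i 2^d \log(d/\epsilon_i)/d))$ and exactly balances the two competing exponents; my task here is mainly to verify that this balance is preserved all the way down to the smallest level $\epsilon_0 = d^{-100}$.
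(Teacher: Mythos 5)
Your proposal is correct and follows essentially the same route as the paper's proof: a dyadic decomposition $\epsilon_i = 2^i\epsilon_0$, the counting bound of Lemma~\ref{lemma:cutcounting} against the concentration bound of Lemma~\ref{lemma:cut_concentration} applied at the true $\epsilon$, using $\log(d/\epsilon_i) \le 101\log d$ and a sufficiently large $\kappa$ to make the exponent negative, and a final union bound over the $2d$ coordinate cuts and the scales. The only cosmetic difference is that you use $O(\log d)$ scales while the paper crudely allows up to $2d$, which changes nothing.
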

\begin{proof} Suppose $\kappa \geq 204  C/D$, where $C$
denotes the hidden constant in the $O$-notation in Lemma~\ref{lemma:cutcounting} and $D$ denotes the hidden constant in the $\Omega$-notation in Lemma~\ref{lemma:cut_concentration}. 
  Let $\epsilon_i = 2^i \epsilon_0$ for $i = 1, \dots,  \log(2^dd/\epsilon_0)$. For every coordinate cut $S=S_{j,b}$ with $j \in [d]$ and $b \in \{0,1\}$, and for every $\epsilon_i$,  let $\mathcal{B}(S,\epsilon_i)$ be the event that there exists a cut $A$ of size $(1+\epsilon_i)2^{d-1} \leq |E(A, V \setminus A)| \leq (1 + 2 \epsilon_i)2^{d-1}$ with $|A \triangle S| \leq K \cdot \epsilon_i2^d$ such that 
  $$ |E'(A, V \setminus A)|< |E'(S, V \setminus S)| + \frac{p \epsilon}{2}2^{d-1}.$$
  We now show that $\Pr[\mathcal{B}(S, \epsilon_i)] \leq d^{-102}/8.$ 
  For every cut $A$ of size $|E(A, V \setminus A)| \leq (1 + 2 \epsilon_i)2^{d-1}$ with $|A \triangle S| \leq K \cdot \epsilon_i 2^d$,  by  Lemma~\ref{lemma:cut_concentration}, we have $$\Pr\left[ |E'(A, V \setminus A)| < |E'(S, V \setminus S)| + \frac{p \epsilon_i}{2}2^{d-1}\right] \leq 4\exp\left(-D\epsilon_i p 2^{d-1}\right).$$
By Lemma~\ref{lemma:cutcounting}, the number of such cuts is at most $\exp\left( 2^d C\frac{\e_i}{d} \log(d/\e_i)\right)$. Therefore, by a union bound, 
\begin{align*}
\Pr[\mathcal{B}(S, \epsilon_i)] & \leq  4\exp\left(-D\epsilon_i p 2^{d-1}\right) \cdot \exp\left( 2^d C\frac{\e_i}{d} \log(d/\e_i)\right)\\
& \leq \exp\left(2^d\epsilon_i \left( C \frac{\log(d/\epsilon_i)}{d}-\frac{D \cdot p}{2} \right)\right)\\
& \leq \exp\left(2^d\epsilon_i \left( 101C \frac{\log d}{d} -  \kappa  \frac{D \log d }{2d}\right)\right) && \text{since $\frac{1}{\epsilon_i} \leq \frac{1}{\epsilon_0} \leq d^{100}$ and $p \geq  \frac{\kappa \log d}{d}$} \\
& \leq \exp\left(- 2^d\epsilon_i C \frac{\log d}{d}  \right) && \text{since $\kappa  \geq 204  C/D $} \\
& \leq \exp\left(-2^d/\poly d\right) && \text{since $\epsilon_i \geq \epsilon_0 =d^{-100}$} \\
&<  \frac{1}{8d^{102}}  && \text{for $d$ sufficiently large.}
\end{align*}
Finally, taking a union bound over the $2d$ possible choices for $S$ and the $\log(2^d/\epsilon_0) \leq 2d$ possible choices for $i$, gives the lemma. 
\end{proof}
Lemma~\ref{claim:thm1_claim} implies that the sparsest cut after sampling is close in Hamming distance to a coordinate cut. However, since the objective value of \eqref{opt_problem} is a sum $\sum_{i}|E'(A_i, V \setminus A_i)|$, we need to exclude the possibility that the optimal solution includes a large cut $|E'(A_i, V \setminus A_i)|$ due to the other cuts being surprisingly sparse. Corollary~\ref{cor:not_too_small} handles this. 
 \begin{cor}\label{cor:not_too_small}
    Conditioned on the success of the event in Lemma~\ref{claim:thm1_claim}, for every balanced cut $A \subseteq V$, it holds that 
    $$|E'(A, V \setminus A)| \geq|E'(S, V \setminus S)|  - C \epsilon_0 2^{d-1},$$
    where $S$ is the coordinate with the smallest hamming distance to $A$ and $C$ is a universal constant.
 \end{cor}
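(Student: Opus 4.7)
The plan is to case-split on the ``excess'' $\e$ defined by $|E(A, V\setminus A)| = (1+\e)2^{d-1}$; note that $\e \geq 0$ by the edge-isoperimetric inequality on the hypercube (coordinate cuts are minimum balanced cuts). For $\e < \e_0$ the conclusion follows from a purely structural argument using Lemmas~\ref{lemma:sparse-close-to-coord} and~\ref{lem:small_cut} plus a one-line subsampling estimate; for $\e \geq \e_0$ it reduces directly to Lemma~\ref{claim:thm1_claim} with plenty of slack.

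In the regime $\e < \e_0$, Lemma~\ref{lemma:sparse-close-to-coord} produces a coordinate cut with $|A \triangle S| \leq K\e 2^d \leq K\e_0 2^d < 2^{d-1}$. Since any two distinct coordinate cuts are at Hamming distance at least $2^{d-1}$, this $S$ is unique and hence equals the closest coordinate cut. Applying Lemma~\ref{lem:small_cut} with the same $\e$ gives $|\partial(A) \triangle \partial(S)| \leq C\e 2^{d-1} \leq C\e_0 2^{d-1}$, and the corollary follows from the chain
\[
|E'(S,V\setminus S)| - |E'(A,V\setminus A)| \;\leq\; |E' \cap (\partial(S) \setminus \partial(A))| \;\leq\; |\partial(A) \triangle \partial(S)| \;\leq\; C\e_0 2^{d-1}.
\]

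In the regime $\e \geq \e_0$, Lemma~\ref{claim:thm1_claim} yields $|E'(A,V\setminus A)| \geq |E'(S',V\setminus S')| + \tfrac{p\e}{2} 2^{d-1}$ for a coordinate cut $S'$ with $|A \triangle S'| \leq K\e 2^d$. For $\e \leq 1/(2K)$, the same uniqueness argument pins $S'$ to the closest coordinate cut $S$, and the inequality is strictly stronger than required. The main obstacle I foresee is the residual sub-regime $\e \geq 1/(2K)$, in which the ``close'' cut $S'$ supplied by Lemma~\ref{lemma:sparse-close-to-coord} may fail to equal the closest $S$; the remedy is that the boundaries of different coordinate cuts are disjoint, so each $|E'(S_j,V\setminus S_j)|$ is a $\mathrm{Bin}(2^{d-1},p)$ variable concentrating at $p\, 2^{d-1}$ to within $O(\sqrt{p\, 2^{d-1}\log d})$ by a Chernoff bound union-bounded over the $2d$ coordinate cuts. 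This fluctuation is much smaller than the slack $\tfrac{p\e}{2}2^{d-1} = \Omega(p\cdot 2^d)$, so $|E'(S',\cdot)| - |E'(S,\cdot)|$ is absorbed and the bound $|E'(A,V\setminus A)| \geq |E'(S,V\setminus S)| - C\e_0 2^{d-1}$ survives after possibly inflating $C$.
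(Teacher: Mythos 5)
Your proposal is correct and follows essentially the same route as the paper: split on whether the excess $\epsilon$ is below or above $\epsilon_0$, handle the small-$\epsilon$ case deterministically via Lemma~\ref{lemma:sparse-close-to-coord}, Lemma~\ref{lem:small_cut} and the inequality $|E'(S,V\setminus S)|-|E'(A,V\setminus A)|\le|\partial(A)\triangle\partial(S)|$, and invoke Lemma~\ref{claim:thm1_claim} in the large-$\epsilon$ case. The only divergence is your residual sub-regime $\epsilon\ge 1/(2K)$: the extra Chernoff bound you introduce there is unnecessary, since the closest coordinate cut $S$ satisfies $|A\triangle S|\le|A\triangle S'|\le K\epsilon 2^d$ a fortiori, and the event of Lemma~\ref{claim:thm1_claim} (whose proof union-bounds over all coordinate cuts $S$ and all cuts $A$ close to $S$) therefore already yields the inequality with the closest cut in place of $S'$; moreover, as written, that extra concentration event is not part of the conditioning stated in the corollary, so strictly speaking your argument in that sub-case proves a slightly weaker (additionally conditioned) statement, though this is easily repaired as above and is exactly how the paper avoids the issue.
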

 \begin{proof}
 Let $C$ be the universal constant from \Cref{lem:small_cut}.
     Let $A \subseteq V$ be a cut of size $|E(A, V \setminus A)| = (1+ \epsilon)2^{d-1}, $ and let $S$ be the coordinate cut with the smallest Hamming distance to $A$. By Lemma~\ref{lemma:sparse-close-to-coord}, we have  $|A \triangle S| \leq K \cdot \epsilon 2^{d-1}$. Consider two cases depending on $\epsilon$.
     
    Suppose $\epsilon \geq \epsilon_0$. Then by Lemma~\ref{claim:thm1_claim}, we have 
   $|E'(A, V \setminus A)| \geq|E'(S, V \setminus S)|$, so we are done.
   
   Suppose instead that $\epsilon < \epsilon_0$. Then by Lemma~\ref{lem:small_cut} ,we have   $ |\partial(A)\triangle \partial(S)| \leq C \epsilon_0 2^{d-1}$, which gives
$$ |E'(A, V \setminus A)| \geq |E'(S, V \setminus S)| - |\partial(A)\triangle \partial(S)|  \geq   |E'(S, V \setminus S)| -C \epsilon_0 2^{d-1}. $$
 \end{proof}
 We also need to bound the optimal value of \eqref{opt_problem}. 
\begin{lem}\label{lemma:optvalue} Let $K$ be the universal constant from Lemma~\ref{lemma:sparse-close-to-coord}. 
With probability at least $1-\frac{1}{2d^{100}}$, all coordinate cuts $S_{j,b}$ satisfy
   $$\left| |E'(S_{j,b}, V \setminus S_{j,b})|  - p2^{d-1}\right| \geq \frac{p}{100K d}2^{d-1},$$
  and in particular, the optimal value of \eqref{opt_problem} is at most $\left( d + \frac{1}{100K}\right)2^{d-1}p$.
\end{lem}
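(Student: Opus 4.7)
The statement as written contains a typo (the inequality $\ge$ should be $\le$); the intended content is a standard two-sided concentration bound saying each coordinate cut, after subsampling, has size very close to its expectation $p\cdot 2^{d-1}$, and from there the upper bound on the optimal value of \eqref{opt_problem} follows by feasibility of the axis-aligned solution.

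The plan is to apply the Chernoff bound (Theorem \ref{lemma:chernoff}) to each of the $2d$ coordinate cuts and take a union bound. For fixed $j\in[d]$ and $b\in\{0,1\}$, observe that the edge boundary $\partial(S_{j,b})$ in $Q_d$ consists of exactly those edges that flip coordinate $j$; since every vertex is the endpoint of exactly one such edge, $|\partial(S_{j,b})|=2^{d-1}$. Hence $|E'(S_{j,b},V\setminus S_{j,b})|$ is a sum of $2^{d-1}$ i.i.d.\ Bernoulli$(p)$ random variables with mean $p\cdot 2^{d-1}$. Setting the deviation parameter $\lambda=\frac{p}{100Kd}\cdot 2^{d-1}$ in Theorem \ref{lemma:chernoff}, we have $\lambda<\mathbb{E}[X]$, so the relevant quantity is
$$\frac{\lambda^2}{\mathbb{E}[X]}=\frac{p\cdot 2^{d-1}}{(100Kd)^2}\ge \frac{\kappa\log d}{(100K)^2 d^3}\cdot 2^{d-1},$$
using $p\ge \kappa\log d/d$. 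This is $\Omega(2^d/\poly d)$, so
$$\Pr\!\left[\bigl|\,|E'(S_{j,b},V\setminus S_{j,b})|-p\cdot 2^{d-1}\bigr|\ge \tfrac{p}{100Kd}\cdot 2^{d-1}\right]\le 2\exp\!\left(-\Omega(2^d/\poly d)\right).$$

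A union bound over the $2d$ coordinate cuts bounds the total failure probability by $4d\exp(-\Omega(2^d/\poly d))$, which is $\le \frac{1}{2d^{100}}$ for $d$ sufficiently large. Conditioned on this event, every coordinate cut in the subsample has size at most $p\cdot 2^{d-1}+\frac{p}{100Kd}\cdot 2^{d-1}$. For the second part, note that the choice $A_i=S_{i,0}$ for $i=1,\dots,d$ is feasible for \eqref{opt_problem}: each $S_{i,0}$ has size $2^{d-1}$, and for $i\ne j$ we have $|S_{i,0}\triangle S_{j,0}|=2\cdot 2^{d-2}=2^{d-1}$ (vertices that differ in exactly one of the two coordinates). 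Therefore the optimal value is upper bounded by
$$\sum_{i=1}^d |E'(S_{i,0},V\setminus S_{i,0})|\le d\cdot\left(p\cdot 2^{d-1}+\tfrac{p}{100Kd}\cdot 2^{d-1}\right)=\left(d+\tfrac{1}{100K}\right)p\cdot 2^{d-1}.$$

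There is no serious obstacle here: the coordinate cut sizes are sums of i.i.d.\ Bernoullis with a large mean $p\cdot 2^{d-1}=\Omega(2^d\log d/d)$, so Chernoff gives doubly-exponentially small failure probability, easily absorbing the $\poly d$ union bound. The only mild care is (i) verifying the simple combinatorial identities $|\partial(S_{j,b})|=2^{d-1}$, $|S_{i,0}|=2^{d-1}$, and $|S_{i,0}\triangle S_{j,0}|=2^{d-1}$, and (ii) checking that $\lambda<\mathbb{E}[X]$ so that the Chernoff bound reduces to the $\lambda^2/\mathbb{E}[X]$ branch, both of which are immediate.
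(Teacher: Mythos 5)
Your proposal is correct and follows essentially the same route as the paper: a Chernoff bound with $\lambda = \frac{p}{100Kd}2^{d-1}$ on each coordinate cut (whose boundary has exactly $2^{d-1}$ edges), a union bound over the coordinate cuts, and then feasibility of the family $\{S_{i,0}\}_{i\in[d]}$ to bound the optimum of \eqref{opt_problem}. You also correctly noted that the $\geq$ in the stated deviation bound is a typo for $\leq$ — the paper's own proof bounds the probability of a large deviation and then uses the complementary (small-deviation) event, exactly as you do.
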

\begin{proof}
    Fix a coordinate cut $S_{j,b}$. Then $\E[|E'(S_{j,b}, V \setminus S_{j,b})|] = 2^{d-1}p$, so applying the Chernoff bound (Lemma~\ref{lemma:chernoff}) with $\lambda = \frac{1}{100Kd}2^{d-1}p$ gives
    \begin{align*}
        \Pr\left [\left| |E'(S_{j,b}, V \setminus S_{j,b})|  - p2^{d-1}\right|  \geq \frac{p}{100Kd}2^{d-1} \right]
        & \leq \exp\left(-\frac{1}{3}\frac{1}{100^2K^2d^2}p2^{d-1}\right) \\
        &= \exp\left(-2^d/\poly d\right) \\
      &   \leq d^{-101}/2.  
    \end{align*}
    By a union bound over the $d$ coordinate cuts, the above equation holds simultaneously for all $S_{j,b}$ with probability at least $1 - d^{-100}/2.$
    If this holds, then, since $\{S_{j,b} : j \in [d], b = 0\}$ is a feasible solution to \eqref{opt_problem}, the optimal value of \eqref{opt_problem} is at most 
    $$ \sum_{j} |E'(S_{j,0}, V \setminus S_{j,0})| \leq d \cdot \left( 1 + \frac{1}{100Kd}\right)2^{d-1}p = \left(d +\frac{1}{100K}\right)2^{d-1}p.$$
 \end{proof}

 Finally, we put everything together to prove Theorem~\ref{thm:cube_main}. 
\begin{proof}[Proof of Theorem~\ref{thm:cube_main}] 
The algorithm solves the optimization problem
\begin{equation}\tag{\ref{opt_problem}}
 \begin{aligned}
      \min &  \sum_{i = 1}^d |E'(A_i, V \setminus A_i)|   \qquad \text{subject to}&& \\
            & |A_i| = 2^{d-1} && \forall i,   \\
     & |A_i \triangle A_j| = 2^{d-1} &&  \forall i \neq j \\
    \end{aligned}
\end{equation}
and outputs the optimal solution $A_1,\dots,A_d$. 
\paragraph{Running time.} We can solve this program by enumerating all feasible families of cuts, of which there are at most $\left({2^d \choose 2^{d-1}}\right)^d = 2^{O\left(2^d d\right)} = 2^{O(n \log n)} $, and computing the corresponding edge counts, so the running time is $2^{O(n \log n)}$. 

\paragraph{Correctness.} Condition on the success of the events in Lemma~\ref{claim:thm1_claim} and Lemma~\ref{lemma:optvalue}. By a union bound, this occurs with probability at least $1- d^{-100}.$

Let $\{A_i\}_{i \in [d]}$ be the optimal solution to~\eqref{opt_problem}. For every $i \in [d]$, let $S_i$ denote the coordinate cut with the smallest Hamming distance to $A_i$. 

We start by proving that this is a matching, i.e., that the set $\{S_i\}_{i \in [d]}$ consists of $d$ different coordinate cuts. Suppose not. Then $S_i = S_j$ or $S_i = \overline S_j$ for some $i \neq j$. If $S_i = S_j = S$, then by triangle inequality, 
\[
|A_i \triangle S| + |A_j \triangle S| \;\geq\; |A_i \triangle A_j| \;=\; 2^{d-1},
\]
so either $|A_i \triangle S_i | \geq 2^{d-2}$ or $|A_j \triangle S_j | \geq 2^{d-2}.$

If instead $S_i = \overline S_j = S$, then again by triangle inequality, 
$$ |A_i \triangle S| + |A_j \triangle \overline S| = |A_i \triangle S| + 2^d - |A_j \triangle S|  \geq 2^d-  |A_i \triangle A_j| = 2^{d-1},$$
so again either $|A_i \triangle S_i| \geq  2^{d-2}$ or $|A_j \triangle S_j| \geq 2^{d-2}$. 

Let $i$ be the index such that $|A_i \triangle S_i| \geq 2^{d-2}$. Applying Lemma~\ref{lemma:sparse-close-to-coord} with $\epsilon = 1/4$ gives $\sloppy{|E(A_i, V \setminus A_i)| \geq (1 + \frac{1}{4K})2^{d-1}}$, where $K$ is the universal constant from Lemma~\ref{lemma:sparse-close-to-coord}. Therefore, by Lemma~\ref{claim:thm1_claim} and Lemma~\ref{lemma:optvalue},
\begin{align*}
|E'(A_i, V \setminus A_i)| & \geq |E'(S_i, V \setminus S_i)| + \frac{p}{8K}2^{d-1}   && \text{by Lemma~\ref{claim:thm1_claim}} \\
& \geq \left(1 -  \frac{1}{100K}\right)p2^{d-1} + \frac{p}{8K}2^{d-1}  && \text{by Lemma~\ref{lemma:optvalue}} \\
& \geq \left(1 +  \frac{1}{10K}\right)p2^{d-1}. 
\end{align*}
Furthermore, letting $C$ be the universal constant from Corollary~\ref{cor:not_too_small}, for every $j \neq i$, we have
\begin{align*}
    |E'(A_j, V \setminus A_j)| & \geq |E'(S_j, V \setminus S_j)| - C \epsilon_02^{d-1} && \text{by Corollary~\ref{cor:not_too_small}} \\
    & \geq  \left(1 -  \frac{1}{100Kd}\right)p2^{d-1} -C \epsilon_02^{d-1} && \text{by Lemma~\ref{lemma:optvalue}} \\
    &> \left(1 -  \frac{1}{50Kd}\right)p2^{d-1} && \text{since $\epsilon_0= d^{-100} \ll \frac{p}{Kd}$. }
\end{align*}
But then summing over all $j \in [d]$ gives
\begin{align}
    \sum_{j \in [d]}|E'(A_j, V \setminus A_j)| \geq \left(1 +  \frac{1}{10K}\right)p2^{d-1} + (d-1)\left(1 -  \frac{1}{50Kd}\right)p2^{d-1} > \left(d+ \frac{1}{100K}\right)p2^{d-1},
\end{align}
which is a contradiction, since objective value of \eqref{opt_problem} is at most $\left(d+ \frac{1}{100K}\right)p2^{d}$, by Lemma~\ref{lemma:optvalue}. Thus, the set $\{S_i\}_{i \in [d]}$ must contain $d$ distinct coordinate cuts.

So now suppose that we have a matching, i.e. that the set $\{S_i\}_{ i \in [d]}$ contains $d$ distinct coordinate cuts. Then $\{S_i\}_{i \in [d]}$ is a feasible solution to \eqref{opt_problem}. Recall that $K$ is the universal constant from Lemma~\ref{lemma:sparse-close-to-coord} and $C$ is the universal constant from Corollary~\ref{cor:not_too_small}. Let $L \geq 2 K \cdot C$, and suppose  for contradiction that 
$$|A_i \triangle S_i| \geq L \epsilon_0d2^{d-1}/p$$ for some $i \in [d]$. Applying Lemma~\ref{lemma:sparse-close-to-coord} with $\epsilon = \frac{L}{K}\frac{\epsilon_0d}{p}$, gives
$$|E(A_i, V \setminus A_i)| \geq \left( 1 + \frac{L}{K}  \frac{\epsilon_0 d}{p} \right)2^{d-1}\geq \left(1 + \frac{2C \epsilon_0 d}{p}\right)2^{d-1},$$
where the last inequality follows by choice of $L$. 
So by Lemma~\ref{claim:thm1_claim}, 
\begin{equation*}\label{eq:Ai}
| E'(A_i, V \setminus A_i)| \geq |E'(S_i, V \setminus S_i)| + C \epsilon_0 d 2^{d-1}.
 \end{equation*}
But then, by Corollary~\ref{cor:not_too_small}, 
\begin{align*}
    \sum_{j \in  [d]}|E'(A_j, V \setminus A_j)| & \geq  |E'(S_i, V \setminus S_i)|  + C \epsilon_0 d 2^{d-1} + \sum_{j \neq i} |E'(A_j, V \setminus A_j)|  && \\
    & \geq |E'(S_i, V \setminus S_i)|  + C d \epsilon_0 2^{d-1} + \sum_{j \neq i}|E'(S_j, V \setminus S_j)| - (d-1)C \epsilon_0 2^{d-1}  \\
    & > \sum_{j =1}^d|E'(S_j, V \setminus S_j)|, 
\end{align*}
 which contradicts the optimality of $\{A_i\}_{i \in [d]}$, since $\{S_i\}_{i \in [d]}$ is a feasible solution. We conclude that with probability at least $1-d^{-100}$, it holds that $|A_i \triangle S_i| \leq \frac{L \epsilon_0d2^{d-1}}{p} \leq \frac{2^{d-1}}{\poly d}$ for all  $i$.
\end{proof}
\newpage
\section{Proof of Theorem~\ref{thm:general_main}}
In this section, we prove Theorem~\ref{thm:general_main}, restated below for the convenience of the reader. 
\thmthree*

The proof follows the same overall strategy as Theorem~\ref{thm:cube_main}. The algorithm solves to following optimization problem and outputs the optimal solution. 
\begin{equation}\label{opt_problem_general}
 \begin{aligned}
      \min &  \sum_{i = 1}^d |E'(A_i, V \setminus A_i)|   \qquad \text{subject to}&& \\
            & |A_i| = \frac{|V|}{2} && \forall i,   \\
     & |A_i \triangle A_j| = \frac{|V|}{2} &&  \forall i \neq j. \\
    \end{aligned}
\end{equation}
We want to use the FKN theorem (Theorem~\ref{thm:fkn}) to show that every sparse cut is close to a coordinate cut, and then use Karger's cut-counting theorem (Theorem~\ref{thm:karger}) to union bound over all cuts. In the $k$-distance cube, every vertex has degree ${d \choose k}$, and for every coordinate cut $S_{j,b}$ and every vertex $v \in V$, exactly ${d-1 \choose k-1}$ of the edges incident on $v$ cross the cut $S_{j,b}$. These higher degrees allow for better concentration bounds, which is why we can achieve exact recovery. 
It is important to note that when $k$ is even, the $k$-distance cube $Q_{d,k}$ has at least two connected components, corresponding to the vertices with odd Hamming weight, and the vertices with even Hamming weight.  
\begin{defn}[Component of $Q_{d,k}$]\label{def:component}
 Let  $Q_{d,k}^E$,  $Q_{d,k}^O \subseteq Q_{d,k}$ be the subgraphs induced by   
$$  \{ x \in \{0,1\}^d : |x| \equiv 0 \pmod{2} \}, 
\qquad \text{and} \qquad \{ x \in \{0,1\}^d : |x| \equiv 1 \pmod{2} \},$$
    respectively. 
    We say that $Q \subseteq Q_{d,k}$ is an \emph{component} of $Q_{d,k}$ if 
    \begin{itemize}
        \item $Q = Q_{d,k}$  and $k$ is odd, or
        \item $Q \in \{Q_{d,k}^E, Q_{d,k}^O\}$ and $k$ is even. 
    \end{itemize}
    For a cut $S$, we say that $S$ is a \emph{coordinate cut in $Q$} if $S = S_{j,b} \cap Q$ for some coordinate cut $S_{j,b}. $
\end{defn}
Later, in Remark~\ref{rem:sparsest_n_connected}, we will see that the components of $Q_{d,k}$ are exactly the connected components for $d$ sufficiently large.  

We start by analyzing the spectrum of the $k$-distance cube $Q_{d,k}$. Known results for the Hamming association scheme (see e.g. Theorem 5, Chapter 21 in \cite{macwilliams1977theory}) show that the eigenvalues $\{\mu_S\}_{S \subseteq [d]}$ of the adjacency matrix of $Q_{d,k}$ are given by \emph{binary Krawtchouk polynomials} 
$$ \mu_S = \mathcal{K}_k(|S|; d) \coloneqq \sum_{j = 0}^{k}(-1)^j {|S| \choose j}{d-|S| \choose k-j}.$$
Therefore, the eigenvalues $\{\lambda_S\}_{S\subseteq [d]}$ of the Laplacian $\mathcal L$ satisfy 
$$\lambda_S = {d \choose k} - \mu_S =  2 \sum_{\substack{j \in [k] : \\ j \text{ odd}}} \binom{|S|}{j}\binom{d - |S|}{k-j}.$$
We include a direct calculation of the eigenvalues $\lambda_S$ for completeness. 
\begin{lem}[Eigenvalues of $Q_{d,k}$]\label{lemma:e_vectors_general}
   Let $k$ be an integer, and let $\mathcal{L}$ be the unnormalized Laplacian of $Q_{d,k}$. Then the Fourier characters $\chi_S$ form an eigenbasis of $\mathcal{L}$, with corresponding eigenvalues 
    \begin{equation}\label{eq:lambda_def}
        \lambda_S =  2 \sum_{\substack{j \in [k] : \\ j \text{ odd}}} \binom{|S|}{j}\binom{d - |S|}{k-j}.
  \end{equation}
\end{lem}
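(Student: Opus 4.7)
The plan is to compute $A\chi_S$ directly, where $A$ is the adjacency matrix of $Q_{d,k}$, and show that each Fourier character $\chi_S$ is an eigenvector. For a vertex $x\in\{0,1\}^d$, its neighbors in $Q_{d,k}$ are exactly the vertices $y = x\oplus z$ with $|z|=k$, where $\oplus$ denotes coordinatewise XOR. The key observation is that $\chi_S$ is multiplicative under $\oplus$, i.e.\ $\chi_S(x\oplus z) = \chi_S(x)\chi_S(z)$, which immediately lets me factor $\chi_S(x)$ out of the sum
\[
(A\chi_S)(x) \;=\; \sum_{z\in\{0,1\}^d,\,|z|=k}\chi_S(x\oplus z) \;=\; \chi_S(x)\sum_{z:\,|z|=k}\chi_S(z).
\]
This shows $A\chi_S = \mu_S\chi_S$ for a scalar $\mu_S$ depending only on $|S|$, $d$, and $k$.

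Next I would evaluate $\mu_S$ by grouping the $z$'s according to $j\coloneqq |S\cap\mathrm{supp}(z)|$. For each $j\in\{0,1,\ldots,k\}$, the number of weight-$k$ vectors $z$ with $|S\cap\mathrm{supp}(z)|=j$ is $\binom{|S|}{j}\binom{d-|S|}{k-j}$, and each such $z$ contributes a sign $\chi_S(z)=(-1)^j$. This gives the Krawtchouk-polynomial expression
\[
\mu_S \;=\; \sum_{j=0}^{k}(-1)^j\binom{|S|}{j}\binom{d-|S|}{k-j}.
\]

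Finally, since $\mathcal{L} = \binom{d}{k}I - A$, the same $\chi_S$ are eigenvectors of $\mathcal{L}$ with eigenvalue $\lambda_S = \binom{d}{k}-\mu_S$. Applying Vandermonde's identity $\binom{d}{k} = \sum_{j=0}^k\binom{|S|}{j}\binom{d-|S|}{k-j}$ and subtracting term-by-term, the even-$j$ contributions cancel and the odd-$j$ contributions double, producing exactly
\[
\lambda_S \;=\; 2\sum_{\substack{j\in[k]:\\ j\text{ odd}}}\binom{|S|}{j}\binom{d-|S|}{k-j},
\]
as claimed. Since the $\chi_S$ form an orthogonal basis of $\R^V$, this also certifies that they constitute an eigenbasis of $\mathcal{L}$. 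The argument is essentially a direct calculation; I do not anticipate a genuine obstacle, the only care needed being bookkeeping of the binomial identity and the sign pattern in $\chi_S(z)$.
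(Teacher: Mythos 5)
Your proposal is correct and follows essentially the same argument as the paper: both exploit the multiplicativity $\chi_S(x\oplus z)=\chi_S(x)\chi_S(z)$ to factor out $\chi_S(x)$ and then count the weight-$k$ shifts by the parity of their intersection with $S$. The only cosmetic difference is that you diagonalize the adjacency matrix first (obtaining the Krawtchouk eigenvalue) and pass to $\mathcal{L}=\binom{d}{k}I-A$ via Vandermonde, whereas the paper computes $\mathcal{L}\chi_S$ directly, with each term $1-(-1)^{|T\cap S|}$ contributing $0$ or $2$ — the same cancellation in slightly different bookkeeping.
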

\begin{proof}
For every vertex $x \in \{0,1\}^d$, the neighborhood of $x$ in $Q_{d,k}$ is $\{x \oplus \1_T : T \in [d]^{(k)}\}$, where $[d]^{(k)}$ denotes the collection of all $k$-element subsets of $[d]$. Therefore, for every vector $v \in \R^{d}$ and every  $x \in \{0,1\}^d$, 
$$ \mathcal L v(x) = \sum_{T \in [d]^{(k)}}\left(v(x) - v(x \oplus \1_T)\right).$$

 We now show that every Fourier character $\chi_S$ is an eigenvector. For every $S \in [d]$ and every $x \in \{0,1\}^d$, 
\begin{align*}
    \mathcal L \chi_S(x) & = \sum_{T \in [d]^{(k)}}\left(\chi_S(x) - \chi_S (x \oplus \1_T)\right)  \\
    & =  \sum_{T \in [d]^{(k)}}\left( (-1)^{\langle x, \1_S\rangle} - (-1)^{\langle x \oplus\1_T, \1_S\rangle}\right)  \\
    & = (-1)^{\langle x, \1_S\rangle} \sum_{T \in [d]^{(k)}}\left( 1 -(-1)^{\langle \1_T, \1_S \rangle }\right) \\
    & = \chi_S(x) \sum_{T \in [d]^{(k)}}\left( 1 -(-1)^{\langle \1_T, \1_S \rangle }\right). 
\end{align*}
Therefore, $\chi_S$ is an eigenvector of $\mathcal L$ with eigenvalue $\sum_{T \in [d]^{(k)}}\left( 1 -(-1)^{\langle \1_T, \1_S \rangle }\right). $ 
To simplify the expression, note that
$$ \left( 1 -(-1)^{\langle \1_T, \1_S \rangle }\right) = \begin{cases}
    0, & \text{if $|T \cap S|$ is even} \\
    2, & \text{if $|T \cap  S|$ is odd,}
\end{cases}$$
 For every $j \in [k]$, the number of sets $T \in [d]^{(k)}$ such that $|S \cap T| =j$ is ${|S| \choose j}{d-|S| \choose k-j}$, so
\[
    \sum_{T \in [d]^{(k)}}\left( 1 -(-1)^{\langle \1_T, \1_S \rangle }\right) = 2 |\{ T \in [d]^{(k)} : |T \cap S| \text{ is odd}\}|
    = 2 \sum_{\substack{j \in [k] : \\ j \text{ odd}}} \binom{|S|}{j}\binom{d - |S|}{k-j} = \lambda_S,
\]
which completes the proof. 
\end{proof}

Using the above lemma, we can write the size of any cut in $Q_{d,k}$ in terms of the Fourier coefficients of its indicator function.

\begin{lem}\label{lemma:fourier-formula_general}
    Let $Q_{d,k}$ be the $d$-dimensional $k$-distance hypercube, and let $Q =(V,E)$ be component of $Q$ (as per Definition~\ref{def:component}). Let $A \subseteq V$, and let  $f \colon V \rightarrow \{0,1\}$ denote the indicator function on $A$. Then 
    $$ |E(A, V \setminus A)| = 2^{d} \sum_{S \subseteq [d]} \lambda_S \widehat f(S)^2.$$
\end{lem}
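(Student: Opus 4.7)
The plan is to mirror the proof of Lemma~\ref{lemma:fourier-formula_k=1} (the $k=1$ case), with the eigenvalue computation from Lemma~\ref{lemma:e_vectors_general} replacing the $2|S|$ eigenvalue of the hypercube Laplacian.

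First, I would reduce the cut inside $Q$ to a Laplacian quadratic form on the \emph{ambient} graph $Q_{d,k}$. The key observation is that, since $Q$ is a component of $Q_{d,k}$, no edge of $Q_{d,k}$ joins a vertex of $V$ to a vertex of $\{0,1\}^d \setminus V$; hence for any $A \subseteq V$,
\[
|E(A, V \setminus A)| \;=\; |E_{Q_{d,k}}(A,\{0,1\}^d \setminus A)|.
\]
Extending $f$ by zero to all of $\{0,1\}^d$, I can then rewrite this as
\[
|E(A, V \setminus A)| \;=\; \sum_{\{x,y\} \in E(Q_{d,k})}\bigl(f(x)-f(y)\bigr)^2 \;=\; f^\top \mathcal L f,
\]
where $\mathcal L$ is the unnormalized Laplacian of $Q_{d,k}$. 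This step is essentially trivial for odd $k$ (where $Q = Q_{d,k}$) and uses only the component structure for even $k$.

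Next, I would expand $f$ in the Fourier basis, $f = \sum_{S \subseteq [d]} \widehat f(S)\, \chi_S$, and apply Lemma~\ref{lemma:e_vectors_general}, which tells us that each $\chi_S$ is an eigenvector of $\mathcal L$ with eigenvalue $\lambda_S$. Using $\|\chi_S\|_2^2 = 2^d$ and orthogonality of distinct characters, this yields
\[
f^\top \mathcal L f \;=\; \sum_{S \subseteq [d]} \lambda_S\, \widehat f(S)^2 \, \|\chi_S\|_2^2 \;=\; 2^d \sum_{S \subseteq [d]} \lambda_S\, \widehat f(S)^2,
\]
which is exactly the claimed identity.

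There is essentially no obstacle: the only subtle point is the component-to-ambient-graph reduction, which is needed so that the Fourier basis on $\{0,1\}^d$ can be used even when $Q$ is only one of the two parity components. Everything else is a direct diagonalization of $f^\top \mathcal L f$ using the eigenvalues from Lemma~\ref{lemma:e_vectors_general} and mirrors Equation~\eqref{eq:L_2} from the $k=1$ proof.
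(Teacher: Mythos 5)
Your proposal is correct and matches the paper's proof: the paper likewise uses the absence of edges between the parity components to express the cut as the quadratic form $f^\top \mathcal L f$ of the Laplacian of the entire graph $Q_{d,k}$ (with $f$ extended by zero), and then diagonalizes via Lemma~\ref{lemma:e_vectors_general} exactly as you do.
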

\begin{proof}
Since there are no edges between $Q_{d,k}^E$ and $Q_{d,k}^O$, we can use the unnormalized Lapacian $\mathcal L$ of the \emph{entire} graph $Q_{d,k}$ to express the cut size of $A$ as  
\begin{equation}\label{eq:L_1_general}
 |E(A, V  \setminus A)| = \sum_{\{ x,y\} \in E}(f(x) - f(y))^2 =  f^\top \mathcal L f. 
\end{equation}
On the other hand, expanding $f$ in the Fourier basis gives
$ f = \sum_{S \subseteq [d]} \widehat f(S) \chi_S. $ By Lemma~\ref{lemma:e_vectors_general}, every Fourier character $\chi_S$ is an eigenvector of $\mathcal L $ with eigenvalue $\lambda_S$, which gives 
\begin{equation}\label{eq:L_2_general}
    f^\top \mathcal L f = \left( \sum_{S\subseteq d} \widehat f(S) \chi_S\right)^\top \mathcal L  \left( \sum_{S\subseteq d} \widehat f(S) \chi_S\right) = \sum_{S \subseteq [d]} \lambda_S \widehat f(S)^2\|\chi_S\|^2_2 = 2^{d} \sum_{S \subseteq [d]} \lambda_S \cdot \widehat f(S)^2.  
\end{equation}
Combining Equations \eqref{eq:L_1_general} and \eqref{eq:L_2_general} gives the lemma. 
\end{proof}

To argue that every sparse cut places most of its Fourier mass on the first two levels, we first need to argue that the eigenvalues $\lambda_S$ with $|S|>1$ are large compared to those with $|S|=1$. 

\begin{lem}\label{lemma:eigengap}
Let $k$ be a positive integer, and let $d = d(k)$ be sufficiently large.  Denote by $\lambda_1$ the eigenvalue corresponding to sets $S \subseteq [d]$ of size $|S|=1$. Then:
\begin{itemize}
    \item If $k$ is odd, then $
        \lambda_S \ge \tfrac{3}{2}\,\lambda_1$ for every $S \subseteq [d]$ with $|S|\ge 2$.
    \item If $k$ is even, then $ \lambda_S \ge \tfrac{3}{2}\,\lambda_1$ for every $S \subseteq [d]$ with $2 \le |S| \le d-2$, and $\lambda_S = \lambda_1$ for 
 every $S \subseteq [d]$ with $|S| = d-1$. 
\end{itemize}
\end{lem}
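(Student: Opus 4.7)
The plan is to analyze the explicit formula \eqref{eq:lambda_def}
\[
\lambda_S \;=\; 2\sum_{\substack{j\in[k]:\\ j\text{ odd}}}\binom{|S|}{j}\binom{d-|S|}{k-j}
\]
by a case analysis on $s := |S|$, comparing $\lambda_s$ to $\lambda_1 = 2\binom{d-1}{k-1}$ by retaining a single term of the sum in each regime.

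First I would dispatch the extremal sizes directly from \eqref{eq:lambda_def}. For $s=d$ only $j=k$ contributes (since $\binom{0}{k-j}=0$ otherwise), yielding $\lambda_d = 2\binom{d}{k}$ when $k$ is odd (with ratio $\lambda_d/\lambda_1 = d/k \ge \tfrac{3}{2}$ for $d\ge 3k$) and $\lambda_d = 0$ when $k$ is even. For $s=d-1$ only $j\in\{k-1,k\}$ can contribute (since $\binom{1}{k-j}=0$ otherwise); keeping the unique summand of odd parity gives $\lambda_{d-1}=2\binom{d-1}{k}$ for $k$ odd (ratio $(d-k)/k\ge \tfrac{3}{2}$ for $d\ge\tfrac{5k}{2}$) and $\lambda_{d-1}=2\binom{d-1}{k-1}=\lambda_1$ exactly for $k$ even, which is the second bullet of the lemma. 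Next I would establish the symmetry $\lambda_s=\lambda_{d-s}$ valid when $k$ is even: substituting $j\mapsto k-j$ and $s\mapsto d-s$ in \eqref{eq:lambda_def} preserves the set of odd-$j$ summands (since $k-j$ has the same parity as $j$ when $k$ is even). This reduces the $k$-even case to $s\in[2,d/2]$.

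For the remaining range ($s\in[2,d/2]$ always, plus $s\in(d/2,d-2]$ for $k$ odd), I would use single-term lower bounds. The $j=1$ term gives $\lambda_s\ge 2s\binom{d-s}{k-1}$ and hence
\[
\frac{\lambda_s}{\lambda_1} \;\ge\; s\prod_{i=0}^{k-2}\frac{d-s-i}{d-1-i}.
\]
For $s=2$ the right-hand side equals $2(d-k)/(d-1)\ge \tfrac{3}{2}$ for $d\ge 4k-3$. For $s\in[3,\tfrac{3}{2}(5/2)^{k-1}]$ each factor is of the form $1-O(ks/d)$, so taking $d\ge Ck(5/2)^k$ for a sufficiently large absolute constant $C$ makes the product at least $3/4$ and yields $\lambda_s/\lambda_1\ge 3s/4\ge \tfrac{9}{4}$. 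For $s\in[\tfrac{3}{2}(5/2)^{k-1},d/2]$, each factor lies in $[\tfrac{2}{5},1]$ once $d$ is large in terms of $k$, so the product is at least $(2/5)^{k-1}$ and $\lambda_s/\lambda_1\ge s(2/5)^{k-1}\ge \tfrac{3}{2}$. Finally, for $k$ odd and $s\in(d/2,d-2]$, the $j=k$ term gives $\lambda_s\ge 2\binom{s}{k}\ge 2\binom{d/2}{k}$, and a parallel product estimate shows $\binom{d/2}{k}/\binom{d-1}{k-1}\ge \tfrac{3}{2}$ for $d$ sufficiently large in terms of $k$.

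The main obstacle is the intermediate range of $s$, where both the $j=1$ and (for $k$ odd) $j=k$ single-term bounds lose a factor exponential in $k$ compared to the typical magnitude $\binom{d}{k}$ of $\lambda_s$. This forces the threshold $d_0(k)$ to grow at least exponentially in $k$, which is acceptable since the lemma only requires $d(k)$ to be sufficiently large; the remaining work is bookkeeping to match the sub-ranges in the $j=1$ bound with the $j=k$ bound (for $k$ odd, $s>d/2$) so that together they cover the full range of $s$ in the statement.
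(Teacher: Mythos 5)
Your proposal is correct and follows essentially the same route as the paper's proof: it works from the explicit Krawtchouk-type formula for $\lambda_S$, keeps the single $j=1$ term for $|S|\le d/2$ and the $j=k$ term for odd $k$ with $|S|>d/2$, and uses the $j\mapsto k-j$ relabeling symmetry $\lambda_s=\lambda_{d-s}$ to handle large $|S|$ when $k$ is even, exactly as in the paper (your case thresholds and explicit handling of $|S|\in\{d-1,d\}$ differ only in bookkeeping).
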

\begin{proof} Let $S \subseteq [d]$.  Recall from \eqref{eq:lambda_def} that $\lambda_S =   2 \sum_{\substack{j \in [k] : \\ j \text{ odd}}} \binom{|S|}{j}\binom{d - |S|}{k-j}$ 
and $\lambda_1 =  2 \sum_{\substack{j \in [k] : \\ j \text{ odd}}} \binom{1}{j}\binom{d - 1}{k-j} = 2{ d-1 \choose k-1}$. 
If $k$ is even and $|S| = d-1$, then 
$$ \lambda_S =  2 \sum_{\substack{j \in [k] : \\ j \text{ odd}}} \binom{d-1}{j}\binom{1}{k-j}= {d-1 \choose k-1}= \lambda_1, $$ since the $j = d-1$ term is the only non-zero term in the sum. 

We now show that $\lambda_S \geq 3/2 \lambda_1$ in the remaining cases. We have  \begin{align*}
		\frac{3}{2} \lambda_1 = \frac{3}{2}\cdot 2{d-1 \choose k-1} = \frac{2}{(k-1)!}(d-1)(d-2) \cdots (d-k+1) \leq  \frac{2}{(k-1)!}d^{k-1}
	\end{align*}
so it suffices to show that
\begin{equation}\label{eq:toshow}
    \sum_{\substack{j \in [k] : \\ j \text{ odd}}} {|S| \choose j} {d-|S| \choose k-j}  \geq \frac{1}{(k-1)!} d^{k-1}.
\end{equation}
Write $s \coloneqq |S|$, and consider four cases: The case $ s \leq 2^k$, the case $2^k \leq s \leq d/2$, the case $d/2 < s \leq d$ and $k$ is odd, and the case $d/2 < s \leq d-2$ and $k$ is even.
\paragraph{Case 1: $2 \leq s \leq 2^k$.} 
    Then we can view $s$ as a constant which is independent of $d$. By only considering the $j=1$ term in the sum in \eqref{eq:toshow}, we get 
 \begin{align*}
		\sum_{\substack{j \in [k] : \\ j \text{ odd}}}{s \choose j} {d-s \choose k-j} & \geq s {d-s \choose k-1 } \\
		&  = \frac{s}{(k-1)!}(d-s)(d-(s+1))\cdots (d-(s+k-2))\\
		&\geq \frac{2}{(k-1)!} d^{k-1}-\Omega(d^{k-2}) \\
        & \geq \frac{1}{(k-1)!}d^{k-1},
	\end{align*}
        where the third transition uses that $s$ is a constant independent of $d$, and the last transition holds for $d$ sufficiently large. 

\paragraph{Case 2: $2^k \leq s \leq d/2$.} By only considering the $j=1$ term of the sum in \eqref{eq:toshow}, we get 
\begin{align*}
\sum_{\substack{j \in [k] : \\ j \text{ odd}}}{s \choose j} {d-s \choose k-j} & \geq s {d-s \choose k-1 } \\
	&  \geq \frac{2^{k}}{(k-1)!}\left( d-\frac{d}{2}\right)\left(d-\left(\frac{d}{2}+1\right)\right) \cdots \left(d-\left(\frac{d}{2}+k-2\right)\right)  \\
		&=  \frac{2}{(k-1)!}d(d-2)\cdots (d-2k+4) \\
		&= \frac{2}{(k-1)!}d^{k-1}-\Omega(d^{k-2}) \\
        & \geq  \frac{1}{(k-1)!}d^{k-1},
\end{align*}
where the second transition uses that $s \geq d/2$, and the last inequality holds for $d$ is sufficiently large. 

\paragraph{Case 3: $d/2 \leq s \leq d$ and $k$ is odd.} By only considering the $j = k$ term of the sum  in \eqref{eq:toshow}, we get 
	$$
	\sum_{\substack{j \in [k] : \\ j \text{ odd}}} {s \choose j} {d-s \choose k-j}   \geq {s \choose k} 
	\geq {d/2 \choose k} 
	= \frac{1}{2^k (k-1)!}d^k - \Omega(d^{k-1}) \geq \frac{1}{(k-1)!}d^{k-1},$$
        where the last inequality holds for $d$ sufficiently large. 
\paragraph{Case 4: $d/2 \leq s \leq d-2$ and $k$ is even.}  Let $g(s)$ denote the sum $g(s) =  \sum_{\substack{j \in [k] : \\ j \text{ odd}}} {s \choose j} {d-s \choose k-j }$. Note that if $j$ is odd and $k$ is even, then $k-j$ is odd, so we can relabel the sum to obtain
	$$ g(s) = \sum_{\substack{j \in [k] : \\ j \text{ odd}}}{s \choose j} {d-s \choose k-j }= \sum_{\substack{j \in [k] : \\ j \text{ odd}}}{s \choose k-j} {d-s \choose k-(k-j) } = g(d-s)\geq  \frac{3}{2} {d-1 \choose k-1},$$ where the last inequality follows by applying Case 1 and Case 2 to $d-s$.
\end{proof}

Using the spectral gap established in the previous lemma, we will show that every sparse cut places most of its Fourier mass on the first two levels. As a first step, we derive a lower bound on the expansion of a cut in terms of the higher-level Fourier coefficients.
\begin{lem}\label{lemma:fourier_lb}
	Let $k$ be an integer, let $d$ be sufficiently large, and let $Q=(V,E)$ be a component of $Q_{d,k}$  (as per Definition~\ref{def:component}). Let $A\subseteq Q$  with $|A|\leq |V|/2$, and let $f:\{0,1\}^d \rightarrow \{0,1\}$ be the indicator function on $A$.
    \begin{itemize}
        \item  If $k$ is odd, then $$|E(A,V \setminus A)| \geq {d-1 \choose k-1}\left(  |A| + 2^d  \sum_{S \subseteq [d]: |S| \geq 2} \widehat{f}(S)^2 \right).$$ 
        \item If $k$ is even, then 	$$|E(A, V\setminus A)| \geq {d-1 \choose k-1}\left(  |A| + 2^d \sum_{S \subseteq [d]:2\leq |S| \leq d-2} \widehat{f}(S)^2  \right).$$ 
    \end{itemize}	
\end{lem}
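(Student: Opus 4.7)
The plan is to combine the Fourier representation of the cut size from Lemma~\ref{lemma:fourier-formula_general} with the eigenvalue estimates of Lemma~\ref{lemma:eigengap}. View $f$ as a function on all of $\{0,1\}^d$ (zero outside $V$ when $k$ is even). Two observations drive the argument: $\lambda_\emptyset = 0$ always, and for $k$ even both $\lambda_{[d]} = 0$ and the parity character $\chi_{[d]}(x) = (-1)^{|x|}$ is constant on $V$, so $\widehat f([d])^2 = \widehat f(\emptyset)^2 = |A|^2/2^{2d}$.

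First I would organize the Fourier mass by level. Writing
\[
T_0 := 2^d\widehat f(\emptyset)^2,\ \ T_1 := 2^d\!\!\sum_{|S|=1}\!\widehat f(S)^2,\ \ T_{\mathrm{mid}} := 2^d\!\!\sum_{2\le|S|\le d-2}\!\!\widehat f(S)^2,\ \ T_{d-1} := 2^d\!\!\sum_{|S|=d-1}\!\widehat f(S)^2,\ \ T_d := 2^d\widehat f([d])^2,
\]
Parseval gives $T_0+T_1+T_{\mathrm{mid}}+T_{d-1}+T_d = |A|$, and Lemma~\ref{lemma:fourier-formula_general}, together with $\lambda_1 = 2\binom{d-1}{k-1}$, expresses $|E(A,V\setminus A)|$ as a nonnegative $\lambda_S$-weighted combination of these buckets.

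For $k$ odd, Lemma~\ref{lemma:eigengap} gives $\lambda_S \ge \tfrac32\lambda_1$ for every $|S|\ge 2$, so I merge the last three buckets into $T_{\ge 2} := T_{\mathrm{mid}}+T_{d-1}+T_d$. The hypothesis $|A|\le |V|/2 = 2^{d-1}$ yields $T_0 = |A|^2/2^d \le |A|/2$, hence $T_1+T_{\ge 2} \ge |A|/2$, and
\[
|E(A,V\setminus A)| \;\ge\; \lambda_1 T_1 + \tfrac32\lambda_1 T_{\ge 2} \;=\; \tfrac{\lambda_1}{2}\bigl(2(T_1+T_{\ge 2}) + T_{\ge 2}\bigr) \;\ge\; \binom{d-1}{k-1}\bigl(|A| + T_{\ge 2}\bigr).
\]
For $k$ even, I use $\lambda_{d-1} = \lambda_1$ and $\lambda_{[d]} = 0$ from Lemma~\ref{lemma:eigengap}. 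The hypothesis $|A|\le |V|/2 = 2^{d-2}$ gives $T_0+T_d = 2|A|^2/2^d \le |A|/2$, so $T_1+T_{\mathrm{mid}}+T_{d-1} \ge |A|/2$, and the same algebra yields
\[
|E(A,V\setminus A)| \;\ge\; \lambda_1 T_1 + \tfrac32\lambda_1 T_{\mathrm{mid}} + \lambda_1 T_{d-1} \;\ge\; \binom{d-1}{k-1}\bigl(|A| + T_{\mathrm{mid}}\bigr).
\]

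The main obstacle is the bookkeeping in the even case: both $|S|=0$ and $|S|=d$ contribute zero to the cut, and $|S|=d-1$ contributes only $\lambda_1$ rather than $\tfrac32\lambda_1$, so these three levels must be carved out separately from the ``middle'' range $2\le|S|\le d-2$. The parity-driven identity $\widehat f([d])^2 = \widehat f(\emptyset)^2$ is exactly what turns the hypothesis $|A|\le |V|/2 = 2^{d-2}$ into the usable budget $T_0+T_d\le |A|/2$, providing the same slack that $|A|\le 2^{d-1}$ affords in the odd case.
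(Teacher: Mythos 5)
Your proposal is correct and follows essentially the same route as the paper: the cut identity of Lemma~\ref{lemma:fourier-formula_general}, the spectral gap of Lemma~\ref{lemma:eigengap} (with $\lambda_\emptyset=\lambda_{[d]}=0$ and $\lambda_{d-1}=\lambda_1$ in the even case), Parseval, and the parity symmetry giving $\widehat f([d])^2=\widehat f(\emptyset)^2$, with your level-by-level bucketing being just an equivalent rearrangement of the paper's ``add and subtract $\lambda_1$'' computation. The only cosmetic nit is that $\lambda_{[d]}=0$ for even $k$ comes from the eigenvalue formula \eqref{eq:lambda_def} rather than from Lemma~\ref{lemma:eigengap} itself.
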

\begin{rem}[Coordinate cuts are sparsest cuts]\label{rem:sparsest_n_connected}
Recall that the coordinate cuts have expansion ${d-1 \choose k-1}$.
Lemma~\ref{lemma:fourier_lb} shows that every cut in the component $Q$ has expansion at least this large, and hence the coordinate cuts are the sparsest cuts. This also implies that each of the components of $Q_{d,k}$ (as per Definition~\ref{def:component}) is connected, so they are exactly the connected components of $Q_{d,k}$.
\end{rem}
\begin{proof}
We have 
\begin{equation}\label{eq:A_bound_general}
    |A| = \sum_{x \in V} f(x)^2  = 2^d \sum_{S \subseteq [d]} \widehat f(S)^2
\end{equation}
and, since $ \widehat f(\emptyset) =  2^{-d} \sum_{x \in V} f(x) = 2^{-d}|A|$, 
\begin{equation}\label{eq:emptyset_bound_general}
\widehat f(\emptyset)^2 =2^{-2d}|A|^2. 
\end{equation}
We now consider the two cases $k$ odd and $k$ even separately. 
\paragraph{Case 1: $k$ odd.} 
Denote by $\lambda_1 = 2{d-1 \choose k-1}$ the eigenvalue corresponding to sets $S \subseteq [d]$ of size $|S|=1$. Note from \eqref{eq:lambda_def}, that $\lambda_{\emptyset} = 0$.
Combining \Cref{lemma:fourier-formula_general}, together with Equations \eqref{eq:A_bound_general} and  \eqref{eq:emptyset_bound_general} gives 
\begin{align*}
    |E(A,V \setminus A)|  &= 2^d\sum_{S \subseteq [d]}\lambda_S \widehat f(S)^2 && \text{by \Cref{lemma:fourier-formula_general}}\\
    & =  2^d\sum_{|S| \geq 2}\left( \lambda_S - \lambda_1\right) \widehat f(S)^2  + 2^d \lambda_1 \sum_{S \subseteq [d]} \widehat f(S)^2 - 2^d \lambda_1 \widehat f(\emptyset)^2 \\
    & \geq 2^d \frac{\lambda_1}{2}\sum_{|S| \geq 2} \widehat f(S)^2   + 2^d \lambda_1 \sum_{S \subseteq [d]} \widehat f(S)^2 - 2^d \lambda_1 \widehat f(\emptyset)^2  && \text{by \Cref{lemma:eigengap}} \\
    & = \frac{\lambda_1}{2} \left( 2^d \sum_{|S| \geq 2} \widehat f(S)^2  +|A| + 2|A|\left( \frac{1}{2} -\frac{|A|}{2^d} \right)\right) && \text{by \eqref{eq:A_bound_general} and \eqref{eq:emptyset_bound_general}} \\
    &\geq {d-1 \choose k-1} \left( 2^d \sum_{|S| \geq 2} \widehat f(S)^2  +|A| \right). 
\end{align*}
Here the last inequality uses $1/2 - 2^{-d}|A| \geq 0$, by the lemma assumption $|A| \leq |V|/2$. 

\paragraph{Case 2: $k$ even.}
Denote by $\lambda_1 = 2{d-1 \choose k-1}$ the eigenvalue corresponding to sets $S \subseteq [d]$ of size $|S|=1$. Note from \eqref{eq:lambda_def}, that $\lambda_{\emptyset} = 0$.
It is not hard to verify that if $f$ is supported only on the odd component $Q_{d,k}^O$, then $\widehat f(S) = -\widehat f([d] \setminus S)$ for all  $S \subseteq [d]$, and if $f$ supported only on the even component $Q_{d,k}^E$, then $\widehat f(S) = \widehat f([d] \setminus S)$  for all $S \subseteq [d]$. In particular,
\begin{equation}\label{eq_f_d_bound}
\widehat f([d])^2 = \widehat f(\emptyset)^2 = 2^{-2d}|A|^2.
\end{equation}
Combining \Cref{lemma:fourier-formula_general}, together with Equations \eqref{eq:A_bound_general} and  \eqref{eq_f_d_bound} now gives 

\begin{align*}
	|E(A,V \setminus A)|  &= 2^d\sum_{S \subseteq [d]}\lambda_S \widehat f(S)^2 && \text{by \Cref{lemma:fourier-formula_general}}\\
	& =  2^d\sum_{2 \leq |S| \leq d-2}\left( \lambda_S - \lambda_1\right) \widehat f(S)^2  + 2^d \lambda_1 \sum_{S \subseteq [d]} \widehat f(S)^2 - 2^d \lambda_1 \left(\widehat f(\emptyset)^ 2 + \widehat f([d])^2\right) \\
	& \geq 2^d \frac{\lambda_1}{2}\sum_{|2 \leq |S| \leq d-2} \widehat f(S)^2   + 2^d \lambda_1 \sum_{S \subseteq [d]} \widehat f(S)^2  - 2^d \lambda_1 \left(\widehat f(\emptyset)^ 2 + \widehat f([d])^2\right) && \text{by \Cref{lemma:eigengap}} \\
	& = \frac{\lambda_1}{2} \left( 2^d \sum_{2 \leq |S| \leq d-2} \widehat f(S)^2  +|A| +2 |A|\left( \frac{1}{2} -\frac{|A|}{2^{d-1}} \right)\right) && \text{by \eqref{eq:A_bound_general} and \eqref{eq_f_d_bound}} \\
	&\geq {d-1 \choose k-1} \left( 2^d \sum_{|S| \leq 2} \widehat f(S)^2  +|A| \right).
\end{align*}
Here the last inequality uses that $1/2 -|A|/2^{d-1}\geq 0 $, by the lemma assumption $|A| \leq |V|/2$. 
\end{proof}

As a corollary of Lemma~\ref{lemma:fourier_lb}, sparse cuts place almost all of their Fourier mass on the first two levels (and, in the even-$k$ case, also on the top two levels).
\begin{cor}\label{cor:bottom_two_levels}
    Let $k$ be an integer, let $d$ be sufficiently large and let $Q=(V,E)$ be a component of $Q_{d,k}$ (as per Definition~\ref{def:component}). Let $A$ be a subset  $Q$  with $|A|\leq |V|/2$ and suppose that $|E(A, V \setminus A)| \leq (1+\epsilon){d-1 \choose k-1}|A|$. Let $f$ denote the indicator function of $A$. Then 
    \begin{itemize}
        \item If $k$ is odd, then $ \sum_{|S| \geq 2} \widehat f(S)^2 \leq \frac{\epsilon}{2}.$
        \item If $k$ is even, then $ \sum_{2 \leq |S| \leq d-2} \widehat f(S)^2 \leq \frac{\epsilon}{2}.$
    \end{itemize}
\end{cor}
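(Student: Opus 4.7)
The statement is a direct consequence of Lemma~\ref{lemma:fourier_lb} combined with the hypothesis $|A| \le |V|/2 = 2^{d-1}$, so the plan is a one-line manipulation in each of the two cases.

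The plan is as follows. First, in the odd $k$ case, I combine the hypothesis $|E(A, V\setminus A)| \le (1+\epsilon)\binom{d-1}{k-1}|A|$ with the lower bound from Lemma~\ref{lemma:fourier_lb}, namely
\[
|E(A, V\setminus A)| \;\ge\; \binom{d-1}{k-1}\Bigl(|A| + 2^d \sum_{|S|\ge 2} \widehat f(S)^2\Bigr).
\]
Cancelling the common factor $\binom{d-1}{k-1}$ and subtracting $|A|$ from both sides yields $\epsilon |A| \ge 2^d \sum_{|S|\ge 2} \widehat f(S)^2$, so
\[
\sum_{|S|\ge 2} \widehat f(S)^2 \;\le\; \frac{\epsilon |A|}{2^d} \;\le\; \frac{\epsilon}{2},
\]
where the last inequality uses the assumption $|A| \le |V|/2 = 2^{d-1}$.

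The even $k$ case is identical, except that the Fourier sum ranges over $2 \le |S| \le d-2$, using the corresponding lower bound from Lemma~\ref{lemma:fourier_lb}. There is no real obstacle here: the nontrivial work (establishing the spectral gap between level $1$ eigenvalues and the rest of the spectrum, and converting it into a lower bound on the cut expansion) has already been done inside Lemma~\ref{lemma:eigengap} and Lemma~\ref{lemma:fourier_lb}. The corollary itself is just the rearrangement that isolates the higher-level Fourier mass.
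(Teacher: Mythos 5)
Your proposal is correct and matches the paper's proof essentially verbatim: apply Lemma~\ref{lemma:fourier_lb}, cancel $\binom{d-1}{k-1}$, and bound $\epsilon|A| \le \epsilon 2^{d-1}$. One cosmetic note: when $k$ is even the component has $|V| = 2^{d-1}$, so $|A| \le |V|/2 = 2^{d-2}$ rather than $2^{d-1}$, which only strengthens the inequality and changes nothing in the conclusion.
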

\begin{proof}
Suppose that $k$ is odd. By Lemma~\ref{lemma:fourier_lb}, 
$$ (1+\epsilon){d-1 \choose k-1}|A| \geq |E(A, V \backslash A)|  \geq {d-1 \choose k-1}\left(  |A| + 2^d  \sum_{S \subseteq [d]: |S| \geq 2} \widehat{f}(S)^2 \right),$$
which gives 
$$ \epsilon 2^{d-1} \geq \epsilon|A| \geq 2^d  \sum_{S \subseteq [d]: |S| \geq 2} \widehat{f}(S)^2. $$
Suppose instead that $k$ is even. By Lemma~\ref{lemma:fourier_lb}, 
$$ (1+\epsilon){d-1 \choose k-1}|A| \geq |E(A, V \backslash A)|  \geq {d-1 \choose k-1}\left(  |A| + 2^d  \sum_{S \subseteq [d]: 2 \leq |S| \leq d-2} \widehat{f}(S)^2 \right),$$
which gives 
$$ \epsilon 2^{d-1} \geq \epsilon|A| \geq 2^d  \sum_{S \subseteq [d]: 2 \leq |S| \leq d-2} \widehat{f}(S)^2. $$
\end{proof}
We now wish to apply the FKN theorem (Theorem~\ref{thm:fkn}) to argue that every sparse cut must be close to a coordinate cut. We can do that in the case when $k$ is odd. However, when $k$ is even, we are in a slightly different setting, as $f$ also puts Fourier mass also on the top two levels. Therefore, we need to extend the FKN theorem to the case of even $k$. 

\begin{restatable}[FKN theorem for even $k$]{lem}{evenfkn}\label{lemma:fkn_even}
	Let $\1_E$ denote the indicator function of the even component $Q_d^E \coloneqq \{ x \in \{0,1\}^d : |x| \equiv 0 \pmod{2} \}$. 
 Suppose that $f \colon \{0,1\}^d \rightarrow \{0,1\}$ is a boolean function supported on $Q_d^E$ such that $\| f \|_2^2 = \frac{1}{4}$ and $\sum_{2 \leq  |S|\leq d-2} \widehat{f}(S)^2 \leq \delta$. Then  there exists an index $i \in [d]$ such that 
 $\| \1_E \cdot (f(x_1, x_2, \dots, x_d)-x_i)\| ^2_2\leq K \delta$ 
 or $\|\1_E \cdot (f(x_1, x_2, \dots, x_d)-(1-x_i))\|^2_2 \leq K \delta$. Here $K$  is an absolute constant. 
\end{restatable}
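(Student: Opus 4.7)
The plan is to reduce $f$ on $Q_d^E$ to a boolean function $F:\{0,1\}^{d-1}\to\{0,1\}$ via the bijection $y \leftrightarrow (y,\,y_1 \oplus \cdots \oplus y_{d-1})$ between $\{0,1\}^{d-1}$ and $Q_d^E$, and then carry out an FKN-style analysis adapted to the fact that $F$ may carry significant Fourier mass both at level $1$ and at the single ``top'' Fourier mode $\chi_{[d-1]}$. Setting $F(y) := f(y,\,y_1\oplus \cdots\oplus y_{d-1})$, a direct Fourier computation yields $\widehat F(T) = 2\widehat f(T)$ for every $T\subseteq[d-1]$; combined with the symmetry $\widehat f(S) = \widehat f([d]\setminus S)$ coming from $f = f\cdot\mathbf{1}_E$, the hypothesis becomes ``$F$ boolean, balanced, and $\sum_{2\le|T|\le d-2}\widehat F(T)^2\le 4\delta$''. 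Under the bijection, the $2d$ target functions $\mathbf{1}_E\cdot x_i$ and $\mathbf{1}_E\cdot(1-x_i)$ for $i\in[d]$ correspond to $y_i,\,1-y_i$ for $i<d$ together with $\mathrm{parity}(y),\,1-\mathrm{parity}(y)$ (for $i=d$), so it suffices to show that $F$ is $O(\delta)$-close in $L^2$ to one of these.

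Writing $F = \tfrac{1}{2} + \sum_i L_i\chi_i + a\chi_{[d-1]} + R$ with $L_i=\widehat F(\{i\})$, $a=\widehat F([d-1])$, $\|R\|_2^2\le 4\delta$, and setting $\sigma^2 := \sum_i L_i^2$, Parseval together with $\|F\|_2^2=1/2$ gives $\sigma^2 + a^2 = \tfrac{1}{4} - \|R\|_2^2$. Working with $\tilde F = 2F-1\in\{\pm 1\}$ and its approximation $\tilde L = 2\sum_i L_i\chi_i + 2a\chi_{[d-1]}$, the pointwise relation $\tilde L^2 - 1 = -2\tilde F R' + (R')^2$ (where $R' = 2R$) is compared to the explicit expansion
\[
\tilde L^2 \;=\; 4\sigma^2 + 4a^2 + 8\sum_{i<j} L_iL_j\,\chi_{\{i,j\}} + 8a\sum_i L_i\,\chi_{[d-1]\setminus\{i\}}.
\]
Reading off Fourier coefficients at level $2$ and level $d-2$, and bounding the RHS at these two slices via $|\tilde F|=1$ and Cauchy-Schwarz, yields the FKN-style structural identities $\sum_{i<j} L_i^2L_j^2 = O(\delta)$ and $a^2\sigma^2 = O(\delta)$. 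The latter, together with $\sigma^2+a^2\approx 1/4$, forces $\min(a^2,\sigma^2) = O(\delta)$.

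The proof concludes by a case split. If $a^2 = O(\delta)$, the total Fourier mass of $F$ above level~$1$ is $a^2 + \|R\|_2^2 = O(\delta)$, so Theorem~\ref{thm:fkn} applies to $F$ with parameter $O(\delta)$; balance rules out the constant options, producing $F$ that is $O(\delta)$-close to some $y_{i^*}$ or $1-y_{i^*}$ with $i^*\in[d-1]$. Otherwise $\sigma^2 = O(\delta)$; then $\|F - (\tfrac12 + a\chi_{[d-1]})\|_2^2 = \sigma^2+\|R\|_2^2 = O(\delta)$, and $a^2 = 1/4 - O(\delta)$ gives $|a| = 1/2 - O(\delta)$, so $\tfrac12 + a\chi_{[d-1]}$ is in turn $O(\delta^2)$-close to $\tfrac12 + \mathrm{sign}(a)\cdot\tfrac12\chi_{[d-1]}$, i.e., to $\mathrm{parity}(y)$ or $1-\mathrm{parity}(y)$. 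Pulling either conclusion back through the bijection yields the desired bound $\|\mathbf{1}_E\cdot(f-\text{target})\|_2^2 \le K\delta$.

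The main technical obstacle is obtaining $a^2\sigma^2 = O(\delta)$ without paying a dimension-dependent factor from hypercontractivity on $R'$, which has Fourier support spanning all of levels $2$ through $d-2$ (so a naive $L^4$ bound on $R'$ would carry a factor $3^{d/2}$). As in the original FKN proof, the resolution is to exploit $|\tilde F|=1$ to control the slice-wise Parseval mass of $\tilde F R'$ at the single levels $2$ and $d-2$, rather than bounding $\|(R')^2\|_2$ globally; the extension of FKN handling the extra high-frequency direction $\chi_{[d-1]}$ in $\tilde L$ is precisely what yields the second identity $a^2\sigma^2 = O(\delta)$, coming from the new cross-term $8a \sum_i L_i\,\chi_{[d-1]\setminus\{i\}}$ at level $d-2$ of $\tilde L^2$.
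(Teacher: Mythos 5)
Your reduction to $F(y)\coloneqq f(y,\,y_1\oplus\cdots\oplus y_{d-1})$ is sound: indeed $\widehat F(T)=2\widehat f(T)$ for $T\subseteq[d-1]$, $F$ is boolean and balanced, its Fourier mass on levels $2,\dots,d-2$ of the $(d-1)$-cube is at most $4\delta$, and the targets pull back correctly (the parity of $y$ corresponds to $x_d$ on $Q_d^E$, so the ``$\sigma^2$ small'' case lands on $\1_E\cdot x_d$ or $\1_E\cdot(1-x_d)$). This is a genuinely different route from the paper, which explicitly sets this reduction aside (because the mode $\widehat F([d-1])$ is uncontrolled) and instead re-runs FKN's Proof~II on $\{0,1\}^d$ via $R_1=2S_1^2-S_1$, $R_2=2S_2^2-S_2$, hypercontractivity for degree-two polynomials, and a bootstrap. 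Your endgame is also fine: from $\sigma^2+a^2=\tfrac14-\|R\|_2^2$ and $a^2\sigma^2=O(\delta)$ one gets $\min(a^2,\sigma^2)=O(\delta)$, the case $a^2=O(\delta)$ follows by citing Theorem~\ref{thm:fkn} for $F$ (after dismissing the trivial regime $\delta=\Omega(1)$), and the case $\sigma^2=O(\delta)$ gives closeness to parity.

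The gap is that the single estimate carrying your proof, $a^2\sigma^2=O(\delta)$ (the level-$2$ identity is not even needed, since you invoke FKN as a black box when $a^2=O(\delta)$), is asserted rather than proved, and the mechanism you sketch does not suffice as stated. In $\tilde L^2-1=-2\tilde FR'+(R')^2$, the combination of $|\tilde F|=1$ with $\|\mathrm{Proj}_k g\|_2\le\|g\|_2$ controls only the slice mass of $\tilde FR'$, by $\|R'\|_2=O(\sqrt{\delta})$; it says nothing about $\mathrm{Proj}_{d-2}\bigl((R')^2\bigr)$, whose generic bound is $\|R'\|_4^2$ --- precisely the quantity that cannot be bounded by $O(\|R'\|_2^2)$ without a dimension-dependent hypercontractive factor, since $R'$ is supported on all levels $2,\dots,d-2$. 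This is the crux of every FKN-type statement, and it is exactly why the paper applies hypercontractivity only to low-complexity polynomials and then bootstraps through a tail bound; your remark ``as in the original FKN proof'' does not correspond to an argument there. The step can be repaired inside your framework: since $\tilde F^2=1$, one has $\tilde L^2-1=-(\tilde F+\tilde L)R'$, and because $\tilde L$ is supported only on level $1$ and on $\chi_{[d-1]}$ with $\|\tilde L\|_2^2\le 1$, a Cauchy--Schwarz in the $\tilde L$-coefficients together with the observation that each Fourier coefficient of $R'$ is hit only $O(1)$ times across a fixed slice gives $\|\mathrm{Proj}_{2}(R'\tilde L)\|_2^2,\ \|\mathrm{Proj}_{d-2}(R'\tilde L)\|_2^2=O(\|R'\|_2^2)=O(\delta)$, which, combined with the $\tilde FR'$ bound, yields both structural identities without hypercontractivity. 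But this argument (or an import of the paper's hypercontractive bootstrap) must actually be written out; as it stands, the central inequality is missing. Minor loose ends: the trivial regime $\delta=\Omega(1)$ and small values of $d$ (where the slices $2$ and $d-2$ you compare can coincide) should be dispatched explicitly.
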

The proof is similar to the proof of the FKN theorem in \cite{fkn}, and is included in Section~\ref{sec:fkn_pf}. 

We can now argue that sparse cuts are close to coordinate cuts. 

\begin{lem}[Sparse cuts are close to coordinate cuts]\label{lemma:sparse-close-to-coord_general}
Let $k$ be an integer and let $d = d(k)$ be sufficiently large. Let $Q=(V,E)$ be a component of $Q_{d,k}$ (as per Definition~\ref{def:component}). 
Let $A \subseteq Q$ with $|A| =  |V|/2$. If $|E(A, V \setminus A)|  \leq (1+\epsilon){d-1 \choose k-1}|A|$, then there exists a coordinate cut $S$ in $Q$ (as per Definition~\ref{def:component})such that  $$|A \triangle S| \leq K \cdot \epsilon 2^d.$$ Here $K$ is an absolute constant.  
\end{lem}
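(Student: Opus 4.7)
The plan is to follow the same template as Lemma~\ref{lemma:sparse-close-to-coord}, but replace the two Fourier ingredients by their $k$-distance analogues. First, I would dispose of the trivial regime: if $\epsilon \geq 1/K$ for any fixed absolute constant $K$, then $|A \triangle S| \leq |V| \leq 2^d \leq K\epsilon\cdot 2^d$ holds for any $S$, so I may assume $\epsilon$ is as small as I need.

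Next, I would convert the spectral hypothesis into a Fourier-mass statement by invoking Corollary~\ref{cor:bottom_two_levels}: from $|E(A,V\setminus A)| \leq (1+\epsilon)\binom{d-1}{k-1}|A|$, I obtain $\sum_{|S|\geq 2} \widehat f(S)^2 \leq \epsilon/2$ when $k$ is odd, and $\sum_{2\leq |S| \leq d-2} \widehat f(S)^2 \leq \epsilon/2$ when $k$ is even. Here $f$ is the indicator of $A$ viewed as a function on $\{0,1\}^d$ (with $f\equiv 0$ outside the component $Q$). Note that since $|A| = |V|/2$, in the odd case $\|f\|_2^2 = 1/2$ while in the even case $\|f\|_2^2 = 2^{-d}|A| = 1/4$.

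For odd $k$, I would feed $f$ directly into the FKN theorem (Theorem~\ref{thm:fkn}) with $\delta = \epsilon/2$. Since $\|f\|_2^2 = 1/2$, the first two alternatives of FKN (that $f$ has very small or very large average) are excluded whenever $\epsilon < 1/K'$, so I land in one of the cases $\|f - x_i\|_2^2 \leq K\epsilon/2$ or $\|f - (1-x_i)\|_2^2 \leq K\epsilon/2$ for some $i$. Translating back via $\|f - x_i\|_2^2 = 2^{-d}|A\triangle S_{i,1}|$ gives $|A \triangle S_{i,b}| \leq K\epsilon \cdot 2^{d-1} \leq K\epsilon\cdot 2^d$ for a suitable coordinate cut $S = S_{i,b}$, which equals $S \cap Q$ since $Q = Q_{d,k}$ in the odd case.

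For even $k$, $Q$ is either $Q_d^E$ or $Q_d^O$ and $f$ is supported on $Q$, so I cannot apply the standard FKN; this is the main obstacle, and it is precisely what Lemma~\ref{lemma:fkn_even} (together with the obvious $Q_d^O$ analogue obtained by XORing coordinates with $e_1$) was set up to handle. Applied with $\delta = \epsilon/2$, it yields an index $i$ and a bit $b$ such that $\|\1_Q \cdot (f - (x_i\oplus b))\|_2^2 \leq K\epsilon/2$. Since $f$ is $\{0,1\}$-valued, the left-hand side equals $2^{-d}|A \triangle (S_{i,b}\cap Q)|$, so $|A \triangle (S_{i,b}\cap Q)| \leq K\epsilon \cdot 2^{d-1} \leq K\epsilon \cdot 2^d$, and $S_{i,b}\cap Q$ is a coordinate cut in $Q$ in the sense of Definition~\ref{def:component}. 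Adjusting the constant $K$ to absorb the factor of $2$ and the trivial regime completes the proof.
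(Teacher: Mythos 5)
Your proposal is correct and follows essentially the same route as the paper: Corollary~\ref{cor:bottom_two_levels} to move the spectral hypothesis into Fourier-mass form, then the FKN theorem (Theorem~\ref{thm:fkn}) for odd $k$ and Lemma~\ref{lemma:fkn_even} for even $k$. You merely spell out details the paper leaves implicit (the trivial large-$\epsilon$ regime ruling out the degenerate FKN alternatives, the translation of the $L_2$ bound into Hamming distance, and the XOR reduction of $Q_d^O$ to $Q_d^E$, which the paper dismisses as ``without loss of generality'').
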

\begin{proof}
If $k$ is odd, then the lemma follows from Corollary~\ref{cor:bottom_two_levels} and the FKN theorem (Theorem~\ref{thm:fkn}). 

If $k$ is even, then we can without loss of generality assume that $f$ is supported on the even component, and the lemma follows from Corollary~\ref{cor:bottom_two_levels} and Lemma~\ref{lemma:fkn_even}. 
\end{proof}

Next, we want to apply Karger's cut-counting theorem (Theorem~\ref{thm:karger}) and a Chernoff bound (Lemma~\ref{lemma:chernoff}). 
Similarly to the proof of Theorem~\ref{thm:cube_main}, we need to establish a strong bound on the size of $\partial(A) \triangle \partial(S)$ and $\partial(A \triangle S)$.

\begin{lem}\label{lem:small_cut_general}
Let $Q=(V,E)$ be component of $Q_{d,k}$  (as per Definition~\ref{def:component}) and let $A \subseteq Q$ be a set with $|A| =  |V|/2$ and $|\partial(A)| \leq (1+\epsilon){d-1 \choose k-1}|A|$. Let $S$ be a coordinate cut in $Q$ such that $|A \triangle S| \leq K \cdot \epsilon2^d$ (exists by Lemma~\ref{lemma:sparse-close-to-coord_general}). Then
    $$ |\partial(A) \triangle \partial(S)| = |\partial(A \triangle S)| \leq C\cdot \epsilon { d-1 \choose k-1} |A|. $$
    Here $C$ is an absolute constant. 
\end{lem}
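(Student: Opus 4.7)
The plan is to mirror the proof of Lemma~\ref{lem:small_cut}, with the single quantitative change that each vertex now carries ${d-1 \choose k-1}$ edges crossing a given coordinate cut rather than one. The identity $\partial(A)\triangle\partial(S) = \partial(A\triangle S)$ is a purely set-theoretic fact (an edge has exactly one endpoint change sides from $A$ to $S$ iff it crosses $A\triangle S$), so it gives the equality in the conclusion regardless of the underlying graph.

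First I would set $A^+ \coloneqq A\setminus S$ and $A^- \coloneqq S\setminus A$, partition $V = (A\cap S)\sqcup A^- \sqcup A^+ \sqcup (\overline S\setminus A)$, and re-derive Claim~\ref{claim:bdry_main} verbatim (the argument uses only that $A$ and $S$ split along this common refinement, so it is insensitive to $k$):
\[
|\partial(A^+)| + |\partial(A^-)| \;\leq\; |\partial(A)| - |\partial(S)| + 2|E(A^+, S)| + 2|E(A^-, \overline S)|.
\]

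Next I would bound the two terms on the right. For the crossing terms, every vertex of $Q_{d,k}$ has exactly ${d-1 \choose k-1}$ edges across any coordinate cut, so
\[
|E(A^+, S)| + |E(A^-, \overline S)| \;\leq\; \binom{d-1}{k-1}\bigl(|A^+|+|A^-|\bigr) \;=\; \binom{d-1}{k-1}|A\triangle S| \;\leq\; K\,\epsilon\binom{d-1}{k-1} 2^d,
\]
using the hypothesis $|A\triangle S|\le K\epsilon 2^d$. For the boundary difference, counting edges vertex-by-vertex gives $|\partial(S)| = \binom{d-1}{k-1}|A|$ in both the odd-$k$ case ($Q=Q_{d,k}$, $|A|=2^{d-1}$) and the even-$k$ case (a parity component $Q$ with $|A|=2^{d-2}$; here all edges of $Q_{d,k}$ stay within the component, so the count is the same). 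The hypothesis $|\partial(A)|\le (1+\epsilon)\binom{d-1}{k-1}|A|$ then yields $|\partial(A)|-|\partial(S)| \le \epsilon\binom{d-1}{k-1}|A|$.

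Combining these with Claim~\ref{claim:bdry_main} and using $2^d \le 4|A|$ (which holds in both the odd and even component cases since $|A|=|V|/2$ and $|V|\in\{2^d,2^{d-1}\}$) gives
\[
|\partial(A\triangle S)| \;\leq\; \epsilon\binom{d-1}{k-1}|A| + 2K\epsilon\binom{d-1}{k-1}2^d \;\leq\; (1+8K)\,\epsilon\binom{d-1}{k-1}|A|,
\]
establishing the lemma with $C = 1+8K$. I do not anticipate any real obstacle: the only two ingredients of the $k=1$ proof are the partition identity (which is graph-agnostic) and the per-vertex bound on edges across a coordinate cut (which simply scales from $1$ to $\binom{d-1}{k-1}$). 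The only subtlety worth double-checking is the even-$k$ case, to confirm that restricting to a parity component does not change the boundary counts — but since every edge of $Q_{d,k}$ lies inside a single parity component when $k$ is even, the identities $|\partial(S)|=\binom{d-1}{k-1}|A|$ and the degree bound for crossing edges both transfer unchanged.
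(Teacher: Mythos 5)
Your proposal is correct and follows essentially the same route as the paper: the set-theoretic identity $\partial(A)\triangle\partial(S)=\partial(A\triangle S)$, the four-way partition with Claim~\ref{claim:bdry_main} reused verbatim, and the per-vertex bound of $\binom{d-1}{k-1}$ crossing edges for a coordinate cut. Your bookkeeping in terms of $|A|$ (via $2^d\le 4|A|$, covering both the odd-$k$ and parity-component cases) is a slightly cleaner way to land on the stated bound, differing from the paper only in the value of the absolute constant.
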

\begin{proof}
The proof is almost identical to the proof of Lemma~\ref{lem:small_cut}. 
It is straightforward to verify that for every pair of sets $T_1, T_2$, it holds that $\partial(T_1) \triangle \partial(T_2) =  \partial(T_1 \triangle T_2)$, which gives the first inequality.  We now prove the equality. 
Let $A^+ \coloneqq A \setminus S$ and $A^- \coloneqq S \setminus A$. Furthermore, write $\overline S  = V \setminus S$. Then $V$ is partitioned into the four sets $A \cap S$, $A^-$, $A^+$ and $\overline S \setminus A$. We have 
\begin{claim}\label{claim:bdry_main_genral}
     $ |\partial(A^+)| +  |\partial(A^-)|  \leq |\partial(A)| - |\partial(S)| +2|E(A^+, S)| + 2|E(A^-,\overline S)| . $
\end{claim}
\begin{proof}
    Identical to the proof Claim~\ref{claim:bdry_main}.
\end{proof}
\noindent
To continue, note that the edges in $E(A^+, S) $ and in $E(A^-, \overline S)$ are crossing the coordinate cut~$S$. Since $S$ is a coordinate cut, every vertex can have at most ${d-1 \choose k-1}$  edges crossing $S$ incident on it. Therefore, 
\begin{equation}\label{eq:bdry_aux_general}
|E(A^+, S)| + |E(A^-, \overline S)| \leq {d-1 \choose k-1}\left( |A^+| + |A^-| \right)= {d-1 \choose k-1}|A \triangle S| \leq K \cdot \epsilon {d-1 \choose k-1}2^{d-1},
\end{equation}
where the last inequality follows from the lemma assumption. 
Combining with Claim~\ref{claim:bdry_main_genral}, and recalling the lemma assumption $|\partial(A)| \leq (1+ \epsilon){d-1 \choose k-1}|A| \leq (1+ \epsilon){d-1 \choose k-1}2^{d-1}$, gives
\begin{align*}
 |\partial(A\triangle S)| & \leq |\partial(A^-)| + |\partial(A^+)| &&  \\
 & \leq |\partial(A)| - |\partial(S)| + 2|E(A^+, S)| + 2|E(A^-, \overline S)|  && \text{by \Cref{claim:bdry_main_genral}} \\
 & \leq |\partial(A)| - |\partial(S)| +2K \cdot \epsilon{d-1 \choose k-1}2^{d-1} && \text{by Equation \eqref{eq:bdry_aux_general}}\\
 & \leq (1+\epsilon){d-1 \choose k-1}|A| - {d-1 \choose k-1}|S| + 2K \cdot \epsilon {d-1 \choose k-1}2^{d-1}  && \text{by the lemma assumption} \\
 & \leq C \cdot \epsilon {d-1 \choose k-1} 2^{d-1} && \text{for $C = 2K +2$},
\end{align*}
where the last inequality uses the assumption that $|A| = |S| = |V|/2 \leq 2^{d-1}$. 
\end{proof}
To apply Karger's cut-counting theorem, we also need to establish the size of the minimum cuts. 

\begin{lem}[The singleton cuts are min-cuts]\label{lemma:min_cut_general}
Let $k$ be an integer, and let $d$ be sufficiently large. Let $Q=(V,E)$ be a component of $Q_{d,k}$ (as per Definition~\ref{def:component}). Then $$\min_{A\subseteq V: 1 \leq |A|\leq |V|/2} |E(A, V \setminus A)| = {d \choose k}.$$  
\end{lem}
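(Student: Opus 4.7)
The upper bound is immediate: for any vertex $v$ of the component $Q$, the singleton cut $\{v\}$ satisfies $|E(\{v\}, V\setminus\{v\})| = \deg_Q(v) = \binom{d}{k}$. When $k$ is even, a quick parity check confirms that all $\binom{d}{k}$ distance-$k$ neighbors of $v$ lie in the same parity class as $v$, so the degree of $v$ inside $Q^E_{d,k}$ or $Q^O_{d,k}$ is the full $\binom{d}{k}$. It therefore remains to prove the matching lower bound, for which I plan to split on the size of $A$.

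In the regime $1 \le |A| \le \binom{d}{k}$, my plan is to sidestep Fourier analysis altogether and use the handshake identity
$$|E(A, V\setminus A)| \;=\; |A|\binom{d}{k} - 2|E(A)|,$$
valid in $Q$ because every vertex of $Q$ has degree exactly $\binom{d}{k}$. Combined with the trivial upper bound $|E(A)| \le \binom{|A|}{2}$ this yields
$$|E(A, V\setminus A)| \;\ge\; |A|\left(\binom{d}{k} - |A| + 1\right),$$
a downward-opening parabola in $|A|$ on the interval $[1,\binom{d}{k}]$ whose minimum over that interval is attained at both endpoints and equals $\binom{d}{k}$.

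In the complementary regime $\binom{d}{k} < |A| \le |V|/2$, my plan is to invoke Lemma~\ref{lemma:fourier_lb} directly. Dropping the nonnegative Fourier sum from the bound of Lemma~\ref{lemma:fourier_lb} gives $|E(A, V\setminus A)| \ge \binom{d-1}{k-1}|A|$, and since $|A| > \binom{d}{k}$ and $\binom{d-1}{k-1} \ge 1$, the right-hand side strictly exceeds $\binom{d}{k}$. Combining the two regimes covers the full range $1 \le |A| \le |V|/2$ and completes the proof.

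The only mild obstacle I anticipate is verifying the degree calculation in the even-$k$ case, which is needed to justify the handshake identity in the component $Q^E_{d,k}$ or $Q^O_{d,k}$ without edges leaking to the opposite-parity component. This is a direct parity-of-Hamming-weight argument ($|v \oplus \mathbf{1}_T| \equiv |v| + |T| \pmod 2$), but it is the one place where the even-$k$ case requires a line of dedicated bookkeeping.
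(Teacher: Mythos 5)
Your proposal is correct and follows essentially the same route as the paper: the singleton cut gives the upper bound, small sets are handled by the degree/handshake counting bound $|E(A,V\setminus A)|\ge |A|\bigl(\binom{d}{k}-|A|+1\bigr)$, and large sets by the expansion bound $|E(A,V\setminus A)|\ge\binom{d-1}{k-1}|A|$ from Lemma~\ref{lemma:fourier_lb}. The only differences are cosmetic — you split the cases at $|A|=\binom{d}{k}$ rather than at $\binom{d}{k}/\binom{d-1}{k-1}=d/k$, and you make the even-$k$ parity check of the degree explicit, which the paper leaves implicit.
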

\begin{proof}
If $|A| = 1$, then $|E(A, V \setminus A)| = {d \choose k}.$ 
For the rest of the proof, we consider sets $A$ of size $2 \le |A| \le |V|/2$ and split into two cases according to $|A|$.
\paragraph{Case 1: $|A| \geq { d \choose k}/{d-1 \choose k-1}$.}
By Lemma~\ref{lemma:fourier_lb}, 
$$ |E(A, V \setminus A)| \geq {d-1 \choose k-1}|A| \geq {d \choose k}.$$
\paragraph{Case 2: $2 \leq |A| <  { d \choose k}/{d-1 \choose k-1}$.}
Each vertex in $Q$ has degree $d \choose k$. Every vertex in $A$ can have at most $|A|-1$ neighbors in $|A|$, so it must have at least ${d \choose k}-|A|+1$ edges to $V \setminus A$. Therefore,
$$ |E(A, V \setminus A)|  \geq |A|\cdot \left({d \choose k}-|A|+1\right) = \left( {d \choose k}-|A| \right) \left(|A|-1\right)+ {d \choose k} > {d \choose k},$$
where the last inequality follows from ${d \choose k} > { d \choose k}/{d-1 \choose k-1}  \geq |A| \geq 2$. 
\end{proof}
We can now apply Karger's cut-counting theorem to count the number of cuts of size $(1+\epsilon){d-1 \choose k-1}2^{d-1}$.
\begin{lem}\label{lemma:cutcounting_general}
Let $Q=(V,E)$ be a  component of $Q_{d,k}$ (as per Definition~\ref{def:component}), and let $S$ be a coordinate cut in $Q$. For every $\epsilon >0$, the number of sets $A \subseteq Q$ of size $|A| = |V|/2$ such that $|E(A, V \setminus A)| \leq (1+2\epsilon){d-1 \choose k-1}|A|$ and $|A \triangle S| \leq K \cdot \epsilon 2^d$
is at most $\exp\left(2^d O(\e/d) \log(d/\e)\right)$.
\end{lem}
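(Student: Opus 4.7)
The plan is to mirror the proof of Lemma~\ref{lemma:cutcounting} almost verbatim, replacing the role of the hypercube's minimum cut value $d$ (from Lemma~\ref{lemma:min_cut}) with ${d \choose k}$ (from Lemma~\ref{lemma:min_cut_general}), and the edge boundary bound from Lemma~\ref{lem:small_cut} with its $k$-distance analog in Lemma~\ref{lem:small_cut_general}. Since the coordinate cut $S$ is fixed, the map $A \mapsto A\triangle S$ is a bijection onto subsets of $V$, so it suffices to enumerate the admissible choices of $T \coloneqq A \triangle S$.

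First I would apply Lemma~\ref{lem:small_cut_general} to obtain
\[
|\partial(T)| = |\partial(A \triangle S)| \leq C\cdot \epsilon \binom{d-1}{k-1}|A| \leq C\cdot \epsilon \binom{d-1}{k-1}2^{d-1}.
\]
By Lemma~\ref{lemma:min_cut_general}, the minimum cut in $Q$ has size $\binom{d}{k}$, so $T$ is an $\alpha$-approximate minimum cut with
\[
\alpha = \frac{C\epsilon \binom{d-1}{k-1}2^{d-1}}{\binom{d}{k}} = C\epsilon \cdot \frac{k}{d}\cdot 2^{d-1} = O\!\left(\frac{\epsilon \, 2^{d-1}}{d}\right),
\]
where in the last step we absorb the constant factor $k$ (which is fixed) into the $O(\cdot)$ notation.

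Next I would invoke Karger's cut-counting theorem (Theorem~\ref{thm:karger}) with this value of $\alpha$ and $n = 2^d$ to conclude that the number of admissible choices for $T$ is at most
\[
2^{\lceil 2\alpha\rceil}\binom{2^d}{\lceil 2\alpha\rceil} \;\leq\; 2^{O(\epsilon 2^d/d)} \cdot 2^{\,2^d H_2(O(\epsilon/d))},
\]
where $H_2$ denotes the binary entropy function. Using the standard bound $H_2(x) \le x \log(1/x) + O(x)$ for small $x$, this simplifies to $\exp\bigl(2^d \cdot O(\epsilon/d)\log(d/\epsilon)\bigr)$, as claimed.

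The argument is almost routine given the tools already assembled; the only point that requires a bit of care is tracking the ratio $\binom{d-1}{k-1}/\binom{d}{k} = k/d$, which ensures that the approximation factor $\alpha$ scales as $\epsilon \cdot 2^d/d$ (with $k$ absorbed into the constant), matching the structure of Lemma~\ref{lemma:cutcounting}. Once this is observed, Karger's theorem together with the binary entropy estimate yields the claimed bound without further complications.
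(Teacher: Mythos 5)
Your proposal is correct and follows essentially the same route as the paper's proof: enumerate $T = A \triangle S$, bound $|\partial(T)|$ via Lemma~\ref{lem:small_cut_general}, note that the ratio $\binom{d-1}{k-1}/\binom{d}{k} = k/d$ makes $T$ an $\alpha$-approximate minimum cut with $\alpha = O(\epsilon 2^{d-1}/d)$ by Lemma~\ref{lemma:min_cut_general}, and then apply Karger's theorem together with the binary entropy estimate. No substantive differences from the paper's argument.
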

\begin{proof}
Let $A \subseteq Q_{k,d}$ be of size $|A| = |V|/2$ such that $|E(A, V \setminus A)| \leq (1+2\epsilon){d-1 \choose k-1} 2^{d-1}$ and $|A \triangle S| \leq K \cdot \epsilon 2^d$. 

Given $S$, the set $A$ is uniquely determined by the choice of $A\triangle  S$, so we just need to count the number of possible choices for $A\triangle S$.  By Lemma~\ref{lem:small_cut_general}, 
$$|\partial(A \triangle S)| \leq C \cdot \epsilon{d-1 \choose k-1}2^{d-1}. $$
By Lemma~\ref{lemma:min_cut_general},  the minimum cut has size ${d \choose k}$, so $\partial(A \triangle S)$ is an $\alpha$-approximate minimum cut with $\alpha = C\epsilon{d -1\choose k-1}2^{d-1}/{d \choose k} = Ck\epsilon2^{d-1}/d$. Therefore, by Karger's cut counting theorem (Theorem~\ref{thm:karger}), the number of choices for $|A \triangle S|$ is at most
	$$
	2^{Ck\e 2^{d-1}/d}{2^d \choose Ck\e/ 2^{d-1}}\leq 2^{Ck\e2^{d-1}/d} \cdot  2^{ H_2(Ck\e/d)2^d}\leq  \exp\left( 2^d O(\e/d) \log(d/\e)\right).
	$$
Here $H_2(x)$ denotes the binary entropy function $H_2(x)=-x\log x-(1-x)\log(1-x).$
\end{proof}

\begin{lem}\label{lemma:cut_concentration_general}
    Let $Q=(V,E)$ be a component of $Q_{d,k}$ (as per Definition~\ref{def:component}), and let $A \subseteq Q$ be a set with $|A| = |V|/2$ and $(1+\epsilon){d-1 \choose k-1}|A| \leq |\partial(A)|  \leq (1+ 2\epsilon){d-1 \choose k-1}|A|.$ Let $S$ be the coordinate cut in $Q$ such that $|A \triangle S| \leq K\cdot  \epsilon 2^d$ (exists by Lemma~\ref{lemma:sparse-close-to-coord_general}). 
   Then 
   $$ \Pr\left[|E'(A,  V\setminus A) | \geq |E'(S, V \setminus S)| + \frac{p\epsilon}{2} {d-1 \choose k-1}\frac{|V|}2 \right]  \geq 1-4e^{-\Omega(\epsilon p d^{k-1}2^{d-1})}.$$
\end{lem}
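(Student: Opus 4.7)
My plan is to mirror the argument of Lemma~\ref{lemma:cut_concentration}, with the role of the degree $d$ of the hypercube replaced by the per-vertex contribution $\binom{d-1}{k-1}$ that any coordinate cut receives in $Q_{d,k}$. First I would define $E^+ \coloneqq \partial(A)\setminus\partial(S)$ and $E^- \coloneqq \partial(S)\setminus\partial(A)$ and observe that
\[
|E'(A,V\setminus A)| - |E'(S,V\setminus S)| \;=\; |E^+\cap E'| - |E^-\cap E'|,
\]
so the claim reduces to concentration of these two Bernoulli sums. The lower bound on $|\partial(A)|$ together with the identity $|\partial(S)| = \binom{d-1}{k-1}|V|/2$ yields $|E^+| - |E^-| \geq \epsilon\binom{d-1}{k-1}|V|/2$, which is the gap we need to beat after sampling.

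Next I would use Lemma~\ref{lem:small_cut_general} to get the uniform upper bound $|E^+|,|E^-| \leq |\partial(A)\triangle\partial(S)| \leq C\epsilon\binom{d-1}{k-1}|A|$, which controls the variance. Setting $\lambda \coloneqq \tfrac{p\epsilon}{4}\binom{d-1}{k-1}|V|/2$, the key computation is that both quantities governing the additive Chernoff bound (Theorem~\ref{lemma:chernoff}) are of the same order: the linear term is $\lambda = \Theta(\epsilon p d^{k-1}2^{d-1})$ since $\binom{d-1}{k-1} = \Theta(d^{k-1})$ and $|V|=2^d$, and the quadratic term satisfies
\[
\frac{\lambda^2}{p|E^\pm|} \;\geq\; \frac{(p\epsilon\binom{d-1}{k-1}|V|/2)^2/16}{pC\epsilon\binom{d-1}{k-1}|A|} \;=\; \Omega\!\left(\epsilon p d^{k-1}2^{d-1}\right),
\]
where one factor of $\binom{d-1}{k-1}$ cancels. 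Therefore each of $|E^+\cap E'|$ and $|E^-\cap E'|$ lies within $\lambda$ of its mean except with probability $2\exp(-\Omega(\epsilon p d^{k-1}2^{d-1}))$.

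Finally, a union bound over the two events gives, with probability $1 - 4\exp(-\Omega(\epsilon p d^{k-1}2^{d-1}))$,
\[
|E^+\cap E'| - |E^-\cap E'| \;\geq\; p(|E^+|-|E^-|) - 2\lambda \;\geq\; p\epsilon\binom{d-1}{k-1}\tfrac{|V|}{2} - \tfrac{p\epsilon}{2}\binom{d-1}{k-1}\tfrac{|V|}{2} \;=\; \tfrac{p\epsilon}{2}\binom{d-1}{k-1}\tfrac{|V|}{2},
\]
which is exactly the asserted inequality.

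I do not foresee a genuine obstacle: this is a direct port of the hypercube case, and the only nontrivial input, the bound $|\partial(A)\triangle\partial(S)| \leq C\epsilon\binom{d-1}{k-1}|A|$, was already established in Lemma~\ref{lem:small_cut_general}. The only care point is confirming that the linear and quadratic Chernoff terms have the same order in $d$, $\epsilon$, and $p$; this relies precisely on the fact that $|E^\pm|$ scales as $\epsilon \binom{d-1}{k-1}|V|$, so that one factor of $\binom{d-1}{k-1}|V|$ cancels between $\lambda^2$ and $p|E^\pm|$, leaving a single factor and matching the linear term.
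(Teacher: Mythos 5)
Your proposal is correct and follows essentially the same route as the paper's proof: the same decomposition into $E^+=\partial(A)\setminus\partial(S)$ and $E^-=\partial(S)\setminus\partial(A)$, the same gap bound $|E^+|-|E^-|\ge \epsilon\binom{d-1}{k-1}|V|/2$, the bound from Lemma~\ref{lem:small_cut_general} to control the means, and the additive Chernoff bound with $\lambda=\frac{p\epsilon}{4}\binom{d-1}{k-1}\frac{|V|}{2}$ followed by a union bound over the two deviation events. The calculation confirming that both the linear and quadratic Chernoff terms are $\Omega(\epsilon p d^{k-1}2^{d-1})$ matches the paper's.
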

\begin{proof}
    Let $E^+ \coloneqq \partial(A) \setminus \partial(S) $ and let $E^- \coloneqq  \partial(S) \setminus \partial(A) $. Then
\begin{align*}
|E'(A, V \setminus A)| - |E'(S, V \setminus  S)|& = \left|( \partial(A) \setminus \partial(S))  \cap E'\right|  - \left|( \partial(S) \setminus \partial(A)) \cap E' \right|\\
& = |E^+ \cap E'| - |E^- \cap E'|. 
\end{align*} 
So we need to bound the probability of the event $|E^+ \cap E'| - |E^- \cap E'| \geq \frac{p \epsilon}{2} {d-1 \choose k-1}\frac{|V|}2$. From the lemma assumption, together with the fact that $|A| = |S| = |V|/2$, we have 
\begin{equation}\label{eq:E+E-_general}
 |E^+| - |E^-| = |E(A, V \setminus A)| - |E(S, V \setminus S)| \geq (1+\epsilon){d-1 \choose k-1}|A|  - {d-1 \choose k-1}|S| = \epsilon {d-1 \choose k-1}\frac{|V|}{2}.
 \end{equation}
By Lemma~\ref{lem:small_cut_general}, 
    $$ \E[|E^+ \cap E'|] = p|E^+| \leq p \cdot C \epsilon{d-1 \choose k-1}\frac{|V|}{2} \qquad \text{and} \qquad \E[|E^- \cap E'|] = p|E^-| \leq p \cdot C\epsilon {d-1 \choose k-1}\frac{|V|}{2} .$$
    Let $\lambda = \frac{p\epsilon}{4}{d-1 \choose k-1} \frac{|V|}{2} .$ Then $\min\{ \lambda, \lambda^2/p|E^-|\}, \min\{ \lambda, \lambda^2/p|E^+|\} \geq \Omega\left(\epsilon p {d-1 \choose k-1}2^{d-1} \right) = \Omega(\epsilon p d^{k-1}2^{d})$, so applying the Chernoff bound (Lemma~\ref{lemma:chernoff}) gives
    $$  \Pr\left[\Bigr| |E^-\cap E'|-p|E^-| \Bigr| \geq \frac{p\epsilon}{4}{d-1 \choose k-1}\frac{|V|}{2} \right] \leq 2e^{-\Omega(\epsilon p d^{k-1}2^{d})}$$
and 
    $$  \Pr\left[\Bigr| |E^+\cap E'|-p|E^+| \Bigr| \geq \frac{p\epsilon}{4}{d-1 \choose k-1}\frac{|V|}{2} \right] \leq 2e^{-\Omega(\epsilon p d^{k-1}2^{d})}. $$

By a union bound, with probability at least $1 - 4e^{-\Omega(\epsilon p d^{k-1}2^{d})}$, it holds that 
\begin{align*}
    |E^+\cap E'| - |E^- \cap E'| & \geq p|E^+| - p|E^-| - \frac{p \epsilon} {2} {d-1 \choose k-1}\frac{|V|}{2}  \\
    & \geq p \epsilon {d-1 \choose k-1 }\frac{|V|}{2}  -  \frac{p \epsilon} {2} {d-1 \choose k-1}\frac{|V|}{2}  \\
    & = \frac{p \epsilon}{2}{d-1 \choose k-1} \frac{|V|}{2} ,
\end{align*}
where second inequality uses Equation \eqref{eq:E+E-_general}. This is the desired bound, which completes the proof. 
\end{proof}

\begin{lem}\label{claim:thm3_claim}
Let $Q=(V,E)$ be a component of $Q_{d,k}$ (as per Definition~\ref{def:component}). Suppose $p = \kappa \log d/d^{k-1}$ for a sufficiently large constant $\kappa$, and let $\epsilon_0  = 2^{-d}/K$, where $K$ is the universal constant from Lemma~\ref{lemma:sparse-close-to-coord_general}. 
Then with probability at least $1 - d^{-100}/2$, the following holds:  For every $\epsilon \geq \epsilon_0$, and  every balanced cut $A\subseteq Q$ of size $|E(A, V \setminus A)| = (1+\epsilon){d-1 \choose k-1}\frac{|V|}{2}$, it holds that
   $$ |E'(A, V \setminus A)| \geq |E'(S, V \setminus S)| + \frac{p \epsilon}{2}{d-1 \choose k-1}|V|/2 ,$$
   where $S$ is the coordinate cut in $Q$ such that $|A \triangle S| \leq K \cdot \epsilon 2^d$ (exists by Lemma~\ref{lemma:sparse-close-to-coord_general}). 
\end{lem}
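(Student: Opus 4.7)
The plan is to mirror the proof of Lemma~\ref{claim:thm1_claim}, replacing the hypercube bounds with their $k$-distance analogues (Lemma~\ref{lemma:cut_concentration_general} and Lemma~\ref{lemma:cutcounting_general}). I would discretize into scales $\epsilon_i = 2^i \epsilon_0$ for $i = 0, 1, \ldots, \lceil \log(2^d d / \epsilon_0) \rceil$, and for each pair $(S, \epsilon_i)$ consisting of a coordinate cut $S$ in $Q$ and scale $\epsilon_i$, define the bad event $\mathcal{B}(S,\epsilon_i)$ as the existence of some balanced $A \subseteq Q$ satisfying
\[
(1+\epsilon_i){d-1 \choose k-1}\tfrac{|V|}{2} \leq |E(A,V\setminus A)| \leq (1+2\epsilon_i){d-1 \choose k-1}\tfrac{|V|}{2}
\]
together with $|A\triangle S| \leq K\epsilon_i 2^d$, for which $|E'(A,V\setminus A)| < |E'(S,V\setminus S)| + \tfrac{p\epsilon_i}{2}{d-1 \choose k-1}\tfrac{|V|}{2}$. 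Any $\epsilon \ge \epsilon_0$ falls into one of these dyadic windows, so once we show $\mathcal{B}(S, \epsilon_i)$ does not occur for any $(S, \epsilon_i)$, the conclusion of the lemma follows.

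For each fixed $(S,\epsilon_i)$, Lemma~\ref{lemma:cut_concentration_general} bounds the per-cut failure probability by $4\exp\!\bigl(-D\,\epsilon_i\, p\, d^{k-1} 2^{d-1}\bigr)$ for an absolute constant $D>0$, and Lemma~\ref{lemma:cutcounting_general} bounds the number of candidate cuts $A$ by $\exp\!\bigl(C \cdot 2^d (\epsilon_i/d)\log(d/\epsilon_i)\bigr)$ for an absolute constant $C>0$. Combining and substituting $p = \kappa \log d / d^{k-1}$ gives
\[
\Pr[\mathcal{B}(S,\epsilon_i)] \;\leq\; 4\exp\!\left( \epsilon_i \cdot 2^d \left( \frac{C}{d}\log(d/\epsilon_i) \;-\; \frac{D\kappa}{2}\log d\right)\right).
\]
Since $\epsilon_i \geq \epsilon_0 = 2^{-d}/K$, we have $\log(d/\epsilon_i) \leq \log(Kd\cdot 2^d) = O(d)$, so $(C/d)\log(d/\epsilon_i) = O(1)$, while $\epsilon_i \cdot 2^d \geq 1/K$. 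Choosing $\kappa$ sufficiently large (analogously to the condition $\kappa \geq 204 C/D$ in Lemma~\ref{claim:thm1_claim}) makes the parenthesized expression bounded above by $-\Omega(\log d)$, which yields $\Pr[\mathcal{B}(S,\epsilon_i)] \leq d^{-102}/8$ (say) for $d$ large enough.

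Finally, a union bound over the at most $2d$ coordinate cuts in $Q$ (at most $d$ choices of coordinate and $2$ choices of side) and over the $O(d)$ dyadic scales $\epsilon_i$ drives the total failure probability below $d^{-100}/2$. There is no real obstacle beyond bookkeeping; the content is checking that the exponential tightening of $\epsilon_0$ from $d^{-100}$ to $2^{-d}/K$, which is what makes \emph{exact} recovery possible, is exactly compensated by the stronger per-cut concentration, whose exponent scales as $p \cdot {d-1 \choose k-1} \asymp \log d$ in this higher-degree regime. This is precisely why the $k$-distance hypercube admits exact recovery while the ordinary hypercube ($k=1$) only supports recovery up to a $1/\poly d$-fraction of vertices.
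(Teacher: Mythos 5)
Your proposal is correct and follows essentially the same route as the paper: the same dyadic discretization $\epsilon_i = 2^i\epsilon_0$, the same bad events $\mathcal{B}(S,\epsilon_i)$, the same combination of Lemma~\ref{lemma:cut_concentration_general} with Lemma~\ref{lemma:cutcounting_general}, and the same observation that $\epsilon_i 2^d \geq 1/K$ lets a sufficiently large $\kappa$ absorb the entropy term so each event has probability at most $d^{-102}/8$, before the union bound over the $O(d)$ scales and $2d$ coordinate cuts. The only (harmless) looseness, shared with the paper itself, is that a cut with parameter $\epsilon\in[\epsilon_i,2\epsilon_i)$ is handled at threshold $\tfrac{p\epsilon_i}{2}$ rather than $\tfrac{p\epsilon}{2}$, a factor-$2$ slack that does not affect how the lemma is used.
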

\begin{proof}
Let $C$ be the hidden constant in the $O$-notation in Lemma~\ref{lemma:cutcounting_general} and let $D$ be the hidden constant in the $\Omega$-notation in Lemma~\ref{lemma:cut_concentration_general}. Suppose that $\kappa$ is a sufficiently large constant.   
Let $\epsilon_i = 2^i \epsilon_0$ for $i = 1, \dots \log(2^d/\epsilon_0)$.
       For every coordinate cut $S =S_{j,b}\cap Q$ with $j \in [d]$ and $b \in \{0,1\}$, and for every $\epsilon_i$,  let $\mathcal{B}(S,\epsilon_i)$ be the event that there exists a balanced cut $A\subseteq Q$ of size $(1+\epsilon_i)|A| \leq |E(A, V \setminus A)| \leq (1 + 2 \epsilon_i){d-1 \choose k-1}|A|$ with $|A \triangle S| \leq K \cdot \epsilon 2^d$ such that 
  $$ |E'(A, V \setminus A)|< |E'(S, V \setminus S)| + \frac{p \epsilon_i}{2}{d-1 \choose k-1}\frac{|V|}{2}.$$
  
  We now show that $\Pr[\mathcal{B}(Q, \epsilon_i)] \leq d^{-102}/8$ for every $i = 1, \dots , \log(2^d/\epsilon_0)$. 
  Let $i \in \{1, \dots , \log(2^d/\epsilon_0)\}$.  For every balanced cut $A$ of size $|E(A, V \setminus A)| \leq (1 + 2 \epsilon_i){d-1 \choose k-1}|A|$ with $|A \triangle S| \leq K \cdot \epsilon_i2^d$, by  Lemma~\ref{lemma:cut_concentration_general},  $$\Pr\left[ |E'(A, V \setminus A)| < |E'(S, V \setminus S)| + \frac{p \epsilon_i}{2}{d-1 \choose k-1}\frac{|V|}{2} \right] \leq 4\exp\left( -D\epsilon_i p d^{k-1}2^{d}\right).$$
By Lemma~\ref{lemma:cutcounting_general}, the  number of such cuts is at most $\exp\left(2^d C\e_i/d \log(d/\e_i)\right)$. Therefore, by a union bound,   
\begin{align*}
\Pr[\mathcal{B}(S, \epsilon_i)] &\leq 4\exp\left(-D\epsilon_i p d^{k-1}2^d\right)\cdot \exp\left( 2^d C\e_i/d \log(d/\e_i)\right)
\\
& = 4\exp\left(2^d\epsilon_i \left(  \frac{C\log(d/\epsilon_i)}{d}-Dpd^{k-1}/2\right)\right)  \\
 & \leq 4 \exp\left( \epsilon_i\left( C \log d \log K -D \kappa \log d /2\right)\right)  && \text{since $\log(1 / \epsilon_i) \leq \log(1 / \epsilon_0) = d \log K$}\\
 &  \leq  4\exp\left(-103\log d\right) && \text{for $\kappa$ sufficiently large, since $\epsilon_i \geq \epsilon_0 = 2^{-d}/K$} \\
&  \leq d^{-102}/8.
\end{align*}
Taking a union bound over the $2d$ possible choices for $S$ and the $\log(2^d/\epsilon_0) \leq 2d$ possible choices for $i$, we get the claim for all $\epsilon \geq \epsilon_0$. 
\end{proof}
 \begin{cor}\label{cor:not_too_small_general}
    Conditioned on the success of the event in Lemma~\ref{claim:thm3_claim}, for every balanced cut $A \subseteq Q$ it holds that 
    $$|E'(A, V \setminus A)| \geq|E'(S, V \setminus S)|,$$
 \end{cor}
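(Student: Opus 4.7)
The plan is to mirror the proof of Corollary~\ref{cor:not_too_small}, exploiting the fact that here $\epsilon_0 = 2^{-d}/K$ is chosen so small that the Hamming-distance bound of Lemma~\ref{lemma:sparse-close-to-coord_general} collapses from "approximately a coordinate cut" to "\emph{exactly} a coordinate cut" in the small-$\epsilon$ regime. This exact equality (rather than merely approximate closeness) is what removes the $-C \epsilon_0 2^{d-1}$ additive slack present in Corollary~\ref{cor:not_too_small} and is precisely what drives \emph{exact} recovery in Theorem~\ref{thm:general_main}.

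Concretely, let $A \subseteq Q$ be an arbitrary balanced cut. By Remark~\ref{rem:sparsest_n_connected}, every cut in $Q$ has expansion at least $\binom{d-1}{k-1}$, so one can write $|E(A, V \setminus A)| = (1+\epsilon)\binom{d-1}{k-1}|A|$ for some $\epsilon \geq 0$. Applying Lemma~\ref{lemma:sparse-close-to-coord_general} yields a coordinate cut $S$ in $Q$ with $|A \triangle S| \leq K \cdot \epsilon \cdot 2^d$. I then split on the size of $\epsilon$.

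In the case $\epsilon \geq \epsilon_0$, the cut $A$ meets the hypothesis of Lemma~\ref{claim:thm3_claim}, which, conditional on the stated event, directly gives
$|E'(A, V \setminus A)| \geq |E'(S, V \setminus S)| + \tfrac{p\epsilon}{2}\binom{d-1}{k-1}\tfrac{|V|}{2} \geq |E'(S, V \setminus S)|$.
In the case $\epsilon < \epsilon_0 = 2^{-d}/K$, the Hamming bound becomes $|A \triangle S| \leq K \cdot \epsilon \cdot 2^d < 1$, and since $|A \triangle S|$ is a nonnegative integer it must equal zero. Hence $A = S$ and the inequality holds with equality.

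There is essentially no obstacle beyond this case split; the only subtlety is verifying that $\epsilon_0$ is indeed small enough to force $|A \triangle S| = 0$, which is immediate from plugging $\epsilon_0 = 2^{-d}/K$ into Lemma~\ref{lemma:sparse-close-to-coord_general}. In this sense Corollary~\ref{cor:not_too_small_general} is the clean exact analog of Corollary~\ref{cor:not_too_small}, made possible by the substantially stronger concentration available in the $k$-distance cube through the large degree $\binom{d}{k}$.
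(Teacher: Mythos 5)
Your proof is correct and follows essentially the same route as the paper's: split on whether the expansion excess $\epsilon$ is at least $\epsilon_0$, invoke Lemma~\ref{claim:thm3_claim} in the first case, and in the second case use $\epsilon_0 = 2^{-d}/K$ together with Lemma~\ref{lemma:sparse-close-to-coord_general} to force $|A \triangle S| < 1$, hence $A = S$. This matches the paper's argument (and is stated slightly more carefully in the small-$\epsilon$ case).
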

    where $S$ is the coordinate with the smallest hamming distance to $A$. 
\begin{proof}
Let $A$ be a balanced cut of size $|E(A,V \setminus A)| = (1+ \epsilon){d-1 \choose k-1}|V|$. 
If $\epsilon \geq  \epsilon_0$, then we are done by Lemma~\ref{claim:thm3_claim}. 
If instead $\epsilon < \epsilon_0$, then  $|E(A, V \setminus A)| < (1+ \epsilon_0){d-1 \choose k-1}|A| $, so by Lemma~\ref{lemma:sparse-close-to-coord_general}, there exists a coordinate cut $S$ such that$$|A \triangle S| < K \epsilon_0 2^{d-1} \leq  1,$$ where the last inequality follows by the setting  $\epsilon_0 = 2^{-d}/K$. But then $A$ is equal to $S$, so we are done.
\end{proof}
\begin{lem}\label{lemma:optvalue_general} 
Let $K$ be the universal constant from Lemma~\ref{lemma:sparse-close-to-coord_general}. 
With probability at least $1-d^{-100}/2$, all coordinate cuts $S$ in $Q$ satisfy
   $$\left| |E'(S, V \setminus S)|  - p{d-1 \choose k-1}2^{d-1}\right| \geq \frac{p}{100K d}{d-1 \choose k-1}\frac{|V|}{2},$$
  and in particular, the optimal value of \eqref{opt_problem_general} is at most $\left( d + \frac{1}{100K}\right){d-1 \choose k-1}\frac{|V|}{2}p$.
\end{lem}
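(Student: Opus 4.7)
The plan is to mirror the proof of Lemma~\ref{lemma:optvalue}, with the combinatorial constants adapted to the component $Q$ of $Q_{d,k}$. The key observation is that for every coordinate cut $S = S_{j,b} \cap Q$, each vertex has exactly ${d-1 \choose k-1}$ edges crossing $S$ within $Q$ -- namely the edges obtained by flipping coordinate $j$ together with any $k-1$ of the remaining coordinates (when $k$ is even, these edges stay inside $Q$ because flipping $k$ coordinates preserves parity). Since $|S| = |V|/2$, this gives $|\partial(S)| = {d-1 \choose k-1}|V|/2$ and hence $\mathbb{E}\bigl[|E'(S, V\setminus S)|\bigr] = p\,{d-1 \choose k-1}\,|V|/2$.

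Next I would apply the additive Chernoff bound (Theorem~\ref{lemma:chernoff}) to $|E'(S,V\setminus S)|$ with deviation $\lambda = \tfrac{p}{100Kd}\,{d-1 \choose k-1}\,|V|/2$. Using $p \geq \kappa\log d / d^{k-1}$, a direct calculation yields
$$\min\!\left\{\lambda,\; \frac{\lambda^2}{\mathbb{E}[|E'(S,V\setminus S)|]}\right\} \;=\; \Omega\!\left(\frac{p\,{d-1 \choose k-1}\,|V|}{d^{\,2}}\right) \;=\; \frac{2^{d}}{\poly d},$$
so the probability that $|E'(S,V\setminus S)|$ deviates from its expectation by more than $\lambda$ is at most $\exp(-2^d/\poly d) \leq d^{-101}/2$. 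A union bound over the at most $2d$ coordinate cuts in $Q$ then delivers the stated concentration (up to the correction $\leq$ in place of $\geq$ in the absolute value) with total failure probability at most $d^{-100}/2$.

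To derive the bound on the optimal value of \eqref{opt_problem_general}, I would observe that the family $\{S_{j,0}\cap Q\}_{j\in[d]}$ is feasible: each such cut has size $|V|/2$, and for $j\neq j'$,
$(S_{j,0}\cap Q)\triangle(S_{j',0}\cap Q) = \{x\in Q : x_j \neq x_{j'}\}$, which has size $|V|/2$ in either component (this follows straight from Definition~\ref{def:component}). Conditioning on the concentration event above, the objective value at this feasible solution is at most
$$\sum_{j=1}^{d} |E'(S_{j,0}\cap Q,\, V\setminus(S_{j,0}\cap Q))| \;\leq\; d\!\left(1 + \frac{1}{100Kd}\right)\!p\,{d-1 \choose k-1}\frac{|V|}{2} \;=\; \left(d + \frac{1}{100K}\right)\!p\,{d-1 \choose k-1}\frac{|V|}{2},$$
which upper bounds the optimum.

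There is no real obstacle here; the proof is a direct adaptation of Lemma~\ref{lemma:optvalue} with the edge-density of the ordinary cube replaced by that of the $k$-distance cube. The only subtleties worth verifying carefully are the edge-count identity $|\partial(S)| = {d-1 \choose k-1}|V|/2$ (in particular that no edges are lost to the other parity component when $k$ is even) and the feasibility of the coordinate-cut family in the restricted sense required by Definition~\ref{def:component}.
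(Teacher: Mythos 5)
Your proposal matches the paper's proof essentially step for step: compute $\E[|E'(S,V\setminus S)|]=p\binom{d-1}{k-1}|V|/2$, apply the additive Chernoff bound with $\lambda=\frac{p}{100Kd}\binom{d-1}{k-1}|V|/2$ to get failure probability $\exp(-2^d/\poly d)$ per cut, union bound over the coordinate cuts, and then use feasibility of $\{S_{j,0}\cap Q\}_{j\in[d]}$ to bound the optimum. Your additional checks (that each vertex has exactly $\binom{d-1}{k-1}$ edges crossing a coordinate cut inside $Q$, and that the coordinate-cut family satisfies the symmetric-difference constraints) and your observation that the displayed inequality should read $\leq$ rather than $\geq$ are both correct and consistent with what the paper's argument actually establishes.
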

\begin{proof}
    Fix a coordinate cut $S = S_{j,b}\cap Q$. Then $\E[|E'(S, V \setminus S)|] = p{d-1 \choose k-1}\frac{|V|}{2}$, so applying the Chernoff bound (Lemma~\ref{lemma:chernoff}) with $\lambda = \frac{p}{100Kd}{d-1 \choose k-1}\frac{|V|}{2}$ gives

    \begin{align*}
     \Pr\left [\left| |E'(S_{j,b}, V \setminus S_{j,b})|  - p{d-1 \choose k-1}\frac{|V|}{2}\right| \geq \frac{p}{100Kd}{d-1 \choose k-1}\frac{|V|}{2} \right]&  \leq \exp\left(-\frac{1}{3}\frac{1}{100^2K^2d^2}{d-1 \choose k-1}p\frac{|V|}{2}\right) \\
     & = \exp\left(-2^d/\poly d\right) \\
     & \leq d^{-101}/2. 
    \end{align*}
    By a union bound over the $d$ coordinate cuts, the above inequality holds simultaneously for all $S = S_{j,b}\cap Q$ with probability at least $1 - d^{-100}/2.$
    If this holds, then, since $\{S_{j,b}\cap Q : j \in [d], b = 0\}$ is a feasible solution to \eqref{opt_problem_general}, the optimal value of \eqref{opt_problem_general} is at most 
    $$ \sum_{j} |E'(S_{j,0}, V \setminus S_{j,0})| \leq d \cdot p\left( 1 + \frac{1}{100Kd}\right){d-1 \choose k-1}\frac{|V|}{2} = p \left(d +\frac{1}{100K}\right){d-1 \choose k-1}\frac{|V|}{2}.$$
 \end{proof}

\begin{proof}[Proof of Theorem~\ref{thm:general_main}]
The algorithm solves the following optimization problem: 
\begin{equation}\tag{\ref{opt_problem_general}}
 \begin{aligned}
      \min &  \sum_{i = 1}^d |E'(A_i, V \setminus A_i)   \qquad \text{subject to}&& \\
            & |A_i| = \frac{|V|}{2} && \forall i,   \\
     & |A_i \triangle A_j| = \frac{|V|}{2}&&  \forall i \neq j ,\\
    \end{aligned}
\end{equation}
and outputs the optimal solution $A_1,\dots,A_d$. 
\paragraph{Running time.}We can solve  \eqref{opt_problem_general} by enumerating over all feasible families of cuts, of which there are at most $\left({2^d \choose 2^{d-1}}\right)^d = 2^{O\left(2^d d\right)} = 2^{O(n \log n)} $,  and computing the corresponding edge counts, so the running time is $2^{O(n \log n)}$. 

\paragraph{Correctness.} Condition on the success of the events in Lemma~\ref{claim:thm3_claim} and Lemma~\ref{lemma:optvalue_general}. By a union bound, this occurs with probability at least $1- d^{-100}.$

Let $\{A_i\}_{i \in [d]}$ be the optimal solution to~\eqref{opt_problem_general}. For every $i \in [d]$, let $S_i$ denote the coordinate cut in $Q$ with the smallest Hamming distance to $A_i$. We start by proving that this is a matching, i.e., that the set $\{S_i\}_{i \in [d]}$ consists of $d$ different coordinate cuts. Suppose not. Then $S_i = S_j$ or $S_i = \overline S_j$ for some $i \neq j$. If $S_i = S_j = S$, then by triangle inequality, 
\[
|A_i \triangle S| + |A_j \triangle S| \;\geq\; |A_i \triangle A_j| \;=\; \frac{|V|}{2},
\]
so either $|A_i \triangle S_i | \geq |V|/4$ or $|A_j \triangle S_j | \geq |V|/4.$

If instead $S_i = \overline S_j = S$, then again by triangle inequality, 
$$ |A_i \triangle S| + |A_j \triangle \overline S| = |A_i \triangle S| + V - |A_j \triangle S|  \geq 2^d-  |A_i \triangle A_j| = \frac{|V|}{2},$$
so again either $|A_i \triangle S_i| \geq  |V|/4$ or $|A_j \triangle S_j| \geq |V|/4$. 

Let $i$ be the index such that $|A_i \triangle S_i| \geq |V|/4$. Applying Lemma~\ref{lemma:sparse-close-to-coord_general} with $\epsilon = 1/4$ gives $|E(A_i, V \setminus A_i)| \geq (1 + \frac{1}{4K})\frac{|V|}{2}$. 
From Lemma~\ref{claim:thm3_claim} and Lemma~\ref{lemma:optvalue_general}, we get
\begin{align*}
|E'(A_i, V \setminus A_i)| & \geq |E'(S_i, V \setminus S_i)| + \frac{p}{8K}{d-1 \choose k-1}\frac{|V|}{2}   && \text{by Lemma~\ref{claim:thm3_claim}} \\
& \geq p  \left(1 -  \frac{1}{100K}\right){d-1 \choose k-1}\frac{|V|}{2}  + \frac{p}{8K}{d-1 \choose k-1}\frac{|V|}{2}   && \text{by Lemma~\ref{lemma:optvalue_general}} \\
& \geq p\left(1 +  \frac{1}{10K}\right){d-1 \choose k-1}\frac{|V|}{2} . 
\end{align*}
Furthermore, from Lemma~\ref{claim:thm3_claim} and Lemma~\ref{lemma:optvalue_general}, for every $j \neq i$, 
\begin{align*}
    |E'(A_j, V \setminus A_j)| & \geq |E'(S_j, V \setminus S_j)| \geq p \left(1 - \frac{1}{100K  d}\right){d-1 \choose k-1}\frac{|V|}{2}.    
\end{align*}
But then summing over all $j \in [d]$ gives
\begin{align}
    \sum_{j \in [d]}|E'(A_j, V \setminus A_j)| & \geq p \left(1 +  \frac{1}{10K}\right){d-1 \choose k-1}\frac{|V|}{2}  + p (d-1)\left(1 -  \frac{1}{100Kd}\right){d-1 \choose k-1}\frac{|V|}{2}  \\
    &>p \left(d+ \frac{1}{100K}\right){d-1 \choose k-1}\frac{|V|}{2} ,
\end{align}
which is a contradiction, since objective value of \eqref{opt_problem_general} is at most $p\left(d+ \frac{1}{100K}\right)|V|/2$, by Lemma~\ref{lemma:optvalue_general}. Thus, the set $\{S_i\}_{i \in [d]}$ must contain $d$ distinct coordinate cuts. 

So now suppose that we have a matching, i.e. that the set $\{S_i\}_{ i \in [d]}$ contains $d$ distinct coordinate cuts. Then $\{S_i\}_{i \in [d]}$ is a feasible solution to \eqref{opt_problem_general}. 
Suppose for contradiction that $|A_i \triangle S_i| \geq 1$ for some $i$. Then by 
\Cref{lemma:sparse-close-to-coord_general}, $|E(A_i, V \setminus A_i)| \geq \left(1+1/K \right){d-1 \choose k-1}|A|$, so by Lemma~\ref{claim:thm3_claim}, 
$$| E'(A_i, V \setminus A_i)| > |E'(S_i, V \setminus S_i)|. $$ 
For every $j \neq i$, by Lemma~\ref{cor:not_too_small_general} applied to $A_j$, 
$$| E'(A_j, V \setminus A_j)| \geq |E'(S_j, V \setminus S_j)|.$$
But this gives
   $ \sum_{j \in  [d]}|E'(A_j, V \setminus A_j)| 
    > \sum_{j =1}^d|E'(S_j, V \setminus S_j)|$, 
 contradicting the optimality of $\{A_i\}_{i \in [d]}$.
\end{proof}

\subsection{Proof of Lemma~\ref{lemma:fkn_even}}\label{sec:fkn_pf}
In this section, we prove Lemma~\ref{lemma:fkn_even}, which is restated below for the convenience of the reader. 

One might hope to deduce Lemma~\ref{lemma:fkn_even} from the standard FKN theorem (\Cref{thm:fkn}) by identifying the even subcube $Q_d^E$ with $\{0,1\}^{d-1}$. For example, a natural approach is to define a function $g : \{0,1\}^{d-1} \rightarrow \{0,1\}$ by setting $g(x)=f(x^E)$ where $x^E=(x,0)$ if $|x|$ is even and $x^E=(x,1)$ otherwise. Then $g$ is boolean and $\widehat g(S) = \widehat f(S) + \widehat f([d] \setminus S) = 2 \widehat f(S)$ for every $S \subseteq [d-1]$, so the higher level Fourier mass of $g$ satisfies $\sum_{S \subseteq [d-1] : |S| \geq 2} \widehat g(S)^2 = 4 \sum_{S \subseteq [d-1] : |S| \geq 2} \widehat f(S)^2$. However, the hypothesis in Lemma~\ref{lemma:fkn_even} only controls $\sum_{2\le|S|\le d-2}\widehat f(S)^2$ and does not give information about the $|S|=d-1$ term $\widehat f([d-1])^2$. Thus, such a reduction does not immediately allow for an application of \Cref{thm:fkn}.  Instead, we prove the lemma directly, adapting Proof II in \cite{fkn} to our setting. 
\evenfkn*

\begin{proof}Let
    $$ S_1 \coloneqq \sum_{|T| \leq 1} \widehat{f}(T)\chi_T  \qquad \text{ and }\qquad  S_2 \coloneqq \sum_{|T| \geq d-1} \widehat{f}(T)\chi_T$$
    be the projection of $f$ onto the bottom two levels and top two levels, respectively, and let 
    $$L  \coloneqq \sum_{2 \leq |T| \leq d-2} \widehat{f}(T)\chi_T$$ 
    be the projection onto the remaining levels. Then 
    $f  = S_1 + S_2 + L.$
    Let   $$\epsilon:= \langle L, L \rangle \leq \delta$$ be the Fourier mass on the middle levels, and let
    $$R_1 \coloneqq  2 S_1^2 - S_1 \qquad \text{ and } \qquad  R_2 \coloneqq  2 S_2^2 - S_2.$$ 
We now compute the Fourier coefficients of $R_1$ and $R_2$. Since $f$ is supported only on the even component, it follows that the Fourier coefficients of $f$ are symmetric, in the sense that 
\begin{equation}\label{eq:sym}
\widehat f(T) = \widehat f([d] \setminus T) \qquad \forall T \subseteq [d]. 
\end{equation}
Indeed, for every $T \subseteq [d]$, 
\begin{align*}
\widehat f(T) & = 2^{-d}\sum_x f(x)(-1)^{\langle x, \1_T\rangle} \\
& = 2^{-d}\sum_{x: |x| \equiv 0 \pmod{2} } f(x)(-1)^{\langle x, \1_T\rangle}  \\
& = 2^{-d}\sum_{x: |x| \equiv 0 \pmod{2} } f(x)(-1)^{\langle x, \1_T \oplus \1_{[d]}\rangle}  \\
& = \widehat f([d]\setminus T). 
\end{align*}
Therefore, 
$$ \langle S_1,S_1 \rangle  =  \langle S_2, S_2 \rangle.$$ 
Furthermore, by orthogonality of $S_1, S_2$ and $L$, we have 
\begin{equation*}
    \frac{1}{4} = \langle f ,f \rangle =  \langle S_1,S_1 \rangle + \langle S_2, S_2 \rangle + \langle L, L \rangle .
\end{equation*}
This yields $\langle S_1 , S_1\rangle + \langle S_2, S_2\rangle = \frac{1}{4} - \epsilon$, and hence
	\begin{equation} \label{eq:S1}
	\langle S_1^2 , \chi_\emptyset  \rangle = \langle S_1 , S_1 \rangle=  \langle S_2 , S_2 \rangle = \langle S_2^2, \chi_{\emptyset}\rangle =  \frac{1}{8}-\frac \epsilon 2.
	\end{equation}
   Since $f$ is boolean, we also have 
\begin{equation}\label{eq:f_emptyset}
     \langle S_1 , \chi_{\emptyset} \rangle =\langle f,  \chi_{\emptyset}\rangle = \langle f,f\rangle =   \frac{1}{4}.
\end{equation}
    Combining \Cref{eq:S1} and \Cref{eq:f_emptyset} gives 
    
	$$\widehat{R_1}(\emptyset) = \langle R_1 ,\chi_{\emptyset}\rangle = \langle2S_1^2 - S_1, \chi_{\emptyset} \rangle = 2\left ( \frac{1}{8} - \frac{\epsilon}{2} \right) - \frac{1}{4} = -\epsilon.$$
For every  $i$, we have 
$$\widehat{R_1}(i) = \langle R_1, \chi_i\rangle = 2 \langle S_1^2, \chi_i\rangle - \langle S_1, \chi_i\rangle = 4 \widehat{f}(i)\widehat f(\emptyset) - \widehat f(i) =0,$$ and for every $i \neq j$, we have   $$\widehat{R_1}(ij)=\langle R_1, \chi_{ij}\rangle  =  2\langle  S_1^2 , \chi_{ij} \rangle -\langle  S_1, \chi_{ij} \rangle = 4\widehat{f}(i)\widehat{f}(j).$$ Finally, for $|T| \geq 3$, we have $\widehat R_1(T)  = \langle R_1, \chi_T\rangle  =  2\langle  S_1^2 , \chi_T \rangle -\langle  S_1, \chi_T \rangle = 0$. 
This yields
	$$R_1 = - \epsilon \chi_{\emptyset} + 4 \sum_{i < j} \widehat{f}(i)\widehat{f}(j)\chi_{ij}.$$
Similarly, using Equation \eqref{eq:sym}, we have
		$$R_2= -  \epsilon  \chi_{[d]} + 4 \sum_{i < j} \widehat{f}(i)\widehat{f}(j)\chi_{[d] \setminus \{i,j\}}. $$

\begin{claim}\label{claim:small_R}
		$	\langle R_1 + R_2, R_1 + R_2 \rangle \leq O(\epsilon).$
	\end{claim}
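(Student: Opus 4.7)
The plan is to bound $\|R_1 + R_2\|_2^2$ by combining Parseval with the Boolean identity $f^2 = f$. Since the Fourier expansions of $R_1$ and $R_2$ derived just above the claim have disjoint supports for $d \ge 5$ (the former on $\chi_\emptyset$ and on pairs $\chi_{\{i,j\}}$, the latter on $\chi_{[d]}$ and on co-pairs $\chi_{[d]\setminus\{i,j\}}$), Parseval together with the symmetry $\widehat f(T) = \widehat f([d]\setminus T)$ from \eqref{eq:sym} gives
\[
\|R_1 + R_2\|_2^2 \;=\; 2\|R_1\|_2^2 \;=\; 2\epsilon^2 + 32\sum_{i<j}\widehat f(i)^2\,\widehat f(j)^2.
\]
So the claim reduces to establishing $\sum_{i<j}\widehat f(i)^2\,\widehat f(j)^2 \le O(\epsilon)$.

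To do so, I would apply the Fourier transform to $f^2 = f$ at $T = \{i,j\}$. Expanding the convolution $\widehat{f^2}(\{i,j\}) = \sum_A \widehat f(A)\widehat f(A \triangle \{i,j\})$, splitting the sum by the intersection $A \cap \{i,j\}$, and isolating the $C = \emptyset$ contribution (using $\widehat f(\emptyset) = 1/4$) yields
\[
\widehat f(\{i,j\}) \;=\; 4\,\widehat f(i)\widehat f(j) + 4 R_{ij}, \qquad R_{ij} \coloneqq \!\!\sum_{\emptyset \neq C \subseteq [d]\setminus\{i,j\}} \!\!\bigl[\widehat f(C)\widehat f(C\cup\{i,j\}) + \widehat f(C\cup\{i\})\widehat f(C\cup\{j\})\bigr].
\]
Squaring, summing over $i<j$, and combining $(a-b)^2 \le 2a^2 + 2b^2$ with the hypothesis $\sum_{|T|=2}\widehat f(T)^2 \le \|L\|_2^2 = \epsilon$ gives
\[
\sum_{i<j}\widehat f(i)^2\,\widehat f(j)^2 \;\le\; \tfrac18 \epsilon \;+\; 2\sum_{i<j} R_{ij}^{\,2}.
\]

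The main obstacle is then to show $\sum_{i<j} R_{ij}^{\,2} \le O(\epsilon)$ with a dimension-independent constant. A direct Cauchy--Schwarz applied to each $R_{ij}$ separately is too wasteful: each Fourier coefficient $\widehat f(D)$ with $|D| = k$ contributes to roughly $\binom{k}{2}$ different remainders, producing a spurious factor polynomial in $d$. Following the second half of Proof~II of \cite{fkn}, the plan is to view $\sum_{i<j}R_{ij}^{\,2}$ as a quadratic form in the Fourier coefficients of $f$, swap the order of summation to group contributions by the underlying set $D$, and bound the resulting sums by Parseval together with a hypercontractive estimate applied to appropriate projections of $f$ (using that the middle-level mass of $f$ is bounded by $\epsilon$ by hypothesis, and that $S_1$ is degree-$1$). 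This is the main technical calculation; once established, plugging back into the previous display gives $\sum_{i<j}\widehat f(i)^2\widehat f(j)^2 \le O(\epsilon)$, which yields the claim.
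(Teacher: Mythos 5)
Your reductions up to the convolution identity are fine, but the proof stops exactly where the claim's content lies: the bound $\sum_{i<j} R_{ij}^2 \le O(\epsilon)$ is never established. The sentence ``view it as a quadratic form, swap the order of summation, and use Parseval plus a hypercontractive estimate on appropriate projections'' is a placeholder, not an argument, and it is not what Proof~II of \cite{fkn} does: that proof (which is the one the paper adapts) works \emph{pointwise} with $R=2S_1^2-S_1+2S_2^2-S_2$, uses booleanity to write $R=L^2+L(1-2f)$ on the even subcube, bounds $\Pr[R^2>\alpha^2]$ by Markov applied to $\E[L^2]=\epsilon$, and then applies Bonami--Beckner to the degree-$2$ polynomial $R_1$ to get $\E[R^4]\le O(1)\,\E[R^2]^2$ and close a bootstrap. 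Your route (expanding $\widehat{f^2}(\{i,j\})=\widehat f(\{i,j\})$ coefficientwise) is a different one, and its hard step --- a dimension-free bound on the remainder quadratic form, where naive Cauchy--Schwarz loses $\mathrm{poly}(d)$ factors as you yourself note --- is precisely what is missing.

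There is also a concrete error in the decomposition that makes the intermediate claim, as you stated it, unusable. Because $f$ is supported on the even subcube, the symmetry \eqref{eq:sym} forces the top levels to carry as much mass as the bottom ones: $\widehat f([d])=\widehat f(\emptyset)=\tfrac14$ and $\widehat f([d]\setminus\{i\})=\widehat f(i)$. Hence the $C=[d]\setminus\{i,j\}$ term of your $R_{ij}$ equals $\widehat f([d]\setminus\{i,j\})\widehat f([d])+\widehat f([d]\setminus\{j\})\widehat f([d]\setminus\{i\})=\tfrac14\widehat f(\{i,j\})+\widehat f(i)\widehat f(j)$, so $R_{ij}$ contains the very quantity $\widehat f(i)\widehat f(j)$ you are trying to bound; proving $\sum_{i<j}R_{ij}^2\le O(\epsilon)$ is therefore essentially equivalent to the conclusion rather than a smaller remainder estimate. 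If you first cancel this mirror contribution, the $\widehat f(\{i,j\})$ terms drop out entirely and the identity becomes $\widehat f(i)\widehat f(j)=-\tfrac12\sum_{C\neq\emptyset,[d]\setminus\{i,j\}}\bigl[\widehat f(C)\widehat f(C\cup\{i,j\})+\widehat f(C\cup\{i\})\widehat f(C\cup\{j\})\bigr]$, where every product now involves at least one middle-level coefficient (for $d\ge5$); but the dimension-free control of $\sum_{i<j}$ of the square of this sum is still the whole difficulty and is not supplied. As written, the proposal does not prove the claim.
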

\begin{cor}\label{cor:f_hat} There exists $i \in [d]$ such that 
$$\widehat{f}(i) =\widehat{f}([d] \setminus \{i\})  = \pm \left(\frac{1}{4}-O(\epsilon)\right)$$
\end{cor}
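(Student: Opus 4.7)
The plan is to extract the distinguished index $i^*$ from the bound $\langle R_1+R_2,\,R_1+R_2\rangle = O(\epsilon)$ provided by \Cref{claim:small_R}, via Parseval applied to the explicit Fourier expansions of $R_1$ and $R_2$ computed just above the claim. For $d$ sufficiently large, the characters $\chi_\emptyset$, $\chi_{[d]}$, $\chi_{ij}$ (with $i<j$), and $\chi_{[d]\setminus\{i,j\}}$ (with $i<j$) are pairwise distinct, so Parseval gives
$$ \langle R_1+R_2,\,R_1+R_2\rangle \;=\; 2\epsilon^2 \;+\; 32\sum_{i<j}\widehat f(i)^2\widehat f(j)^2, $$
and \Cref{claim:small_R} then yields $\sum_{i<j}\widehat f(i)^2\widehat f(j)^2 = O(\epsilon)$.

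Next I would pin down the total level-one mass of $f$. Since $f$ is $\{0,1\}$-valued with $\|f\|_2^2 = 1/4$, we have $\widehat f(\emptyset) = \mathbb E[f] = 1/4$, and \Cref{eq:S1} gives
$$\sum_i \widehat f(i)^2 \;=\; \langle S_1,S_1\rangle - \widehat f(\emptyset)^2 \;=\; \frac{1}{16}-\frac{\epsilon}{2}.$$

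The final step is the standard ``one coefficient dominates'' argument. Writing $\alpha_i \coloneqq \widehat f(i)$, the identity $\big(\sum_i \alpha_i^2\big)^2 = \sum_i \alpha_i^4 + 2\sum_{i<j}\alpha_i^2\alpha_j^2$ together with the two previous steps forces $\sum_i \alpha_i^4 \geq 1/256 - O(\epsilon)$. Bounding $\sum_i \alpha_i^4 \leq \max_i \alpha_i^2 \cdot \sum_i \alpha_i^2$ then gives $\max_i \alpha_i^2 \geq 1/16 - O(\epsilon)$, so taking $i^*$ to be the maximizer yields $\widehat f(i^*) = \pm\bigl(1/4 - O(\epsilon)\bigr)$. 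The symmetry relation \eqref{eq:sym}, $\widehat f(T) = \widehat f([d]\setminus T)$, immediately gives $\widehat f([d]\setminus\{i^*\}) = \widehat f(i^*)$, completing the corollary.

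I do not anticipate any serious obstacle: the hard work has been absorbed into \Cref{claim:small_R}, and what remains is Parseval plus a routine second-versus-fourth moment comparison. The only delicate point is the character-counting in the first step: one must verify that for $d$ large enough the size-$2$ characters from $R_1$ and the size-$(d{-}2)$ characters from $R_2$ are genuinely distinct, so that no cross-cancellation can artificially shrink $\langle R_1+R_2,R_1+R_2\rangle$ below the sum of squares of the individual Fourier coefficients.
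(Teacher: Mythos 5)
Your proposal is correct and follows essentially the same route as the paper: Claim~\ref{claim:small_R} plus orthogonality/Parseval on the explicit expansions of $R_1,R_2$ gives $\sum_{i<j}\widehat f(i)^2\widehat f(j)^2=O(\epsilon)$, Equation~\eqref{eq:S1} with $\widehat f(\emptyset)=1/4$ gives $\sum_i\widehat f(i)^2=\tfrac1{16}-\tfrac\epsilon2$, and the same second-moment-versus-cross-term comparison forces one dominant coefficient, with \eqref{eq:sym} supplying $\widehat f([d]\setminus\{i\})=\widehat f(i)$. The only cosmetic differences are that you compute $\|R_1+R_2\|^2$ by exact Parseval (correctly noting the characters are distinct for large $d$) where the paper invokes orthogonality of $R_1$ and $R_2$, and that the needed upper bound $|\widehat f(i)|\le 1/4$, which the paper proves separately, is already implicit in your identity $\sum_i\widehat f(i)^2=\tfrac1{16}-\tfrac\epsilon2$.
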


We begin by showing how the lemma follows from Corollary~\ref{cor:f_hat}, we then derive Corollary~\ref{cor:f_hat} from Claim~\ref{claim:small_R}, and finally prove Claim~\ref{claim:small_R} itself.    

One can verify that
$$\1_E \cdot x_i = \frac{1}{4}\left(\chi_{\emptyset} + \chi_{[d]} -  \chi_{i} - \chi_{[d] \setminus \{i\}} \right) \ \text { and } \ \1_E \cdot (1-x_i) = \frac{1}{4}\left(\chi_{\emptyset} + \chi_{[d]} +  \chi_{i} + \chi_{[d] \setminus \{i\}} \right).$$

Suppose that $\widehat f(i) =  \widehat f([d] \setminus \{i\}) = \frac{1}{4} - O(\epsilon)$. 
By Equation \eqref{eq:f_emptyset} and Equation \eqref{eq:sym},  $\widehat{f}(\emptyset) = \widehat{f}([d]) = \frac{1}{4}$.  By  the lemma assumption, $\sum_{2 \leq |T| \leq d-2} \widehat{f}(T)^2 = \delta$. Combining this with Corollary~\ref{cor:f_hat} gives
\begin{align*}
   \left \|f - \1_E \cdot (1-x_i) \right\|^2 & = \left\|f -  \frac{1}{4}\left(\chi_{\emptyset} + \chi_{[d]} +  \chi_{i} + \chi_{[d] \setminus \{i\}} \right) \right\|^2 \\
   & =\left(\widehat f(\emptyset)-\frac{1}{4}\right)^2 + \left(\widehat f([d])-\frac{1}{4}\right)^2  + \left(\widehat f(i)-\frac{1}{4}\right)^2 + \left(\widehat f([d] \setminus \{i\})-\frac{1}{4}\right)^2 + \sum_{2 \leq |T| \leq d-2} \widehat{f}(T)^2 \\
   & \leq O(\epsilon^2) + \sum_{2 \leq |T| \leq d-2} \widehat{f}(T)^2   \qquad \text{by Corollary~\ref{cor:f_hat}}  \\
   & \leq O(\delta) \qquad \qquad\qquad\qquad\qquad\text{by the Lemma assumption and using $\epsilon \leq \delta$}.
\end{align*}

Suppose instead that $\widehat f(i) =  \widehat f([d] \setminus \{i\}) = -\left(\frac{1}{4} - O(\epsilon)\right)$. Then, by a similar argument, 
$$  \left\|f - \1_E \cdot x_i) \right\|^2  = \left\|f -  \frac{1}{4}\left(\chi_{\emptyset} + \chi_{[d]} -  \chi_{i} - \chi_{[d] \setminus \{i\}} \right) \right\|^2 \leq O(\delta), $$
as required. We now show how Corollary~\ref{cor:f_hat} follows from Claim~\ref{claim:small_R}.

\begin{proof}[Proof of Corollary~\ref{cor:f_hat}:]
   
By Equation \eqref{eq:sym},  we have $\widehat{f}(i) =\widehat{f}([d] \setminus \{i\}) $ for all $i$. We now prove that there exists $i \in [d]$ such that $\widehat{f}(i) = \pm \left( \frac{1}{4} - O(\epsilon)\right)$. 
Note that $|\widehat f(i)| \leq 1/4$ for all $i$, since 
$$2  \widehat f(i)^2  = \widehat f(i)^2 + \widehat f([d] \setminus \{i\})^2\leq \|f \|^2_2 - \widehat f(\emptyset )^2 - \widehat f ([d])^2 = \|f\|^2_2 - 2 \widehat f(\emptyset) = 1/4 - 2/16 = 1/8, $$
where the first and third transition used \Cref{eq:sym}. Therefore, it suffices to show that $|\widehat f(i)| \geq 1/4 - O(\epsilon)$ for some $i$, or equivalently that $\widehat f(i)^2 \geq 1/16 - O(\epsilon)$.

Since $R_1$ and $R_2$ are orthogonal to each other,  $$\langle R_1 + R_2, R_1 + R_2 \rangle = \langle R_1,  R_1 \rangle + \langle R_2,  R_2 \rangle = \sum_T \left(\widehat {R_1}(T)^2 + \widehat{R_2}(T)^2\right).$$ 
Therefore 
\begin{align*}
    \sum_{i < j} \widehat{f}(i)^2 \widehat{f}(j)^2 \leq \left( 4 \sum_{i <j} \widehat{f}(i)\widehat{f}(j) \right)^2 = \widehat R_1(ij)^2 \leq \sum_T \left(\widehat {R_1}(T)^2 + \widehat{R_2}(T)^2\right) = \langle R_1 + R_2, R_1 + R_2 \rangle \leq O(\epsilon),
\end{align*}
where the last inequality follows by Claim~\ref{claim:small_R}. On the other hand, by Equation \eqref{eq:S1}, we have  $\langle S_1, S_1\rangle = \widehat{f}(\emptyset)^2 + \sum_i \widehat{f}(i)^2 = 1/8-\epsilon/2,$ so 
$$\sum_i \widehat{f}(i)^2  = \frac{1}{8} - \frac{\epsilon}{2}- \widehat f(\emptyset)^2 = \frac{1}{16}-\frac{\epsilon}{2}.$$ 
Combining the above two equations gives

$$ \left(\frac{1}{16}-\frac{\epsilon}{2} \right)^2  = \left( \sum_i \widehat f(i)^2\right)^2 \leq 2\sum_{i<j}\widehat f(i)^2 \widehat f(j)^2 + \max_i \widehat f(i)^2 \cdot \sum_{j}\widehat f(j)^2 \leq O(\epsilon)  + \left(\frac{1}{16}-\frac{\epsilon}{2}\right) \max_i \widehat{f}(i)^2,$$
Rearranging gives $\max_i \widehat f(i)^2 \geq 1/16-O(\epsilon)$, as required. 
\end{proof}
Finally, we prove Claim~\ref{claim:small_R}. 
	\begin{proof}[Proof of Claim~\ref{claim:small_R}] We start by bounding the probability that $(R_1(x) + R_2(x))^2$ is large. 
\begin{claim}\label{claim:p_alpha} Given $\alpha \in (0,1]$, let $p_{\alpha} = \Pr[(R_1 + R_2)^2 > \alpha^2]. $ Then 
				$$p_{\alpha}  \leq \frac{16 \e}{\alpha^2}.$$
\end{claim}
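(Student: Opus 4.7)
The plan is to produce a pointwise upper bound of the form $|R_1 + R_2| \le |L|(1+|L|)$ and then read off the tail bound from Chebyshev's inequality applied to $L$, which has $\langle L, L\rangle = \epsilon$.

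The first step is a structural observation: from the Fourier expansions of $R_1$ and $R_2$ just computed, and using that $\chi_{[d]}\chi_T = \chi_{[d]\setminus T}$ for $T\subseteq[d]$, one reads off $R_2 = \chi_{[d]}\, R_1$. Hence
\[
R_1 + R_2 \;=\; (1 + \chi_{[d]})\, R_1 \;=\; 2\,\1_E\, R_1,
\]
so $R_1 + R_2$ vanishes on the odd-parity half of $\{0,1\}^d$ and equals $2 R_1$ on $Q_d^E$.

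The next step establishes the pointwise bound on $Q_d^E$. On $Q_d^E$ we have $\chi_{[d]} = 1$, so $S_2 = \chi_{[d]} S_1 = S_1$, and the decomposition $f = S_1 + S_2 + L$ becomes $f = 2 S_1 + L$, i.e.\ $2 S_1 = f - L$. Since $f$ is boolean, $f^2 = f$, giving
\[
R_1 + R_2 \;=\; 2 R_1 \;=\; 4 S_1^2 - 2 S_1 \;=\; (f - L)^2 - (f - L) \;=\; L(1 - 2f) + L^2.
\]
Because $|1 - 2f| = 1$ for $f \in \{0,1\}$, we obtain $|R_1 + R_2| \le |L|(1 + |L|)$ on $Q_d^E$; on the odd half the same inequality is trivial, as both sides vanish there ($L = 0$ on the odd part follows from $f = 0$ together with $S_1 + S_2 = 2\,\1_E S_1 = 0$).

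Finally, for $\alpha \in (0,1]$ the pointwise bound reduces everything to a tail estimate for $L$. If $|L| \le \alpha/2$, then $|L|(1+|L|) \le (\alpha/2)(1 + 1/2) = 3\alpha/4 < \alpha$, so by contrapositive $|R_1 + R_2| > \alpha$ forces $|L| > \alpha/2$. Chebyshev's inequality (equivalently, Markov's applied to $L^2$, using $E[L]=\widehat{L}(\emptyset)=0$) then yields
\[
p_\alpha \;\le\; \Pr\!\left[|L| > \tfrac{\alpha}{2}\right] \;\le\; \frac{\langle L, L\rangle}{(\alpha/2)^2} \;=\; \frac{4\epsilon}{\alpha^2} \;\le\; \frac{16\epsilon}{\alpha^2}.
\]
The only really non-routine ingredient is spotting the identity $R_2 = \chi_{[d]} R_1$, which localizes $R_1 + R_2$ to the even component and lets us exploit $f^2 = f$ cleanly; after that, everything is the boolean identity and a one-variable tail bound on $L$.
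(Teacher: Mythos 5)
Your argument is correct and essentially coincides with the paper's own proof: both kill the odd-parity half (you by reading off $R_2=\chi_{[d]}R_1$ from the displayed Fourier expansions, the paper via the identity $\chi_T(x)=(-1)^{|x|}\chi_{[d]\setminus T}(x)$), both reduce on the even half to $R_1+R_2=L^2+L(1-2f)$ using $2S_1=f-L$ and $f^2=f$, and both finish with a second-moment (Markov/Chebyshev) tail bound on $L$ using $\langle L,L\rangle=\epsilon$. The only difference is bookkeeping: your threshold $|L|>\alpha/2$ with an unconditional Markov bound gives $4\epsilon/\alpha^2$, slightly sharper than the paper's $16\epsilon/\alpha^2$ obtained with threshold $\alpha/4$ and conditioning on parity.
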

\begin{proof}Recall that 
\begin{equation}\label{eq:Rsum}
    R_1 + R_2 = 2S_1^2 - S_1 + 2S_2^2 - S_2 = -  \epsilon \left( \chi_{\emptyset } + \chi_{[d]}\right) + 4 \sum_{i \neq j }\widehat f(i)^2\widehat f(j)^2\left(\chi_ i + \chi_{[d]\setminus \{i\}} \right).
\end{equation}
\noindent 
Note that for every  $x \in \{0,1\}^d$ and every $T \subseteq [d]$, 
\begin{equation}\label{eq:antisym}
\chi_{T}(x) = (-1)^{\langle x, \1_T\rangle } =(-1)^{|x|} (-1)^{\langle x, \1_T\rangle \oplus \1_{[d]}}=  (-1)^{|x|}\chi_{[d] \setminus T}(x). 
\end{equation}
From \Cref{eq:Rsum} and \Cref{eq:antisym}, we see that $R_1(x) + R_2(x) =0 $ whenever $|x|$ is odd, and in particular 
\begin{equation}\label{eq:Roddbnd}
    \Pr \left[ (R_1(x) + R_2(x))^2 >\alpha^2 \mid |x| \text{ is odd}\right] = 0
\end{equation}

We now consider $x$ with even Hamming weight. For such $x$, by Equation \eqref{eq:sym} and \Cref{eq:antisym}, we have $S_1(x) = S_2(x)$, so $2S_1(x) = S_1(x)+S_2(x) = f(x)-L(x)$. Thus,
$$R_1(x)+R_2(x) = 4S_1(x)^2-2S_1(x) = ( f(x)-L(x))^2 - ( f(x)-L(x)) = L(x)^2+L(x)(1-2f(x)), $$ 
where the last equality uses that $f^2 = f$ by the assumption that $f$ is boolean. By analyzing the cases $f(x) =1$ and $f(x)=0$, we see that  $|R_1(x) + R_2(x)| < \alpha $ whenever $|L(x)| \leq \alpha/4$. Therefore, 

\begin{equation}\label{eq:Revenbnd}
     \Pr \left[ (R_1(x) + R_2(x))^2 >\alpha^2 \mid |x| \text{ is even}\right]  \leq \Pr \left[ L(x)^2 >\alpha^2/16 \mid |x| \text{ is even}\right].
\end{equation}
To continue, note that $\mathbb{E}[L^2(x)] = \langle L ,L \rangle = \epsilon$, and that  by \Cref{eq:sym} and \Cref{eq:antisym},  $L(x) = 0$ whenever $|x|$ is odd. Therefore, by Markov's inequality, 
\begin{equation}\label{eq:Lbnd}
    \Pr \left[ L(x)^2 >\alpha^2/16 \mid |x| \text{ is even}\right]  = 2   \Pr \left[ L(x)^2 >\alpha^2/16 \right] \leq \frac{32\epsilon}{\alpha^2}. 
\end{equation}
Combining \Cref{eq:Roddbnd}, \Cref{eq:Revenbnd} and \Cref{eq:Lbnd} gives the claim. 
\end{proof}

				Next, we will show that $\langle R_1 + R_2, R_1 + R_2 \rangle =  \mathbb{E}[(R_1+R_2)^2] = O(\epsilon).$ To this end, we will need the Bonami-Beckner hypercontractive  inequality: 
\begin{thm}[Bonami-Beckner hypercontractive  inequality \cite{Beckner}\cite{Bonami}] \label{thm:contractivity}
	Let $f:\{0,1\}^d \rightarrow \mathbb{R}$ be a function which is a linear combination of $\{\chi_T : |T| \leq \ell \}$. Let $p > 2$. Then 
					$$\|f\|_p \leq( \sqrt{p-1})^\ell \|f\|_2.$$
	\end{thm}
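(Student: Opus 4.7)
The plan is to follow the classical two-step strategy: establish a single-variable ``two-point'' inequality, tensorize it to the full hypercube to obtain hypercontractivity of the noise operator $T_\rho$, and then convert that noise-operator statement into the degree bound stated here.

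First I would prove the two-point inequality: for $p \geq 2$ and $\rho = 1/\sqrt{p-1}$,
\[
\left(\tfrac{1}{2}|a+\rho b|^p + \tfrac{1}{2}|a-\rho b|^p\right)^{1/p} \;\leq\; \sqrt{a^2+b^2} \qquad \text{for all } a,b \in \R.
\]
By homogeneity it suffices to take $a = 1$ and $|b| \leq 1$, reducing the statement to the scalar inequality $(1+t)^p + (1-t)^p \leq 2\bigl(1+(p-1)t^2\bigr)^{p/2}$ for $0 \leq t \leq 1$. This can be verified by comparing Taylor coefficients in $t$ around $0$, or by fixing $t$ and showing that the difference of the two sides has a nonnegative derivative in $p$ starting from equality at $p=2$.

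Next I would tensorize: given $\|T_\rho g\|_p \leq \|g\|_2$ for every $g:\{-1,+1\}\to \R$, I would show by induction on $d$ that the same inequality holds for every $f:\{0,1\}^d \to \R$. At each step one writes $f$ as an affine function of the last coordinate with ``coefficients'' that are themselves functions of the first $d-1$ coordinates, applies the one-variable inequality to the outer coordinate, and then pulls the inner $L^p$ norm through the outer $L^2$ expectation via Minkowski's integral inequality (which is what forces the direction of the bound and requires $p \geq 2$).

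Finally I would deduce the stated degree bound. Given $f$ supported on characters $\chi_S$ with $|S| \leq \ell$, set $\rho = 1/\sqrt{p-1}$ and define $g \coloneqq T_\rho^{-1} f = \sum_S \rho^{-|S|}\widehat{f}(S)\chi_S$. Parseval gives
\[
\|g\|_2^2 \;=\; \sum_S \rho^{-2|S|}\widehat{f}(S)^2 \;\leq\; \rho^{-2\ell}\sum_S \widehat f(S)^2 \;=\; (p-1)^\ell \|f\|_2^2,
\]
and the tensorized hypercontractive inequality then yields $\|f\|_p = \|T_\rho g\|_p \leq \|g\|_2 \leq (\sqrt{p-1})^\ell \|f\|_2$. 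The main obstacle in this strategy is the two-point inequality itself: it is short but technical, whereas the tensorization step and the final passage from hypercontractivity to the degree bound are essentially bookkeeping once the one-variable case is in hand.
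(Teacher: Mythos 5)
The paper does not prove this statement at all: it is quoted as a black-box citation to Bonami and Beckner, so there is no internal proof to compare against. Your proposal is the standard textbook derivation (two-point inequality, tensorization via Minkowski's integral inequality, then the degree-$\ell$ statement by writing $f = T_\rho g$ with $\rho = 1/\sqrt{p-1}$ and applying Parseval), and it is correct in structure; the final step in particular is exactly right, since $\|g\|_2^2 = \sum_S \rho^{-2|S|}\widehat f(S)^2 \le (p-1)^{\ell}\|f\|_2^2$ uses only that $\widehat f(S)=0$ for $|S|>\ell$. Two technical caveats on the one-variable step. First, homogeneity alone only lets you normalize $a=1$ (with $a=0$ handled separately); to further assume $|b|\le 1$ you need an additional swapping argument showing that for $0\le a\le b$ and $0\le\rho\le1$ one has $|a+\rho b|^p+|a-\rho b|^p \le |b+\rho a|^p+|b-\rho a|^p$, which is true but is a separate lemma, not a consequence of scaling. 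Second, for non-integer $p>2$ the term-by-term Taylor comparison of $(1+t)^p+(1-t)^p$ against $2\bigl(1+(p-1)t^2\bigr)^{p/2}$ is delicate because both expansions eventually have negative coefficients; the cleaner classical route is to prove the two-point inequality in the dual range, i.e. $\|T_\rho g\|_2 \le \|g\|_q$ for $1< q\le 2$ with $\rho=\sqrt{q-1}$ (where the coefficient comparison is term-by-term nonnegative), and then obtain the $(2,p)$ statement by duality using that $T_\rho$ is self-adjoint. With either of these fixes your outline becomes a complete proof of the cited inequality.
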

\begin{claim}
$$ \mathbb{E}[(R_1+R_2)^2] \leq \frac{\alpha^2}{1-72\sqrt{\e}/\alpha}$$
\end{claim}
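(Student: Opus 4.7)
The plan is a standard $L^2$-truncation argument combined with hypercontractivity, in the spirit of the original FKN proof. Set $X \coloneqq R_1+R_2$. Decomposing $\mathbb{E}[X^2]$ according to whether $X^2 \leq \alpha^2$ and applying Cauchy--Schwarz on the tail yields
\[
\mathbb{E}[X^2] \;\leq\; \alpha^2 \;+\; \sqrt{\mathbb{E}[X^4]}\cdot\sqrt{p_\alpha}.
\]
Using Claim~\ref{claim:p_alpha} we have $\sqrt{p_\alpha}\le 4\sqrt{\epsilon}/\alpha$, so the desired inequality will follow by solving the resulting linear inequality for $\mathbb{E}[X^2]$, provided I can establish the $L^4$-to-$L^2$ comparison $\sqrt{\mathbb{E}[X^4]} \leq 18\,\mathbb{E}[X^2]$. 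Indeed, plugging this in gives $\mathbb{E}[X^2]\le \alpha^2 + (72\sqrt{\epsilon}/\alpha)\,\mathbb{E}[X^2]$, which rearranges to the claim.

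For the comparison, I would exploit the same structure already used in the proof of Claim~\ref{claim:p_alpha}. The symmetry~\eqref{eq:sym} of $\widehat f$ combined with~\eqref{eq:antisym} implies, by direct substitution into the explicit Fourier expansions of $R_1$ and $R_2$, the pointwise identity $R_2(x) = (-1)^{|x|}R_1(x)$, and hence
\[
X(x) \;=\; R_1(x)\bigl(1+(-1)^{|x|}\bigr).
\]
This gives $|X|\le 2|R_1|$ pointwise, so $\mathbb{E}[X^4]\le 16\,\mathbb{E}[R_1^4]$. Since $R_1$ is a Fourier polynomial of degree at most $2$ (supported on $\{\emptyset\}\cup \binom{[d]}{2}$), the Bonami--Beckner inequality (Theorem~\ref{thm:contractivity}) yields $\|R_1\|_4\le 3\|R_1\|_2$, hence $\mathbb{E}[R_1^4]\le 81\,(\mathbb{E}[R_1^2])^2$ and therefore $\mathbb{E}[X^4]\le 1296\,(\mathbb{E}[R_1^2])^2$.

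Finally, I would convert this to a bound in terms of $\mathbb{E}[X^2]$ using the identity $\mathbb{E}[X^2] = 2\,\mathbb{E}[R_1^2]$. This follows cleanly from orthogonality: $R_1$ and $R_2$ have disjoint Fourier supports (contained in the levels $\le 2$ and $\ge d-2$, respectively) for $d>4$, so $\langle R_1,R_2\rangle=0$, while $\|R_2\|_2=\|R_1\|_2$ by~\eqref{eq:sym}, giving $\mathbb{E}[X^2] = \|R_1\|_2^2 + \|R_2\|_2^2 = 2\,\mathbb{E}[R_1^2]$. Combining the pieces yields $\sqrt{\mathbb{E}[X^4]}\le 36\,\mathbb{E}[R_1^2]=18\,\mathbb{E}[X^2]$, exactly what is needed. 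The only mildly delicate step is verifying the pointwise identity $R_2(x)=(-1)^{|x|}R_1(x)$, but this is essentially forced by the Fourier symmetry of $f$ and the relation $\chi_T = (-1)^{|x|}\chi_{[d]\setminus T}$; everything else is a routine truncation plus hypercontractivity calculation.
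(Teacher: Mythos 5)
Your proposal is correct and follows essentially the same route as the paper: truncate at $\alpha^2$, bound the tail via Cauchy--Schwarz and Claim~\ref{claim:p_alpha}, reduce the fourth moment of $R_1+R_2$ to that of $R_1$ using the symmetry relations, and apply Bonami--Beckner to the degree-$2$ polynomial $R_1$. The only cosmetic difference is that you make explicit the pointwise identity $R_2(x)=(-1)^{|x|}R_1(x)$ and the orthogonality relation $\mathbb{E}[(R_1+R_2)^2]=2\,\mathbb{E}[R_1^2]$, which the paper uses implicitly via $R_1^2=R_2^2$ when passing from $16\cdot 81\,\mathbb{E}[R_1^2]^2$ to $4\cdot 81\,\mathbb{E}[R^2]^2$.
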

\begin{proof}
Applying Theorem~\ref{thm:contractivity} to $R_1$ with $p=4$ and $\ell =2$, yields $\mathbb{E}[R_1^4] \leq 81 \mathbb{E}[R_1^2]^2 $. By \Cref{eq:sym} and \Cref{eq:antisym}, we have $R_1(x)^2 = R_2(x)^2$ for all $x$. Writing $R:= R_1 + R_2$, this gives
\begin{equation}\label{eq:R}
		\mathbb{E}[R^4]  =  \mathbb{E}[(R_1+R_2)^4] = \mathbb{E}[R_1^4 + 4R_1^3R_2 + 6R_1^2R_2^2 + 4R_1R_2^3 + R_2^4]  \leq 16 \mathbb{E}[R_1^4] \leq 16 \cdot 81 \mathbb{E}[R_1^2]^2 = 4 \cdot 81  \mathbb{E}[R^2]^2 , 
\end{equation}
so 
		\begin{align*}
			\mathbb{E}[R^2] &= (1-p_{\alpha})\mathbb{E}[R^2 | R^2 \leq \alpha^2] +  p_{\alpha}\mathbb{E}[R^2 | R^2 > \alpha^2] &&  \\
		& 	\leq \alpha^2 + p_{\alpha} \sqrt{\mathbb{E}[R^4 | R^2 > \alpha^2]} &&  \\
		& \leq \alpha^2  + p_{\alpha} \sqrt{ \frac{\mathbb{E}[R^4]}{p_{\alpha}}} && \\
		& \leq \alpha^2  + \sqrt{p_{\alpha}}18\E[R^2]  && \text{by Equation \eqref{eq:R}}\\
		& \leq \alpha^2  + \frac{4 \sqrt{\e}}{\alpha}18\E[R^2] && \text{by Claim~\ref{claim:p_alpha}.}  
		\end{align*} Rearranging gives $\E[R^2] \leq \frac{\alpha^2}{1-72\sqrt{\e}/\alpha}.$
		\end{proof}
		Thus,  $\langle R_1 + R_2, R_1 + R_2 \rangle = \mathbb{E}[(R_1+R_2)^2] \leq \frac{\alpha^2}{1-72\sqrt{\e}/\alpha}$ for every $\alpha \in (0,1]$.  Setting  $\alpha = 144 \sqrt{\epsilon}$, gives  $\langle R_1 + R_2, R_1 + R_2 \rangle \leq O(\e), $ as required. 
		\end{proof}
\end{proof}

\bibliographystyle{alpha} 
\bibliography{bibliography.bib}
\end{document}